\newtheorem{theorem}{Theorem}
\newtheorem{lemma}{Lemma}
\newtheorem{cor}{Corollary}
\newtheorem{prop}{Proposition}
\newtheorem{conj}{Conjecture}
\theoremstyle{remark}
\newtheorem{remark}{Remark}
\theoremstyle{definition}
\newtheorem{defn}{Definition}
\newcommand{\cir}{|\mbox{Circ}\rangle_n}
\newcommand{\ellip}{|\mbox{Elliptic}(\theta)\rangle}
\newcommand{\ellipt}{\langle\mbox{Elliptic}(\theta)|}
\newcommand{\ex}[1]{\mathbb{E}\left\{ #1 \right\}}
\newcommand{\pr}[1]{\mathbb{P}\left\{ #1 \right\}}
\newcommand{\hf}[1][1]{\frac{#1}{2}}
\newcommand{\di}{\,\mathrm{d}}
\newcommand{\ii}{\mathrm{i}}
\renewcommand{\vec}[1]{\mbox{\boldmath$#1$}}
\newcommand{\svec}[1]{\mbox{\boldmath$\scriptstyle #1$}}
\begin{document}
\title{The Divine Clockwork: Bohr's correspondence principle and Nelson's stochastic mechanics for the atomic elliptic state.}
\author{Richard Durran $\quad$  Andrew Neate$\quad$  Aubrey Truman\\
\small Department of Mathematics, Swansea University, \\
\small Singleton Park, Swansea, SA2 8PP, Wales, UK.}
\maketitle
\begin{abstract}
We consider the Bohr correspondence limit of the Schr\"{o}dinger wave function for an atomic elliptic state. We analyse this limit in the context of Nelson's stochastic mechanics, exposing an underlying deterministic dynamical system in which trajectories converge to Keplerian motion on an ellipse. This solves the long standing problem of obtaining Kepler's laws of planetary motion in a quantum mechanical setting. In this quantum mechanical setting, local mild instabilities occur in the Keplerian orbit for eccentricities greater than $1/ \sqrt{2}$ which do not occur classically.
\end{abstract}

\section{Background}
\subsection{Introduction}
One of the longest standing problems in quantum mechanics is to obtain Kepler's laws of planetary motion by taking the Bohr correspondence limit of a suitable Schr\"{o}dinger wave function for an  atomic elliptic state. The Bohr correspondence limit of a quantum state is found by allowing the quantum numbers to become infinite while  $\hbar$  becomes vanishingly small with the energies and angular momenta of the system fixed with physical values of order one \cite{Bohr}. We have approached this quantum mechanical problem from the perspective of Nelson's stochastic mechanics.

In 1966, Edward Nelson produced a new formulation of non relativistic quantum mechanics in terms of diffusion processes satisfying the Nelson-Newton law, a stochastic version of Newton's second law of motion \cite{Nelson, MR0214150, MR783254}.  This work was based on the  assumption that every particle of mass $m$ moves in a random environment driven by a Brownian motion with diffusion constant $\hbar/2m$.  Some mathematical physicists felt that Nelson's formulation merely recapitulated the standard results of Schr\"{o}dinger quantum mechanics without explaining the physical agency responsible for the Brownian noise in the system, and so the theory did not prove popular. However, the theory does give a new mental picture of the quantum world in which one can calculate the statistics of new observables, like first hitting times, which should be open to measurement \cite{MR1182593}. Moreover, within this formulation, Nelson showed that the Nelson-Newton law leads uniquely to Schr\"{o}dinger quantum mechanics, advocating the view that the Schr\"{o}dinger equation is a way of linearising a very complicated non-linear system of equations equivalent to the Nelson-Newton law.

In this paper we will use Nelson's stochastic mechanics to take the correspondence limit of the atomic elliptic state for the Coulomb potential and show that this limit gives Kepler's laws of planetary motion.

\subsection{Nelson's stochastic mechanics}
We begin with a brief account of Nelson's stochastic mechanics based on the summaries in  \cite{MR960159, MR1020069}. Let $\vec{X}$ be the position of a particle of unit mass diffusing in $\mathbb{R}^d$ according to an It\^{o} stochastic differential equation with diffusion constant $\epsilon^2/2$,
\begin{equation}\label{diffusion}
\di \vec{X}(t) = \vec{b}(\vec{X}(t),t)\di t  +\epsilon \di \vec{B}(t),\quad t>0, \quad\vec{X}(0) = \vec{x},
\end{equation}
where $\vec{B}=(B_1,B_2,\ldots,B_d)$ is a $d$-dimensional Brownian motion process on a probability space $(\Omega,\mathcal{A}_t,\mathbb{P})$, with covariance,
\[\ex{B_i(t)B_j(s)}=\delta_{ij}(t\wedge s),\]
where $\mathbb{E}$ denotes expectation with respect to the measure $\mathbb{P}$.

We assume that the process $\vec{X}$ has a smooth density function $\rho(\vec{y},t)$ such that for any $A\in\mathcal{A}_t$,
\[\pr{\vec{X}(t)\in A} = \int_{\svec{y}\in A}\rho(\vec{y},t)\di\vec{y}.\]
Then the density $\rho$ must satisfy the forward Kolmogorov equation,
\begin{equation}\label{forward}
\frac{\partial \rho}{\partial t}(\vec{y},t) = \nabla\cdot\left(\hf[\epsilon^2]\nabla\rho(\vec{y},t) -\rho(\vec{y},t)\vec{ b}(\vec{y},t)\right).
\end{equation}

We define the mean forward and backward derivatives $D_{\pm}$ by,
\[D_{\pm} f(\vec{X}(t),t) := \lim\limits_{h\downarrow 0} \ex{\left.\frac{f(\vec{X}(t\pm h),t\pm h)-f(\vec{X}(t),t)}{\pm h}\right|\vec{X}(t)}.\]
Assuming some mild regularity conditions on $f$, It\^{o}'s formula gives,
\[D_+f(\vec{X}(t),t) = \left(\frac{\partial}{\partial t}+\vec{b}(\vec{X}(t),t)\cdot\nabla +\hf[\epsilon^2]\Delta\right) f(\vec{X}(t),t).\]
In  particular, it follows that the mean forward velocity is simply,
\[ \vec{b}_+(\vec{X}(t),t):=D_+ \vec{X}(t)  = \vec{b}(\vec{X}(t),t).\]
Furthermore, it can be shown that for $g = g(\vec{X}(t),t)$ and $h = h(\vec{X}(t),t)$,
\[\frac{\di}{\di t} \ex{gh} = \ex{g D_-h}+\ex{hD_+g},\]
and it follows, using integration by parts, that for $f =f(\vec{X}(t),t)$,
\[D_-f = \left(\frac{\partial }{\partial t} + \left( \vec{b}_+-\epsilon^2\nabla\ln\rho\right)\cdot\nabla -\hf[\epsilon^2]\Delta \right)f.\]
This gives the mean backward velocity as,
\[\vec{b}_-(\vec{X}(t),t):=D_-\vec{X}(t) = \vec{b}_+(\vec{X}(t),t) - \epsilon^2\nabla\ln\rho(\vec{X}(t),t).\]
We can also introduce the osmotic velocity $\vec{u}$ and the current velocity $\vec{v}$ such that,
\begin{equation}\label{osmotic}
\vec{u}: = \hf(\vec{b}_+-\vec{b}_-),\qquad \vec{v} := \hf(\vec{b}_++\vec{b}_-).
\end{equation}
If we  now define the stochastic acceleration by,
\[\vec{a}(\vec{X}(t),t) = \hf\left(D_+D_-+D_-D_+\right)\vec{X}(t),\]
then it follows that,
\begin{equation}\label{acc}
\vec{a}(\vec{X}(t),t) = \left(\frac{\partial \vec{v}}{\partial t} +(\vec{v}\cdot\nabla)\vec{v}-(\vec{u}\cdot\nabla)\vec{u} -\hf[\epsilon^2]\Delta \vec{u}\right)(\vec{X}(t),t).
\end{equation}
We now show how this apparently impenetrable expression takes on a deep significance in connection with the Schr\"{o}dinger equation.

Consider a quantum mechanical particle of unit mass  in $\mathbb{R}^d$ subject to the force field $-\nabla V$  where $V$ is some real valued potential. The corresponding complex valued wave function $\psi$ satisfies the Schr\"{o}dinger equation,
\begin{equation}\label{schrodinger}
\ii \hbar \frac{\partial\psi}{\partial t} = -\hf[\hbar^2]\Delta\psi +V\psi,
\end{equation}
where (letting $\psi^*$ denote the complex conjugate of $\psi$) the quantum mechanical particle density is,
\[\rho = |\psi|^2 = \psi \psi^*.\]
 Multiplying through equation (\ref{schrodinger})  by $\psi^*$ gives the equivalent Schr\"{o}dinger equation,
\begin{equation} \label{conjugate}
\ii\hbar \psi^*\frac{\partial\psi}{\partial t} = -\hf[\hbar^2]\psi^*\Delta\psi + V|\psi|^2.
\end{equation}
Equating imaginary parts in (\ref{conjugate}) gives the continuity equation,
\begin{equation}\label{continuity}
\frac{\partial\rho}{\partial t} + \nabla\cdot j = 0,
\end{equation}
where $j$ is the probability current,
 \[j= \frac{\hbar}{2 \ii}(\psi^*\nabla\psi-\psi\nabla\psi^*).\]
If we now write,
\[\psi = \exp(R+\ii S),\]
where $R$ and $S$ are real valued functions of space and time, then it follows that,
\[\rho = \exp(2R),\qquad j = \hbar\rho \nabla S.\]
The continuity equation (\ref{continuity}) now becomes,
\[
\frac{\partial \rho}{\partial t} = \nabla\cdot\left(-\hbar\nabla Se^{2R}\right)  = \nabla\cdot\left(\hf[\hbar]\nabla e^{2R}-\hbar e^{2R}\nabla(R+S)\right).\]
Therefore, if we let,
\[\epsilon ^2 = \hbar, \qquad \vec{b} = \hbar\nabla(R+S),\]
then we have,
\[\frac{\partial \rho}{\partial t} = \nabla\cdot\left(\hf[\epsilon^2]\nabla\rho-\rho\,\vec{b}\right).\]
This is the forward Kolmogorov equation (\ref{forward}) for a diffusion $\vec{X}(t)$
with forward and backward velocities,
\begin{equation}
\label{fwdbckwd}
\vec{b}_+(\vec{X}(t),t) = \epsilon^2\nabla(R+S),\qquad \vec{b}_-(\vec{X}(t),t) = \epsilon^2\nabla(S-R).
\end{equation}
 A tedious calculation from equations (\ref{osmotic}), (\ref{acc}) and (\ref{fwdbckwd}) gives the stochastic acceleration in terms of $R$ and $S$ as,
\begin{equation}\label{acceleration}
\vec{a}(\vec{X}(t),t)  = -\nabla \left(-\epsilon^2\frac{\partial S}{\partial t} + \hf[\epsilon^4]\left(|\nabla R|^2-|\nabla S|^2 + \Delta R\right)\right)(\vec{X}(t),t).
\end{equation}
However, equating real parts of equation (\ref{conjugate}) gives,
\begin{equation}\label{acc2}
\frac{\partial S}{\partial t} = \hf[\epsilon^2]\left(|\nabla R|^2 -|\nabla S|^2+\Delta R\right)-\frac{V}{\epsilon^2}.
\end{equation}
Therefore, combining equations (\ref{acceleration}) and (\ref{acc2}) gives,
\begin{equation}\label{newton}
\vec{a}(\vec{X}(t),t) = -\nabla V(\vec{X}(t),t),
\end{equation}
where $V$ is the potential in the Schr\"{o}dinger equation. Equation (\ref{newton}) is  the Nelson-Newton law for a particle of unit mass.

The Nelson-Newton law argues that to investigate the Bohr correspondence limit of a quantum state,  we should investigate the Nelson diffusion process in the appropriate limit. 

\subsection{Outline of the paper}
Section 2 starts from Pauli's elegant solution of the hydrogen atom. We follow the results of Lena, Delande and Gay \cite{MR1155697} to obtain the Schr\"{o}dinger wave function corresponding to an atomic elliptic state in $\mathbb{R}^3$ and then find its related Nelson diffusion process.
In Section 3, we derive the formal limiting wave function for this atomic elliptic state according to Bohr's prescription. We derive the corresponding limiting Nelson diffusion and
present simulations of the limiting process showing how the process converges in large times to a particular ellipse which we call the Kepler ellipse. We investigate the behaviour of the  drift for this limiting diffusion and demonstrate that it has a finite jump discontinuity across part of the semi-major axis of the Kepler ellipse. This singularity is the Bohr correspondence limit of the nodal surfaces of the atomic elliptic state wave function.

For the remainder of the paper we restrict our wave functions and diffusion processes to the putative plane of motion $(z=0)$.
In Section 4 we introduce a new set of elliptic coordinates, the non-orthogonal Keplerian elliptic coordinates, and show how these greatly simplify the limiting Nelson diffusion process.

In Section 5, we analyse the two dimensional limiting diffusion process. We discuss the behaviour of the invariant measure for this system and demonstrate that it is sharply peaked on the Kepler ellipse. Taking an idea from earlier work on excursions in stochastic mechanics \cite{MR1189398}, we introduce a similarity transform for the generator of the Nelson diffusion process and use this to discuss the convergence over time of the density for the Nelson diffusion processes to their invariant measures.  We also show that there is a positive probability of hitting the drift singularity in a finite time. Thus, to avoid inventing an artificial boundary condition,  we restrict our path space so that our diffusing particle avoids this singularity and we estimate the probability for this to happen. On the restricted path space we see that the Bohr correspondence limit reduces to the underlying deterministic dynamical system.

In Section 6 we look in detail at this underlying deterministic dynamical system.
We show that this system has the Kepler ellipse as a stable periodic orbit and we derive Kepler's laws of planetary motion for a particle on this orbit.
We identify which paths avoid the singularity and for these paths discuss the asymptotic stability of the Kepler ellipse. 
We show that the invariant density can be used to construct a Lyapunov function for the system.  Using this, we show that  all motions starting outside the Kepler ellipse converge to Keplerian motion on the Kepler ellipse.   For motions starting inside the Kepler ellipse the result is more difficult to prove and we leave this to a future publication. We conclude by highlighting some surprising symmetries within the diffusion process and use these symmetries to show that mild local instabilities occur in the dynamical system for $e>1/\sqrt{2}$ where $e$ is the eccentricity of the Kepler ellipse.

Thus, we see that in $\mathbb{R}^2$, for orbits starting outside the Kepler ellipse and avoiding the singularity, the long time behaviour  of the Bohr correspondence limit of Nelson's stochastic mechanics for atomic elliptic states is Keplerian motion on the Kepler ellipse. We believe that we can extend this result to $\mathbb{R}^3$ but so far we have not found a convenient coordinate system in which to carry out the analysis.

\section{The atomic elliptic state}
\subsection{The wave function for the  atomic elliptic state}
In this section we follow the work of Lena, Delande and Gay \cite{MR1155697}. Recall that the atomic circular state $\cir$ corresponding to a Keplerian circular orbit is given by,
\[\langle \vec{x}\cir = \Psi_{n,n-1,n-1}(\vec{x}),\]
where $\vec{x} = (x,y,z)$ and  $\Psi_{n,l,m}$ (with  $l = 0,1,\ldots,n-1,$ $|m|\le l$ and $m,n \in\mathbb{N} $) is the nodal Schr\"{o}dinger wave function  for the Hamiltonian,
\begin{equation}\label{hamiltonian}
H(\vec{p},\vec{q}) = \hf[\vec{p}^2]-\frac{\mu}{|\vec{q}|}.
\end{equation}
The vectors $\vec{p}=(p_1,p_2,p_3)$ and  $\vec{q} = (q_1,q_2,q_3)$ are the momentum and position operators in cartesian coordinates for the orbiting quantum particle with
$[q_k,p_l] = \ii \hbar \delta_{kl}$ for $k,l = 1,2,3.$
We now work in suitable units so that $\hbar = 1$ and $\mu=1$. Then, for the orbital angular momentum $\vec{L}$, where
\[\vec{L} =(L_1,L_2,L_3)= (\vec{q}\bm{\times} \vec{p}),\qquad \vec{L}^2 = (L_1^2+L_2^2+L_3^2),\]
we have,
\[\vec{L}^2 \Psi_{n,l,m} = l(l+1)\Psi_{n,l,m},\quad L_3 \Psi_{n,l,m} = m\Psi_{n,l,m},\quad H \Psi_{n,l,m} = -\frac{1}{2 n^2}\Psi_{n,l,m}.
 \]

 The state we consider is,
\begin{equation}\label{elliptic} \ellip_n = \exp(-\ii\theta A_2) \cir,
\end{equation}
where,
\[\vec{A} = (A_1,A_2,A_3) = \frac{1}{\sqrt{-2E}} \left(\hf[(\vec{p}\bm{\times} \vec{L}-\vec{L}\bm{\times} \vec{p})]-\frac{\vec{q}}{|\vec{q}|}\right),\]
is the Hamilton-Lenz-Runge vector on the eigenspace of $H$ with eigenvalue $E$. We will show that on the space where
$E = E_n = -1/(2n^2)$, the state $\ellip_n$ as $n\uparrow \infty$ corresponds to the elliptic atomic state with eccentricity $e = \sin\theta$ for some $\theta\in(0,\pi)$.

Pauli proved the fundamental identities,
\[[A_1,A_2] = \ii L_3,\quad [A_2,L_3] = \ii A_1,\quad [L_3,A_1] = \ii A_2,\]
showing that $A_1,A_2,L_3$ (the constants of motion in the 2-dimensional problem) generate the dynamical symmetry group SO(3) for this situation. Now define,
\begin{eqnarray*}
\langle L_3\rangle(\theta) & = & \ellipt L_3\ellip,\\
 \langle A_1\rangle(\theta)&  = & \ellipt A_1\ellip.
 \end{eqnarray*}
It follows that,
\[\frac{\di^2}{\di \theta^2}\langle L_3\rangle (\theta) = -\langle L_3\rangle(\theta),\quad \frac{\di^2}{\di \theta^2}\langle A_1\rangle (\theta) = -\langle A_1\rangle(\theta),\]
giving,
\[\langle L_3\rangle (\theta) = (n-1) \cos(\theta),\quad \langle A_1\rangle (\theta) = (n-1) \sin(\theta).\]
Moreover, if we consider the classical identity relating the eccentricity $e$ of the elliptic orbit to the angular momentum $\vec{L}$ and the energy $E$,
\[e^2 = 1+2 \vec{L}^2 E,\]
and take  the limit as $n\rightarrow\infty$, we find $e^2 = \sin^2(\theta)$ as asserted. A clever argument  using coherent state representations and the Kustaanheimo-Stiefel transformation yields, for $\theta_0 = \arcsin (e)$ where $0<e<1$, to within a multiplicative constant (after reinstating $\hbar$ and the force constant $\mu$),
\begin{equation}\label{elliptic2}
\psi_{\epsilon,n}(\vec{x}):=\langle x |\mbox{Elliptic}(\theta_0)\rangle_n = \exp\left(-\frac{n\mu }{\lambda^2}|\vec{x}|\right) \mathcal{L}_{n-1}(n\nu),
\end{equation}
where
\begin{equation}\label{nu}\nu = \frac{\mu}{\lambda^2}\left(|\vec{x}| - \frac{x}{e} - \frac{\ii y\sqrt{1-e^2}}{e}\right),
\end{equation}
with $\vec{x} = (x,y,z)$, $E_n = -\mu^2/(2\lambda^2)$, $\epsilon^2=\hbar$,  $\lambda = n \epsilon^2$
and $\mathcal{L}_{n-1}$ a Laguerre polynomial.

The nodes of the wave function $\psi_{\epsilon,n}$ are located on a series of $(n-1)$ hyperbolas in the plane $y=0$ each with axis $x$, eccentricity $1/e$ and focus at the origin (see Figure \ref{nodal}).
\begin{figure}
\centering
\resizebox{120mm}{!}{\includegraphics{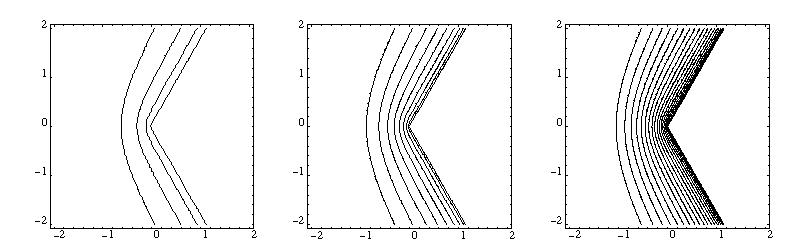}}
\caption{The nodal surfaces in the $(x,z)$ plane for $\psi_{\epsilon,n}$ for $e=0.5$ and $n = 5$, $10$, $20$.}
\label{nodal}
\end{figure}

Associated with the atomic elliptic state  $|\mbox{Elliptic}(\theta_0)\rangle_n$ where $\arcsin \theta_0 = e$, there is a special ellipse which we will call the Kepler ellipse.
\begin{defn}\label{Keplerellipse} The Kepler ellipse is the ellipse in the plane $z=0$ with eccentricity $e$ and semi-major axis $a$ with one focus at the origin given in cylindrical polar coordinates by,
 \begin{equation}\label{kepler}
 \tilde{r} = \frac{a (1-e^2)}{1+e \cos\theta},\quad z=0,
 \end{equation}
 where $a = \lambda^2/\mu$ and  $x =\tilde{r}\cos\theta,$ $y= \tilde{r} \sin\theta$ with $\tilde{r} = \sqrt{x^2+y^2}$.
\end{defn}
We will show that the correspondence limit of the Nelson diffusion for the  atomic elliptic state with wave function $\psi_{\epsilon,n}$ is Keplerian motion on the Kepler ellipse.

\begin{remark}
Throughout this work the letter $e$ will refer to the eccentricity associated with the atomic elliptic state. When we use the exponential function we shall always write $\exp(x)$ rather than $e^x$ to avoid confusion.
\end{remark}

\subsection{The Nelson diffusion for the atomic elliptic state}
We now derive the Nelson diffusion process $\vec{X}^{\epsilon,n}$ associated with the wave function $\psi_{\epsilon,n}$ for the atomic elliptic state  $|\mbox{Elliptic}(\theta_0)\rangle_n$.
If we write   $\psi_{\epsilon,n} = \exp(R_{\epsilon,n}+\ii S_{\epsilon,n})$ then $\vec{X}^{\epsilon,n}$ satisfies,
\[
\di \vec{X}^{\epsilon,n}(t) = \vec{b}_{\epsilon,n}(\vec{X}^{\epsilon,n}(t))\di t  +\epsilon \di \vec{B}(t),
\]
with,
\[\vec{b}_{\epsilon,n} = \epsilon^2\nabla(R_{\epsilon,n}+S_{\epsilon,n}) .\]
We will now find the drift $\vec{b}_{\epsilon,n}$ for the atomic elliptic state.

The wave function $\psi_{\epsilon,n}(\vec{x})$ satisfies the Schr\"{o}dinger equation,
\[-\hf\epsilon^4\Delta\psi_{\epsilon,n}(\vec{x})-\frac{\mu}{|\vec{x}|}\psi_{\epsilon,n}(\vec{x}) = E_n\psi_{\epsilon,n}(\vec{x}),\]
where $\epsilon^2 = \hbar$, $E_n = -\mu^2/(2\lambda^2)$, $\lambda  = n\epsilon^2$ and $\vec{x} = (x,y,z)$. Defining $\vec{Z}_{\epsilon,n}(\vec{x})$ by,
\begin{equation}\label{Zdefn}
\vec{Z}_{\epsilon,n}(\vec{x}):= -\ii\epsilon^2\frac{\nabla\psi_{\epsilon,n}(\vec{x})}{\psi_{\epsilon,n}(\vec{x})}= \epsilon^2\nabla\left(S_{\epsilon,n}-\ii R_{\epsilon,n}\right),
\end{equation}
it follows that,
 \begin{equation}\label{Zeqn}
 -\hf[\ii \epsilon^2]\nabla\cdot \vec{Z}_{\epsilon,n}(\vec{x}) + \hf \vec{Z}_{\epsilon,n}^2(\vec{x}) -\frac{\mu}{|\vec{x}|} = E_n,
 \end{equation}
 where in cartesians $\vec{Z}_{\epsilon,n}$ can be written,
 \begin{equation}\label{Z}
 \vec{Z}_{\epsilon,n}(\vec{x}) = \frac{\ii\mu}{\lambda }\left(1
  - \frac{\mathcal{L}'_{n-1}(n\nu)}{\mathcal{L}_{n-1}(n\nu)}\right)
   \frac{ \vec{x}}{|\vec{x}|}+ \frac{\mu}{\lambda e }\frac{\mathcal{L}'_{n-1}(n\nu)}{\mathcal{L}_{n-1}(n\nu)}
  \left(\ii,-\sqrt{1-e^2},0
 \right),
 \end{equation}
 and $\nu$ is defined in equation (\ref{nu}).

Combining equations (\ref{Zdefn}) and (\ref{Z}), the  drift for the diffusion $\vec{X}^{\epsilon,n}$ is,
 \begin{equation}\label{driftn}
 \vec{b}_{\epsilon,n}(\vec{x}) = \epsilon^2 \nabla(R_{\epsilon,n}+S_{\epsilon,n})=  \Re (\vec{Z}_{\epsilon,n}(\vec{x})) - \Im( \vec{Z}_{\epsilon,n}(\vec{x})),
 \end{equation}
where $\Re$ denotes the real part and $\Im$ denotes the imaginary part.

Clearly, from equation (\ref{Z}), the drift field $\vec{b}_{\epsilon,n}$ will have curves of singularities in the plane $y=0$ corresponding to the nodal curves of the wave function $\psi_{\epsilon,n}$ shown in Figure \ref{nodal} and also to the point $|\vec{x}|=0$. These singularities can be seen in Figure \ref{exactfig1} which shows the $x$ component of the drift in the plane $z=0$. The  line of singularities corresponds to the intersection of the nodal curves with the plane $z=0$. As discussed in \cite{MR882809}, at each of these nodes the drift is infinitely repulsive with the drift components becoming plus or minus infinity depending on the direction of approach. This means that the nodes will be  effectively inaccessible to the diffusion process. As $n$ increases the number and density of the nodes will also increase.

The invariant measure for the wave function $\psi_{\epsilon,n}$ in the plane $z=0$ is shown in Figure \ref{exactfig2} together with simulations of $\vec{X}^{\epsilon,n}$. The invariant measure has a clear peak on an ellipse in the plane $z=0$. Away from this plane it tends towards zero.  

\begin{figure}
\centering
\resizebox{120mm}{!}{\includegraphics{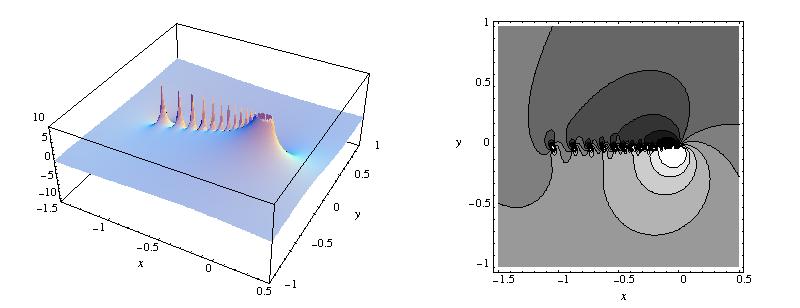}}
\caption{The $x$ component of the drift $\vec{b}_{\epsilon,n}$ in the plane $z=0$ for $n=20$ and $e=0.5$ shown as a surface plot and a contour plot (black =  $-\infty$, white = $+\infty$). }
\label{exactfig1}
\end{figure}
 \begin{figure}
\centering
\resizebox{120mm}{!}{\includegraphics{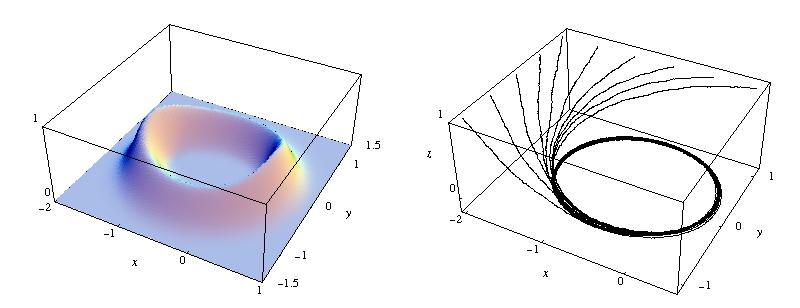}}
\caption{The invariant measure in the plane $z=0$ for the wave function $\psi_{\epsilon,n}$ and simulations of the process $\vec{X}^{\epsilon,n}$ for $n=20$ and $e=0.5$. }
\label{exactfig2}
\end{figure}

\section{The limiting atomic elliptic state}

\subsection{The limiting wave function}
We now derive the Bohr correspondence limit of the wave function $\psi_{\epsilon,n}$ as $n\rightarrow\infty$ and $\epsilon\rightarrow 0$ with $\lambda = \epsilon^2 n$ fixed.
Recall the function $\vec{Z}_{\epsilon,n}$ from equation (\ref{Zdefn}).
We define the Bohr correspondence limit of $\vec{Z}_{\epsilon,n}$ as,
 \[\vec{Z}_{0,\infty}(\vec{x}): = \lim\limits_{\genfrac{}{}{0pt}{}{n\uparrow\infty, \epsilon\downarrow 0}{\lambda = n\epsilon^2}}\vec{Z}_{\epsilon,n}(\vec{x}),\]
 where the limit is taken with $\lambda$ as a fixed real number.

 In what follows the Laguerre polynomials $\mathcal{L}_n(x)$ are defined according to the conventions in \cite{MR0240343}.
 \begin{lemma}\label{laguerre_lemma}
 Let $\mathcal{L}_n(x)$ denote the $n^{th}$ Laguerre polynomial and $\lambda$ be a fixed real number. Then,
\[\lim\limits_{\genfrac{}{}{0pt}{}{n\uparrow\infty, \epsilon\downarrow 0}{\lambda = n\epsilon^2}} \frac{\mathcal{L}'_{n-1}(n\nu)}{\mathcal{L}_{n-1}(n\nu)} = \hf\left(1-\sqrt{1-\frac{4}{\nu}}\right).\]
 \end{lemma}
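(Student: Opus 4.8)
The plan is to extract the large-$n$ asymptotics of the ratio $\mathcal{L}'_{n-1}(n\nu)/\mathcal{L}_{n-1}(n\nu)$ when the Laguerre polynomial is evaluated at a point whose argument scales linearly with $n$. The natural tool is the classical Plancherel--Rotach (or Perron-type) asymptotics for Laguerre polynomials in the ``oscillatory/exponential'' regime where the argument is $n$ times a fixed complex number $\nu$. First I would recall that $\mathcal{L}_{n-1}(x)$ satisfies the Laguerre differential equation $x y'' + (1-x) y' + (n-1) y = 0$, and I would substitute the WKB/Liouville--Green ansatz $\mathcal{L}_{n-1}(n\nu) \sim \exp\bigl(n\,\phi(\nu) + o(n)\bigr)$ for some analytic function $\phi$ on a suitable domain in the complex $\nu$-plane (cut along the segment $[0,4]$, which is exactly where the zeros accumulate after rescaling, consistent with the nodal hyperbolas noted in the text). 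Then $\mathcal{L}'_{n-1}(n\nu)/\mathcal{L}_{n-1}(n\nu) = n\,\frac{d}{d(n\nu)}\log\mathcal{L}_{n-1}(n\nu) \to \phi'(\nu)$ as $n\to\infty$, so the lemma reduces to identifying $\phi'(\nu)$.

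Next I would plug the ansatz into the differential equation, divide by $n^2 \mathcal{L}_{n-1}(n\nu)$, and collect the leading order in $n$. Writing $w(\nu) := \phi'(\nu) = \lim \mathcal{L}'_{n-1}(n\nu)/\bigl(n\,\mathcal{L}_{n-1}(n\nu)\bigr)$ — note the relation $\frac{d}{dx}\mathcal{L}_{n-1}(x)\big|_{x=n\nu} = n\,w(\nu) \mathcal{L}_{n-1}(n\nu)$ to leading order, and similarly the second derivative contributes $n^2 w(\nu)^2$ — the equation $x y'' + (1-x) y' + (n-1)y = 0$ at $x = n\nu$ becomes, to leading order in $n$,
\begin{equation}\label{eikonal}
\nu\, w^2 - \nu\, w + 1 = 0,
\end{equation}
since the $x y''$ term gives $n^2 \nu w^2$, the $-x y'$ term gives $-n^2 \nu w$, and the $(n-1)y$ term gives $n^2 y$ (the $(1-x)y'$ piece contributes only $-n\nu w$, one order lower, and likewise $y''$ has a subleading $w'$ piece). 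Solving the quadratic \eqref{eikonal} for $w$ gives $w = \tfrac12\bigl(1 \pm \sqrt{1 - 4/\nu}\bigr)$, and the branch is fixed by the requirement that $\mathcal{L}'_{n-1}(n\nu)/\mathcal{L}_{n-1}(n\nu)$ stay bounded (indeed tend to $0$) as $|\nu|\to\infty$ — which is the case since $\mathcal{L}_{n-1}$ is a polynomial of degree $n-1$ so the ratio of leading terms is $(n-1)/(n\nu)\to 0$. The branch $\tfrac12\bigl(1 + \sqrt{1-4/\nu}\bigr)$ tends to $1$ at infinity, so we must take the minus sign, yielding $w(\nu) = \tfrac12\bigl(1 - \sqrt{1-4/\nu}\bigr)$, which is precisely the claimed limit.

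The main obstacle is not the formal eikonal computation but justifying the asymptotics rigorously: one must show that the limit $\lim \mathcal{L}'_{n-1}(n\nu)/\bigl(n\mathcal{L}_{n-1}(n\nu)\bigr)$ exists and equals $\phi'(\nu)$ uniformly on compact subsets of $\mathbb{C}\setminus[0,4]$, away from the zeros of the Laguerre polynomials. I would handle this either by quoting a precise Plancherel--Rotach-type theorem for Laguerre polynomials with linearly growing argument (available in the Szeg\H{o} or Deift et al.\ literature, or derivable from the known integral representation $\mathcal{L}_{n-1}(x) = \frac{1}{2\pi\ii}\oint \frac{\exp(-xt/(1-t))}{(1-t)\,t^{n}}\,dt$ and a steepest-descent analysis), or by a Montel/normal-families argument: the functions $f_n(\nu) := \mathcal{L}'_{n-1}(n\nu)/\bigl(n\mathcal{L}_{n-1}(n\nu)\bigr)$ are meromorphic, one shows they are locally uniformly bounded off $[0,4]$, extracts a convergent subsequence, shows any limit satisfies \eqref{eikonal} together with the normalization at infinity, and concludes the full limit exists and is the stated expression. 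Since the paper is working at a ``formal'' level for the correspondence limit (as signalled by the word \emph{formal} in the section heading), the steepest-descent route giving the exponential rate $\phi(\nu) = \int^\nu \tfrac12\bigl(1-\sqrt{1-4/s}\bigr)\,\tfrac{ds}{1}$ — equivalently $\phi'(\nu) = w(\nu)$ — is likely the intended and cleanest justification, and I would present that, noting the choice of branch and the location of the cut as the only subtle points.
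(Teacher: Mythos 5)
Your approach is correct in substance but takes a genuinely different route from the paper's. The paper avoids the differential equation entirely: it combines the derivative identity $v\mathcal{L}'_{n-1}(v) = (n-1)\mathcal{L}_{n-1}(v) - (n-1)\mathcal{L}_{n-2}(v)$ with the three-term recurrence $n\mathcal{L}_n(v) - (2n-1-v)\mathcal{L}_{n-1}(v) + (n-1)\mathcal{L}_{n-2}(v) = 0$, assumes the consecutive ratio $p = \lim \mathcal{L}_{n-2}(n\nu)/\mathcal{L}_{n-1}(n\nu)$ exists, is nonzero and is index-stable, and reads off the quadratic $1/p - (2-\nu) + p = 0$; the lemma's limit is then $(1-p)/\nu$. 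Your WKB/eikonal derivation from the Laguerre ODE yields an equivalent quadratic (substituting $w = (1-p)/\nu$ into $\nu w^2 - \nu w + 1 = 0$ recovers $p^2 - (2-\nu)p + 1 = 0$), and it has the advantage of addressing branch selection explicitly via the behaviour as $|\nu|\to\infty$ — a point the paper passes over in silence, even though its quadratic also has two roots. It also points toward a rigorous upgrade via Plancherel--Rotach asymptotics or steepest descent, whereas the paper's argument is more elementary and self-contained; both are equally formal at the key step (existence of the relevant limit). One caveat on your write-up: your normalisations are off by a factor of $n$ in places. In your first paragraph you correctly have $\mathcal{L}'_{n-1}(n\nu)/\mathcal{L}_{n-1}(n\nu)\to\phi'(\nu)$, but in the second you define $w$ as the limit of $\mathcal{L}'_{n-1}(n\nu)/\bigl(n\,\mathcal{L}_{n-1}(n\nu)\bigr)$ and describe all three terms of the ODE as being of order $n^2$; with the consistent normalisation ($y'/y\to w$, $y''/y\to w^2$) each term is of order $n$, and it is that balance which produces $\nu w^2 - \nu w + 1 = 0$. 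Since the eikonal equation and the branch you select are nevertheless correct, this is a bookkeeping slip rather than a gap.
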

 \begin{proof}
For Laguerre polynomials,
 \[v \mathcal{L}_{n-1}'(v) = (n-1) \mathcal{L}_{n-1}(v) - (n-1) \mathcal{L}_{n-2}(v),\]
 and so,
  \[\frac{\mathcal{L}_{n-1}'(v)}{\mathcal{L}_{n-1}(v)} = \frac{n-1}{v}-\frac{n-1}{v}\frac{\mathcal{L}_{n-2}(v)}{\mathcal{L}_{n-1}(v)}.\]
Setting $v = n\nu$ gives,
 \[\lim\limits_{\genfrac{}{}{0pt}{}{n\uparrow\infty, \epsilon\downarrow 0}{\lambda = n\epsilon^2}} \frac{\mathcal{L}'_{n-1}(n\nu)}{\mathcal{L}_{n-1}(n\nu)}
 = \frac{1}{\nu}-\frac{1}{\nu}\lim\limits_{\genfrac{}{}{0pt}{}{n\uparrow\infty, \epsilon\downarrow 0}{\lambda = n\epsilon^2}} \frac{\mathcal{L}_{n-2}(n\nu)}{\mathcal{L}_{n-1}(n\nu)}.\]
 However, Laguerre polynomials also satisfy the recurrence relation,
 \[n \mathcal{L}_n(v) -(2n - 1-v)\mathcal{L}_{n-1}(v) + (n-1) \mathcal{L}_{n-2}(v)=0.\]
 Thus, if the limit,
 \[p = \lim\limits_{\genfrac{}{}{0pt}{}{n\uparrow\infty, \epsilon\downarrow 0}{\lambda = n\epsilon^2}}\frac{\mathcal{L}_{n-2}(n\nu)}{\mathcal{L}_{n-1}(n\nu)},\]
 exists and is non zero, then necessarily $p$ satisfies,
 \[\frac{1}{p} - (2-\nu) +p = 0,\]
 proving the lemma.
 \end{proof}
Applying Lemma \ref{laguerre_lemma} to $\vec{Z}_{\epsilon,n}$  gives, in cartesians,
\begin{equation}\label{Z0}
\vec{Z}_{0,\infty}(\vec{x})
= \frac{\ii\mu}{2\lambda }\left(
   1+\sqrt{1-\frac{4}{\nu}}\right)
    \frac{\vec{x}}{|\vec{x}|}+ \frac{\mu}{2\lambda e }\left(1-\sqrt{1-\frac{4}{\nu}}\right)
  \left(\ii,-\sqrt{1-e^2},0
 \right),\end{equation}
 where as expected from equation (\ref{Zeqn}),
 \[\hf \vec{Z}_{0,\infty}^2(\vec{x}) -\frac{\mu}{|\vec{x}|} = -\frac{\mu^2}{2\lambda^2}.\]

\begin{prop}
The Bohr correspondence limit of the wave function $\psi_{\epsilon,n}$ for the atomic elliptic state gives the formal limiting wave function,
 \begin{equation}\label{limiting}
 \psi_{\epsilon}(\vec{x}) = \nu^{\frac{\lambda}{\epsilon^2}}\left(1+\sqrt{1-\frac{4}{\nu}}\right)^{\frac{2\lambda}{\epsilon^2} }\exp\left(-\frac{\mu }{\lambda\epsilon^2}|\vec{x}| + \frac{\lambda \nu}{2\epsilon^2}\left(1-\sqrt{1-\frac{4}{\nu}}\right)\right).
 \end{equation}
 \end{prop}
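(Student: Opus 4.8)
\medskip
\noindent\emph{Proof plan.}
The limiting wave function is not pinned down by a pointwise limit of $\psi_{\epsilon,n}$ — the Laguerre factor $\mathcal{L}_{n-1}(n\nu)$ has no such limit — but by its logarithmic gradient. From (\ref{Zdefn}) one has $\epsilon^2\nabla\log\psi_{\epsilon,n} = \ii\vec{Z}_{\epsilon,n}$, and Lemma \ref{laguerre_lemma} gives the limit $\vec{Z}_{\epsilon,n}\to\vec{Z}_{0,\infty}$ displayed in (\ref{Z0}); so the formal limiting wave function $\psi_\epsilon$ is, by definition, any solution of $\epsilon^2\nabla\log\psi_\epsilon(\vec{x}) = \ii\vec{Z}_{0,\infty}(\vec{x})$. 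Such a $\psi_\epsilon$ is determined only up to a multiplicative constant, and the correspondence limit is in any case meaningful only up to such a constant, so it is enough to integrate the right-hand side explicitly and check that the result is (\ref{limiting}).

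First I would record the two gradients that do all the work: $\nabla|\vec{x}| = \vec{x}/|\vec{x}|$ and, from (\ref{nu}),
\[\nabla\nu = \frac{\mu}{\lambda^2}\left(\frac{\vec{x}}{|\vec{x}|} - \frac{1}{e}\,(1,\ii\sqrt{1-e^2},0)\right).\]
Using $\ii\,(\ii,-\sqrt{1-e^2},0) = -(1,\ii\sqrt{1-e^2},0)$ to rewrite the constant vector in (\ref{Z0}), and then substituting the expression for $\nabla\nu$ to eliminate it, the two $\vec{x}/|\vec{x}|$ contributions combine (the $\sqrt{1-4/\nu}$–dependence cancels in that part) and one is left with the manifestly exact form
\[\ii\,\vec{Z}_{0,\infty}(\vec{x}) = -\frac{\mu}{\lambda}\,\nabla|\vec{x}| + \frac{\lambda}{2}\left(1-\sqrt{1-\tfrac{4}{\nu}}\right)\nabla\nu.\]
Writing $w:=\sqrt{1-4/\nu}$, the last term equals $\nabla\bigl[\tfrac{\lambda}{2}\!\int(1-w)\,\di\nu\bigr]$ by the chain rule (legitimate even though $\nu$ is complex-valued, since it is a single scalar function of $\vec{x}$), so $\psi_\epsilon$ is the exponential of $\tfrac{1}{\epsilon^2}\bigl(-\tfrac{\mu}{\lambda}|\vec{x}| + \tfrac{\lambda}{2}\!\int(1-w)\,\di\nu\bigr)$, up to the usual multiplicative constant.

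The single non-routine step is the antiderivative. I would verify the identity
\[\frac{\di}{\di\nu}\left(\log\nu + 2\log(1+w) + \tfrac{\nu}{2}(1-w)\right) = \tfrac{1}{2}(1-w),\]
which, on using $w' = 2/(\nu^2 w)$, collapses to the algebraic fact $\nu(w^2-1)+4 = 0$, i.e. $w^2 = 1-4/\nu$. Substituting this antiderivative and exponentiating then reproduces (\ref{limiting}) exactly. As an internal check I would note that the eikonal relation $\tfrac{1}{2}\vec{Z}_{0,\infty}^2 - \mu/|\vec{x}| = -\mu^2/(2\lambda^2)$ already recorded after (\ref{Z0}) is precisely the real part of the limiting Schr\"{o}dinger equation for $\psi_\epsilon$. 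The only real obstacle is bookkeeping: one must fix and track branches of the complex square root $w=\sqrt{1-4/\nu}$ and of the complex logarithm so that the identities hold on a single sheet, and one must keep clear throughout that ``formal limiting wave function'' means a solution of the limiting logarithmic-gradient equation rather than a genuine $n\to\infty$ limit of the functions $\psi_{\epsilon,n}$ themselves.
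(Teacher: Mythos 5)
Your proposal is correct and takes essentially the same route as the paper: the paper's proof simply declares $\psi_\epsilon$ to be the formal solution of $\vec{Z}_{0,\infty} = -\ii\epsilon^2\nabla\psi_\epsilon/\psi_\epsilon$ and leaves the verification implicit, whereas you actually carry out the integration (the decomposition $\ii\vec{Z}_{0,\infty} = -\tfrac{\mu}{\lambda}\nabla|\vec{x}|+\tfrac{\lambda}{2}(1-w)\nabla\nu$ and the antiderivative identity both check out, reducing as you say to $\nu(w^2-1)+4=0$). This fills in the computation the paper omits; no gaps.
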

 \begin{proof}
 For the vector $\vec{Z}_{0,\infty}$ the function $\psi_{\epsilon}$ is the formal wave function satisfying,
 \begin{equation}\label{limitZpsi}
 \vec{Z}_{0,\infty}(\vec{x}) = -\ii\epsilon^2\frac{\nabla \psi_{\epsilon}(\vec{x})}{\psi_{\epsilon}(\vec{x})}.
 \end{equation}
 This wave function corresponds to Bohr's limit but is only an approximate solution of the Schr\"{o}dinger equation.
 \end{proof}

\subsection{The limiting Nelson diffusion process}
We now construct the limiting Nelson diffusion $\vec{X}^{\epsilon}$ corresponding to the  limiting wave function $\psi_{\epsilon}$.
As in equation (\ref{Zdefn}),
\[\vec{Z}_{0,\infty}(\vec{x}) = -\ii\epsilon^2\frac{\nabla \psi_{\epsilon}(\vec{x})}{\psi_{\epsilon}(\vec{x})} = \epsilon^2\nabla\left(S_{\epsilon}-\ii R_{\epsilon}\right),\]
where $\psi_{\epsilon}(\vec{x}) = \exp(R_{\epsilon}+\ii S_{\epsilon})$. The subscripts emphasise the $\epsilon$ dependence in $R_{\epsilon}$ and $S_{\epsilon}$ arising from the $\epsilon^2$.

\begin{prop}\label{prop_Nelson_3d}
The limiting Nelson diffusion process $\vec{X}^{\epsilon}$ corresponding to the limiting wave function $\psi_{\epsilon}$ satisfies the It\^{o} equation,
\[
 \di \vec{X}^{\epsilon}(s)  = \vec{b}(\vec{X}^{\epsilon}(s))\di s +\epsilon \di \vec{B}(s),
 \]
 where $\vec{b}(\vec{x}) = (b_x,b_y,b_z)$ in cartesian coordinates with,
\begin{subequations}\label{bb}
 \begin{eqnarray}
 b_x & = & \frac{\mu}{2\lambda}\left\{(\alpha+\beta-1)\frac{1}{e} - (\alpha+\beta+1)\frac{x}{|\vec{x}|}\right\},\label{bx} \\
 b_y & = & \frac{\mu}{2\lambda}\left\{(\alpha-\beta-1)\frac{\sqrt{1-e^2}}{e} - (\alpha+\beta+1)\frac{y}{|\vec{x}|}\right\}, \label{by}\\
 b_z & = & -\frac{\mu}{2\lambda}(\alpha+\beta+1)\frac{z}{|\vec{x}|},\label{bz}
 \end{eqnarray}
 \end{subequations}
  and,
 \begin{subequations}\label{alphabeta}
 \begin{eqnarray}
 \alpha & = & \left(\hf\sqrt{\frac{\left(e|\vec{x}|-x-4\lambda^2 e/\mu\right)^2 +(1-e^2)y^2}{(e |\vec{x}|-x)^2+(1-e^2)y^2}}+\right.\nonumber\\
 &  &\qquad\left.+\hf\frac{\left(e|\vec{x}|-x-2\lambda^2e/\mu\right)^2 +(1-e^2)y^2
 -4\lambda^4 e^2/\mu^{2}}{(e |\vec{x}|-x)^2+(1-e^2)y^2}\right)^{\hf}_,\label{alpha}\\
 \beta & = & \frac{- 2\lambda^2 e \sqrt{1-e^2} y}{\mu\left((e|\vec{x}|-x)^2+(1-e^2)y^2\right)\alpha}.\label{beta}
 \end{eqnarray}
\end{subequations}
\end{prop}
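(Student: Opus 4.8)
The plan is to obtain the drift $\vec{b}$ by splitting $\vec{Z}_{0,\infty}$ from equation~(\ref{Z0}) into its real and imaginary parts and applying the identity $\vec{b} = \Re(\vec{Z}_{0,\infty}) - \Im(\vec{Z}_{0,\infty})$, exactly as in equation~(\ref{driftn}) for the pre-limit case (the derivation there goes through verbatim for $\psi_\epsilon$ since $\vec{Z}_{0,\infty} = \epsilon^2\nabla(S_\epsilon - \ii R_\epsilon)$ by~(\ref{limitZpsi})). So the whole proposition reduces to a careful evaluation of $\sqrt{1 - 4/\nu}$ as a complex number, with $\nu$ given by~(\ref{nu}). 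First I would write $\nu = \frac{\mu}{\lambda^2}\bigl(|\vec{x}| - x/e - \ii y\sqrt{1-e^2}/e\bigr)$, so that $1 - 4/\nu = \bigl(\nu - 4\bigr)/\nu$, clear denominators by multiplying numerator and denominator by the conjugate $\bar\nu$, and thereby express $1 - 4/\nu$ as $w/|\nu|^2$ where $w = (\nu-4)\bar\nu$ is an explicit complex number whose real and imaginary parts are polynomials in $x,y,|\vec{x}|,e$ and $\lambda^2/\mu$. The quantities $\alpha$ and $\beta$ in~(\ref{alphabeta}) are then, by construction, the real and imaginary parts of $\sqrt{1 - 4/\nu}$: $\sqrt{1-4/\nu} = \alpha + \ii\beta$.

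Concretely, the second step is to verify this identification. Writing $\zeta = 1 - 4/\nu = \zeta_r + \ii\zeta_i$, the principal square root has real part $\sqrt{\tfrac12\bigl(|\zeta| + \zeta_r\bigr)}$ and imaginary part $\operatorname{sgn}(\zeta_i)\sqrt{\tfrac12\bigl(|\zeta| - \zeta_r\bigr)}$. One computes $|\zeta|^2 = \zeta_r^2 + \zeta_i^2$, and here the algebra should collapse: with the common denominator $D := (e|\vec{x}| - x)^2 + (1-e^2)y^2$ (note $|\nu|^2 = (\mu/\lambda^2)^2 D/e^2$), the numerator of $\zeta_r$ turns out to be proportional to $(e|\vec{x}| - x - 2\lambda^2 e/\mu)^2 + (1-e^2)y^2 - 4\lambda^4 e^2/\mu^2$ and the quantity $|\zeta|$ to $\sqrt{(e|\vec{x}| - x - 4\lambda^2 e/\mu)^2 + (1-e^2)y^2}\big/\sqrt{D}$ — which is precisely what appears inside the outer square root in~(\ref{alpha}). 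Matching $\tfrac12(|\zeta| + \zeta_r)$ against~(\ref{alpha}) and $\ii\zeta_i/(2\alpha)$ (using $\zeta_i/|\zeta|$ relations, equivalently $\zeta_i = 2\alpha\beta$) against~(\ref{beta}) finishes the identification; the sign in $\beta$ comes from $\zeta_i$ being proportional to $-y$.

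Once $\sqrt{1-4/\nu} = \alpha + \ii\beta$ is established, substitute into~(\ref{Z0}). The term $\frac{\ii\mu}{2\lambda}(1 + \alpha + \ii\beta)\frac{\vec{x}}{|\vec{x}|}$ contributes real part $-\frac{\mu\beta}{2\lambda}\frac{\vec{x}}{|\vec{x}|}$ and imaginary part $\frac{\mu(1+\alpha)}{2\lambda}\frac{\vec{x}}{|\vec{x}|}$; the term $\frac{\mu}{2\lambda e}(1 - \alpha - \ii\beta)(\ii, -\sqrt{1-e^2}, 0)$ splits similarly. Forming $\Re(\vec{Z}_{0,\infty}) - \Im(\vec{Z}_{0,\infty})$ componentwise and collecting terms then yields~(\ref{bb}): the $x$-component picks up $-\frac{\mu(1+\alpha)}{2\lambda}\frac{x}{|\vec{x}|}$ from the radial imaginary part together with $\frac{\mu}{2\lambda e}(\text{real} - \text{imag of } \ii(1-\alpha-\ii\beta)) = \frac{\mu}{2\lambda e}(\alpha + \beta - 1)$ wait — one simply tracks the four scalar pieces and checks the bookkeeping matches~(\ref{bx})--(\ref{bz}), and for $b_z$ only the radial term survives. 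The Brownian part is unchanged since $\epsilon$ is fixed in this limit and the diffusion constant $\epsilon^2/2$ is inherited from the pre-limit construction.

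The main obstacle is purely computational: showing that $\zeta_r^2 + \zeta_i^2$ simplifies so that its square root is the clean expression $\sqrt{(e|\vec{x}| - x - 4\lambda^2 e/\mu)^2 + (1-e^2)y^2}\big/\sqrt{D}$ rather than an unilluminating quartic. This requires recognizing that $|(\nu-4)\bar\nu|^2 = |\nu-4|^2|\nu|^2$ and that $|\nu - 4|^2$, after writing $\nu - 4 = \frac{\mu}{\lambda^2}\bigl(|\vec{x}| - x/e - 4\lambda^2/\mu - \ii y\sqrt{1-e^2}/e\bigr)$, equals $\frac{\mu^2}{\lambda^4 e^2}\bigl((e|\vec{x}| - x - 4\lambda^2 e/\mu)^2 + (1-e^2)y^2\bigr)$ — so the factorization is forced once one keeps the modulus in product form rather than expanding. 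A secondary nuisance is getting the branch of the square root (hence the sign of $\beta$) and the signs in each scalar component of~(\ref{bb}) right; I would pin these down by evaluating everything at a convenient test point, e.g. a point on the positive $x$-axis far from the origin, where $\nu$ and the drift both have transparent limiting values, and check consistency with $\tfrac12\vec{Z}_{0,\infty}^2 - \mu/|\vec{x}| = -\mu^2/(2\lambda^2)$.
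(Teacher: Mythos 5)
Your proposal is correct and follows exactly the paper's route: the drift is $\Re(\vec{Z}_{0,\infty})-\Im(\vec{Z}_{0,\infty})$, and the whole content is the verification that $\sqrt{1-4/\nu}=\alpha+\ii\beta$ with $\alpha,\beta$ as in (\ref{alphabeta}), which the paper dismisses as ``a simple calculation'' and which you carry out (correctly, including the factorisation $|(\nu-4)\bar\nu| = |\nu-4|\,|\nu|$ that keeps $|\zeta|$ clean and the sign of $\beta$ from $\zeta_i\propto -y$). Nothing further is needed.
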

\begin{proof}
The drift term in the limiting Nelson diffusion process $\vec{X}^{\epsilon}$ corresponding to the limiting wave function $\psi_{\epsilon}$ is,
\[
 \vec{b}(\vec{x}) = \epsilon^2 \nabla(R_{\epsilon }+S_{\epsilon})=  \Re (\vec{Z}_{0,\infty}(\vec{x})) - \Im( \vec{Z}_{0,\infty}(\vec{x})).
 \]
 A simple calculation gives (for $\nu$ as in equation (\ref{nu})),
 \[\sqrt{1-\frac{4}{\nu}} = \alpha+i\beta,\]
 and the result follows.
\end{proof}
 Moreover, we can also find the functions $R_{\epsilon}$ and $S_{\epsilon}$ explicitly,
  \begin{eqnarray*}
R_{\epsilon} & = & \frac{\lambda}{2\epsilon^2}\left(\ln\left(\tilde{\alpha}^2+\tilde{\beta}^2\right)
+2\ln\left((1+\alpha)^2+\beta^2\right)+(1-\alpha)\tilde{\alpha}+ \beta\tilde{\beta}\right)-\frac{\mu |\bm{x}|}{\lambda\epsilon^2},\\
S_{\epsilon} & = & \frac{\lambda}{\epsilon^2}\left(\arg\left(\tilde{\alpha}+\ii \tilde{\beta}\right)+
2 \arg\left(1+\alpha+\ii \beta\right)+\hf\tilde{\beta}(1-\alpha) - \hf\beta\tilde{\alpha}\right),
 \end{eqnarray*}
 where,
 \[\tilde{\alpha}= \frac{\mu}{\lambda^2}\left(|\vec{x}| -\frac{x}{e} \right),\qquad   \tilde{\beta} = -\frac{\mu y \sqrt{1-e^2}}{\lambda^2e}.\]

Our main object of study for the rest of this paper is the limiting Nelson diffusion process $\vec{X}^{\epsilon}$
in Proposition \ref{prop_Nelson_3d}.
Figure \ref{3dsims} shows several simulations of sample paths for $\vec{X}^{\epsilon}$. These paths all converge to the Kepler ellipse.
 We hope to recover  Keplerian motion on the Kepler ellipse from the underlying deterministic dynamical system $\vec{X}^0$ found from $\vec{X}^{\epsilon}$ in the limit as $\epsilon\rightarrow 0$.
 \begin{figure}
\centering
\resizebox{120mm}{!}{\includegraphics{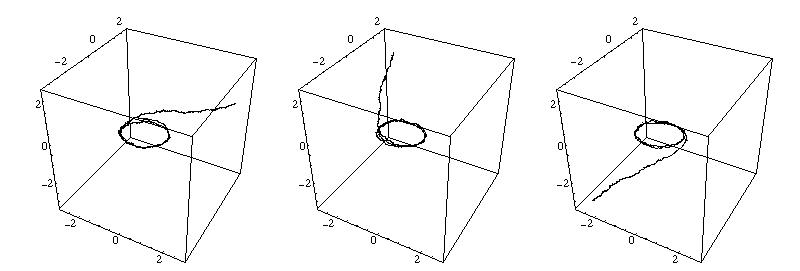}}
\caption{Simulations of the limiting diffusion $\vec{X}^{\epsilon}$ with $e=0.5$.}
\label{3dsims}
\end{figure}

\begin{remark}
The process $\vec{X}^{\epsilon}$ is the formal limit of the process $\vec{X}^{\epsilon,n}$ found by letting $n\rightarrow \infty$ whilst $\epsilon\rightarrow 0 $ with $\lambda = n\epsilon^2$ fixed, in the drift term.
The process $\vec{X}^{\epsilon,n}$ before taking this limit  satisfies the Nelson-Newton law,
 \[\hf(D_+D_-+D_-D_+)\vec{X}^{\epsilon,n}(t) = -\mu\frac{\vec{X}^{\epsilon,n}(t)}{|\vec{X}^{\epsilon,n}(t)|^3}.\]
\end{remark}

 \subsection{The drift singularity}
 Considering equations (\ref{alphabeta}) in Proposition \ref{prop_Nelson_3d}, it is clear that $\alpha$  and $\beta$ will both have singularities at the point $|\vec{x}|=0$. However, $\beta$ will also have a singularity along the surface $\alpha=0$. Therefore,   the drift could be singular at any of these points.

 The former is easy to analyse using spherical polar coordinates,
 \[x = r \cos\theta\sin\phi,\quad y = r \sin\theta\sin\phi, \quad z = r \cos\phi.\]
 Then,
 \begin{eqnarray*}
 \alpha & \sim & C(\theta,\phi) r^{-\hf}, \\
 \beta & = & \frac{- 2 \lambda^2 e \sqrt{1-e^2}  \sin\theta\sin\phi}{\left((e- \cos\theta\sin\phi)^2+(1-e^2) \sin^2\theta\sin^2\phi\right)\mu r\alpha}
 \sim - K(\theta,\phi) \sin\theta\sin\phi r^{-\hf},
 \end{eqnarray*}
 as $r\sim 0$, where $C,K$ are some positive functions independent of $r$. One would normally expect the probability of the diffusion $\vec{X}^{\epsilon}$ hitting this point singularity in finite time in two or more dimensions to be zero.

We now consider the behaviour on the surface $\alpha =0$.  Taking equation (\ref{alpha}), we define $\alpha_1$ and $\alpha_2$ so that,
\begin{equation}\label{alpha1}\alpha = \sqrt{\hf\left(\sqrt{\alpha_1} + \alpha_2\right)}.
\end{equation}
Clearly, a necessary condition for $\alpha = 0$ is $\alpha_1 -\alpha_2^2 = 0,$
and so working in cylindrical polar coordinates,
\[x = \tilde{r} \cos\theta,\quad y = \tilde{r}\sin\theta,\quad z = z,\]
we have,
\[\alpha_1 - \alpha_2^2 = \frac{64 \lambda^4 e^2(1-e)  (e+1) \tilde{r}^2 \sin ^2\theta }{\mu^2\left(\left(e^2\cos 2 \theta +e^2+2\right) \tilde{r}^2-4 e\tilde{r} \sqrt{\tilde{r}^2+z^2} \cos \theta  +2 e^2 z^2\right)^2},
\]
and so $\alpha$ is only zero on $y=0$. Moreover, returning to cartesians,
\begin{equation}\label{square1}
\alpha|_{y=0} =\sqrt{\hf\left(\frac{4\lambda^2 e /\mu-e \sqrt{x^2+z^2} +x}{x-e \sqrt{x^2+z^2}}+\sqrt{\frac{\left(4  \lambda^2 e/\mu-e\sqrt{x^2+z^2} +x\right)^2}{\left(x-e \sqrt{x^2+z^2}\right)^2}}\right)},
\end{equation}
and so we can conclude,
\[\alpha = 0 \quad\Leftrightarrow \quad \vec{x}=(x,y,z) \in \Sigma,\]
where $\Sigma$ is the set,
\begin{equation}\label{singu}
\Sigma = \left\{(x,0,z):\, \frac{-e \left(4\lambda^2/\mu-\sqrt{\left(16 \lambda^4/\mu^2-z^2\right) e^2+z^2}\right)}{1-e^2}<x< \sqrt{\frac{e^2z^2}{1-e^2}}\right\}.
\end{equation}
The set $\Sigma$ is shown in Figure \ref{3dsing} together with the Kepler ellipse. This singularity is what one would expect if we consider $\Sigma$ as the correspondence limit of the nodal surfaces of the wave function $\psi_{\epsilon,n}$ (see Figure \ref{nodal}).
It can be shown that this singularity leads to a finite jump discontinuity in the drift.

\begin{figure}
\centering
\resizebox{120mm}{!}{\includegraphics{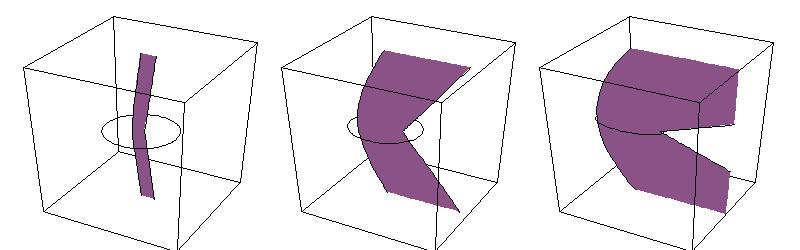}}
\caption{The drift singularity (shaded)  with the Kepler ellipse for $e=0.1,$ $0.5$ and $ 0.9$.}
\label{3dsing}
\end{figure}

\subsection{Restriction to two dimensions}
For the remainder of this paper we shall restrict our wave functions $\psi_{\epsilon,n}$ and $\psi_{\epsilon}$ and their related diffusion processes $\vec{X}^{\epsilon,n}$ and $\vec{X}^{\epsilon}$ to the putative plane of motion $z=0$.  We will consider the full three dimensional problem in  a future paper.

\begin{prop}
If $\psi_{\epsilon}(x,y,z)$ is the three dimensional limiting wave function, then its restriction to the plane $z=0$, is given by,
\begin{eqnarray*}
\psi_{\epsilon}(x,y) &=&  \nu^{\frac{\lambda}{\epsilon^2}}\left(1+\sqrt{1-\frac{4}{\nu}}\right)^{\frac{2\lambda}{\epsilon^2} }\times\\
 & & \quad \!\! \exp\left(-\frac{\mu }{\lambda\epsilon^2}
 \sqrt{x^2+y^2} + \frac{\lambda \nu}{2\epsilon^2}\left(1-\sqrt{1-\frac{4}{\nu}}\right)\right),
\end{eqnarray*}
where
\[\nu = \frac{\mu}{\lambda^2}\left(\sqrt{x^2+y^2} - \frac{x}{e} - \frac{\ii y\sqrt{1-e^2}}{e}\right).
\]
\end{prop}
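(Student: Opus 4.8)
The plan is to recognise that the claimed identity is nothing more than the pointwise substitution $z=0$ into the formal limiting wave function $\psi_{\epsilon}$ of equation (\ref{limiting}), so the proof is a short direct computation with essentially no obstacle.

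First I would note that $\psi_{\epsilon}(\vec{x})$ in equation (\ref{limiting}) depends on the point $\vec{x}=(x,y,z)$ only through the two scalars $|\vec{x}|=\sqrt{x^2+y^2+z^2}$ and $\nu$, where $\nu$ is defined in equation (\ref{nu}). The key point is that the formula for $\nu$ in equation (\ref{nu}) already contains $z$ only implicitly, through $|\vec{x}|$: we have $\nu = \frac{\mu}{\lambda^2}\bigl(|\vec{x}| - \frac{x}{e} - \frac{\ii y\sqrt{1-e^2}}{e}\bigr)$, so setting $z=0$ affects $\nu$ only by replacing $|\vec{x}|$ with $\sqrt{x^2+y^2}$.

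Next I would perform that substitution. Putting $z=0$ gives $|\vec{x}| = \sqrt{x^2+y^2}$, and inserting this into the definition of $\nu$ yields precisely $\nu = \frac{\mu}{\lambda^2}\bigl(\sqrt{x^2+y^2} - \frac{x}{e} - \frac{\ii y\sqrt{1-e^2}}{e}\bigr)$, the stated two-dimensional $\nu(x,y)$. Making the same replacement $|\vec{x}| = \sqrt{x^2+y^2}$ everywhere else in equation (\ref{limiting}) — in particular in the exponent $-\frac{\mu}{\lambda\epsilon^2}|\vec{x}|$ — reproduces the displayed expression for $\psi_{\epsilon}(x,y)$ verbatim, while the prefactors $\nu^{\lambda/\epsilon^2}$ and $(1+\sqrt{1-4/\nu})^{2\lambda/\epsilon^2}$ carry over unchanged, being functions of $\nu$ alone.

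The only thing worth remarking on, and the closest thing to a subtlety, is the meaning of ``restriction to the plane $z=0$'': here it denotes the naive restriction of the three-dimensional formal limiting wave function, not a separate re-derivation of a two-dimensional Bohr limit from a planar Schr\"{o}dinger problem. Once that convention is fixed there is nothing further to verify. I would close by observing that the identical substitution applied to equation (\ref{Z0}) yields the restricted limiting drift $\vec{Z}_{0,\infty}$, which is the object actually used in the two-dimensional analysis of the subsequent sections.
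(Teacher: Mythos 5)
Your substitution argument is correct and proves the statement as literally written: equation (\ref{limiting}) depends on $(x,y,z)$ only through $|\vec{x}|$ and $\nu$, and $\nu$ in equation (\ref{nu}) involves $z$ only via $|\vec{x}|$, so putting $z=0$ simply replaces $|\vec{x}|$ by $\sqrt{x^2+y^2}$ throughout. However, your route is genuinely different from the paper's, and the difference is exactly the ``subtlety'' you flag and then set aside by convention. The paper's own proof does not perform the substitution at all; it asserts, explicitly without proof, that $\psi_{\epsilon}^{\mathrm{3\textrm{-}dim}}|_{z=0} = \psi_{\epsilon}^{\mathrm{2\textrm{-}dim}}$, i.e.\ that the naive restriction of the three-dimensional formal limit coincides with the wave function obtained by carrying out the Bohr correspondence procedure for the planar problem, and it takes that coincidence to be the content of the proposition. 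Your reading makes the proposition a tautology with a one-line verification; the paper's reading makes it a nontrivial compatibility statement between two limiting procedures (restrict-then-limit versus limit-then-restrict) whose proof is deferred. What your approach buys is a complete, self-contained argument for the displayed formula; what it gives up is any justification that this restricted object is the physically correct two-dimensional limiting state, which is what licenses the paper's subsequent restriction of the entire analysis to the plane $z=0$. Your closing remark that the same substitution applied to equation (\ref{Z0}) yields the restricted drift is consistent with how the paper actually proceeds (Proposition \ref{prop4} is obtained from Proposition \ref{prop_Nelson_3d} in just this way), so no step of yours fails; you have simply proved the weaker of the two possible readings, and you were right to say so explicitly.
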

\begin{proof}
We state without proof that the correspondence limit in three dimensions satisfies,
 \[\psi_{\epsilon}^{\mathrm{3-dim}}|_{z=0} = \psi_{\epsilon}^{\mathrm{2-dim}},\]
giving the result.
\end{proof}
\begin{prop}\label{prop4}
The limiting Nelson diffusion process $\vec{X}^{\epsilon}=(X_x^{\epsilon},X_y^{\epsilon})$ restricted to the plane $z=0$ satisfies,
\[
 \di \vec{X}^{\epsilon}(s)  = \vec{b}(\vec{X}^{\epsilon}(s))\di s +\epsilon \di \vec{B}(s),
 \]
where $\vec{b}(\vec{x}) = (b_x,b_y)$ with,
\begin{subequations}\label{bb2d}
 \begin{eqnarray}
 b_x & = & \frac{\mu}{2\lambda}\left\{(\alpha+\beta-1)\frac{1}{e} - (\alpha+\beta+1)\frac{x}{\sqrt{x^2+y^2}}\right\},\label{bx2d} \\
 b_y & = & \frac{\mu}{2\lambda}\left\{(\alpha-\beta-1)\frac{\sqrt{1-e^2}}{e} - (\alpha+\beta+1)\frac{y}{\sqrt{x^2+y^2}}\right\}, \label{by2d}
 \end{eqnarray}
 \end{subequations}
 and,
 \begin{subequations}\label{alphabeta2d}
 \begin{eqnarray}
 \alpha & = & \left(\hf\sqrt{\frac{\left(e\sqrt{x^2+y^2}-x-4\lambda^2 e/\mu\right)^2 +(1-e^2)y^2}{\left(e \sqrt{x^2+y^2}-x\right)^2+(1-e^2)y^2}}+\right.\nonumber\\
 &  &\quad\left.+\hf\frac{\left(e\sqrt{x^2+y^2}-x-2\lambda^2e/\mu\right)^2 +(1-e^2)y^2
 -4\lambda^4 e^2/\mu^{2}}{\left(e \sqrt{x^2+y^2}-x\right)^2+(1-e^2)y^2}\right)^{\hf}_,\label{alpha2d}\\
 \beta & = & \frac{- 2\lambda^2 e \sqrt{1-e^2} y}{\mu\left(\left(e\sqrt{x^2+y^2}-x\right)^2+(1-e^2)y^2\right)\alpha}.\label{beta2d}
 \end{eqnarray}
\end{subequations}
\end{prop}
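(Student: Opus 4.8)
The plan is to obtain Proposition \ref{prop4} directly from the three dimensional calculation in Proposition \ref{prop_Nelson_3d} by restricting everything to the plane $z=0$, so that no fresh limiting procedure is required. The key point is that the Nelson drift depends on the wave function only through $\vec{Z}_{0,\infty} = -\ii\epsilon^2 \nabla\ln\psi_{\epsilon}$, and, by the preceding proposition, the two dimensional limiting wave function $\psi_{\epsilon}(x,y)$ is precisely the restriction $\psi_{\epsilon}^{\mathrm{3-dim}}|_{z=0}$. Since $\partial_x$ and $\partial_y$ commute with the operation of setting $z=0$, the functions $R_{\epsilon}$ and $S_{\epsilon}$ for the genuinely two dimensional problem are the restrictions of their three dimensional counterparts, and hence the two dimensional drift $\vec{b}(x,y) = \epsilon^2\nabla_{(x,y)}(R_{\epsilon}+S_{\epsilon})$ equals $(b_x,b_y)$ from \ref{bb} evaluated at $z=0$.

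First I would record that on $z=0$ we have $|\vec{x}| = \sqrt{x^2+y^2}$, so that the complex function $\nu$ of \ref{nu} coincides with the $\nu$ appearing in the statement of the proposition. Consequently the identity $\sqrt{1-4/\nu} = \alpha+\ii\beta$ established in the proof of Proposition \ref{prop_Nelson_3d} still holds, with $\alpha$ the non-negative real part coming from the principal branch; substituting $|\vec{x}| = \sqrt{x^2+y^2}$ into \ref{alpha} and \ref{beta} then yields exactly \ref{alpha2d} and \ref{beta2d}, and the same replacement in \ref{bx} and \ref{by} yields \ref{bx2d} and \ref{by2d}. For completeness one may instead re-derive $\vec{Z}_{0,\infty}$ in two dimensions from scratch by applying Lemma \ref{laguerre_lemma} to the two dimensional analogue of \ref{Z}; this reproduces \ref{Z0} with $|\vec{x}|$ replaced by $\sqrt{x^2+y^2}$ and gives an independent check.

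Next I would verify the consistency of the restriction: from \ref{bz} we have $b_z|_{z=0} = 0$, so the restricted drift field is genuinely two dimensional and tangent to the plane $z=0$, which is what makes it legitimate to take as object of study the two dimensional Nelson diffusion driven by a two dimensional Brownian motion $\vec{B}=(B_1,B_2)$, rather than a projection of the three dimensional process (whose $z$-noise would of course push it off the plane).

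There is no serious obstacle: the mathematical content of the proposition is the unproven assertion $\psi_{\epsilon}^{\mathrm{3-dim}}|_{z=0} = \psi_{\epsilon}^{\mathrm{2-dim}}$ of the previous proposition together with routine substitution. The only points needing a line of care are the commutation of restriction with the gradient, and the observation that the branch choice in $\sqrt{1-4/\nu}$, and hence the sign convention for $\alpha$, is unaffected by setting $z=0$ precisely because $\nu$ itself is unchanged there.
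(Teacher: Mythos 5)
Your proposal is correct and follows essentially the same route as the paper, whose proof is simply the observation that the result follows from Proposition \ref{prop_Nelson_3d} by setting $z=0$ (so that $|\vec{x}|=\sqrt{x^2+y^2}$); your additional remarks on the branch of $\sqrt{1-4/\nu}$ and on $b_z|_{z=0}=0$ are sensible consistency checks within that same argument.
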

\begin{proof}
This follows  from Proposition \ref{prop_Nelson_3d}.
\end{proof}
Simulations of the restricted limiting Nelson diffusion process are shown in Figure \ref{2dsims}.
 \begin{figure}
\centering
\resizebox{120mm}{!}{\includegraphics{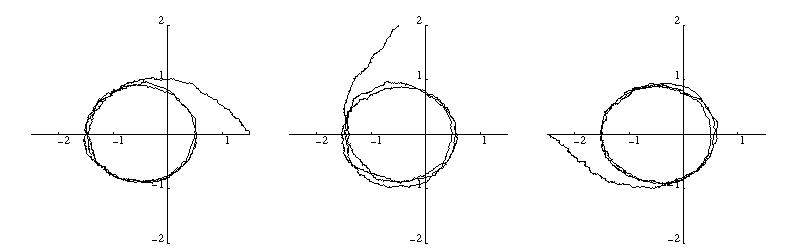}}
\caption{Simulations of the 2 dimensional diffusion $\vec{X}^{\epsilon}$ for $e=0.5$.}
\label{2dsims}
\end{figure}

 \section{The Keplerian elliptic coordinate system}
In this section we will define a new two dimensional coordinate system to simplify the limiting diffusion $\vec{X}^{\epsilon}$.
 Recall from Definition \ref{Keplerellipse} that the Kepler ellipse is the ellipse with eccentricity $e$ and semi-major axis $a = \lambda^2/\mu$ with one focus at the origin and the other at $(-2ae,0)$.

   In the plane $z=0$, the singularity $\Sigma$ defined in equation (\ref{singu}) reduces to,
\begin{equation}\label{singu2d}
\Sigma = \left\{(x,0) : \quad  -\frac{4a e}{1+e}<x<0\right\}.
\end{equation}
We want to find a coordinate system which will simplify the complex expressions  $\alpha$ and $\beta$ in Proposition \ref{prop4}. From equations (\ref{alpha1}) and  (\ref{square1}) it is apparent that on the singularity $\alpha_1$  becomes a perfect square. A simple calculation shows that the same also happens on the Kepler ellipse.

In fact if we consider this square root term in polar coordinates,
\[\alpha_1 = \frac{8 \left(2 \lambda ^4-r \lambda ^2 \mu \right) e^2+r \mu  \cos \theta  \left(8 \lambda ^2-2 r \mu +e r \mu  \cos \theta \right) e+r^2 \mu ^2}{r^2 \mu ^2 (e \cos \theta -1)^2},\]
and evaluate this on an arbitrary ellipse of semimajor axis $\gamma$ and eccentricity $c$, that is where $r=(1-c^2) \gamma /(c \cos \theta +1)$,
then we have,
\[\alpha_1\!=\! \frac{16 e^2 c ^2 \lambda ^4-8 e c \mu  \lambda ^2 \left(\gamma ( c ^2-1)+r(e  c +1) \right) +\mu^2\!\left(e \gamma ( c ^2-1)+r( c +e)\right)^2
 }{\left (e \gamma ( c ^2-1)+r( c +e)\right) ^2 \mu ^2}.\]
The numerator is a quadratic in $r$ and so forms a perfect square when its discriminant is zero. This only occurs when $\gamma =2 a e/(c+e).$
Therefore,  there is an infinite family of ellipses, including both the Kepler ellipse and  the singularity (as a degenerate ellipse), on which $\alpha_1$ forms a perfect square.

\begin{defn} The family of ellipses $\mathcal{E}_c$ are non-confocal ellipses with eccentricity $|c|$, foci at $(0,0)$ and $(-4 a e c/(1+e),0)$ and with semi major axis ${2 a e}/({e+c})$ where $-e\le c\le 1$.
\end{defn}
Using the ellipses $\mathcal{E}_u$ we introduce $(u,v)$ coordinates.
\begin{defn}
The Keplerian elliptic coordinates $(u,v)$ in the plane $(x,y)$ are defined by,
\begin{equation}\label{coords}
 x = \frac{2 a e (\cos v - u)}{e+u},\quad  y = \frac{2 a e \sqrt{1-u^2} \sin (v)}{e+u},
  \end{equation}
where $-e<  u \le 1$ and $0\le v<2 \pi $.
\end{defn}
 These are non-orthogonal coordinates. The ellipse $\mathcal{E}_c$ corresponds to the coordinate curve  $u = c$ for some constant $c\in(-e,1]$.
The Kepler ellipse  is the curve $u = e$ (i.e. $\mathcal{E}_e$) and the singularity is the degenerate ellipse given by $u=1$ (i.e. $\mathcal{E}_1$). The ellipse at infinity is given by $u = -e$ (i.e. $\mathcal{E}_{-e}$).
  The curves $v = k$ for some constant $k$, are similar to hyperbolas.
The coordinate curves are shown in Figure \ref{coordpic} for several values of $e$. It is important to note the bunching of the  curves of constant $u$ which occurs at the left hand end of the singularity for large eccentricities.
\begin{figure}
\centering
\resizebox{120mm}{!}{\includegraphics{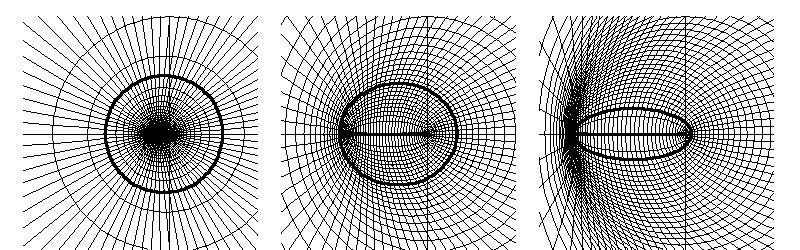}}
\caption{The Keplerian elliptic coordinate system for $e=0.1$, $0.5$ and $0.9$ with the singularity ($u=1$) and Kepler ellipse ($u=e$)  in bold.}
\label{coordpic}
\end{figure}

The usefulness of Keplerian elliptic coordinates is immediately  apparent as equations (\ref{alphabeta2d}) simplify to give,
 \begin{equation}\label{alphabetauv}\alpha = \frac{\sqrt{(1-u^2)(1-e^2)}}{1+eu-(e+u)\cos v},\qquad \beta = -\frac{(e+u)\sin v}{1+ eu-(e+u)\cos v},
 \end{equation}
 and the drift from equations (\ref{bb2d}) becomes,
 \begin{eqnarray*}
b_x \!\!& = &\! \!\frac{\mu}{2\lambda}\left(\frac{e u-(e-u) \cos v-(e+u) \sin v+\sqrt{\left(1-e^2\right) \left(1-u^2\right)}-1}{e
   (1-u \cos v)}\right),\\
     b_y \!\!& = &\! \!\frac{\mu}{2\lambda}\left( \frac{\sqrt{1-u^2}\left(e \cos v-e \sin v+1\right)
     +\sqrt{1-e^2}\left(u\cos v +u\sin v-1\right)}{e(1- u \cos v)}\right).
 \end{eqnarray*}
The singularity $\Sigma$ becomes the line $u=1$ with approaches from above (i.e. $y>0$) corresponding to $0<v<\pi$ and approaches from below (i.e. $y<0$) as $\pi<v<2\pi$. In this manner the singularity is opened out onto the boundary of our new coordinate space which forms a cylinder. It is also apparent that there is a finite jump discontinuity in $\beta$ across $\Sigma$ as,
\[\beta\big|_{u=1,v=v} = -\frac{\sin v}{1-\cos v},\qquad \beta\big|_{u=1,v=2\pi-v} =\frac{\sin v}{1-\cos v}.\]

We can also use our new coordinates to rewrite the diffusion process $\vec{X}^{\epsilon}$ in $(u,v)$ space.

\begin{prop}\label{itocorrection}
The diffusion $\vec{X}^{\epsilon}$ can be written in terms of the Keplerian elliptic cooridinates $(u,v)$ as,
\begin{eqnarray*}
\di X_u^{\epsilon} &=& h(X_u^{\epsilon},X_v^{\epsilon})\left\{b_u({X}_u^{\epsilon},{X}_v^{\epsilon}) \di t -\epsilon \vec{N}(X_u^{\epsilon},X_v^{\epsilon})\cdot (\di B_1,\di B_2)\right\},\\
 \di X_v^{\epsilon} &=&h(X_u^{\epsilon},X_v^{\epsilon})\left\{ b_v({X}_u^{\epsilon},{X}_v^{\epsilon}) \di t -\epsilon \vec{M}(X_u^{\epsilon},X_v^{\epsilon})\cdot (\di B_1,\di B_2)\right\},
 \end{eqnarray*}
 where
\begin{eqnarray*}
\vec{N}(u,v) & = &\left( (e+u)(1-u^2)\cos v, (e+u)\sqrt{1-u^2} \sin v\right),\\
\vec{M}(u,v) & = & \left((1+eu)\sin v, -\sqrt{1-u^2}(e+\cos v)\right),
\end{eqnarray*}
with,
\[h(u,v) = \frac{(e+u)}{2 a e (1-u \cos v) (e u+(e+u) \cos v+1)},\]
and,
\begin{eqnarray*}
b_u
  &= & \epsilon^2 I_u(u,v)-\frac{\mu}{2 e \lambda}(e+u)\sqrt{1-u^2}\\
   & &\quad \left(
\sqrt{1-e^2} \left(u+\cos v-\sin v\right)
      -\sqrt{1-u^2} \left(e+\cos v-\sin v\right)\right)\nonumber\\
   b_v &= & \epsilon^2 I_v(u,v)-\frac{\mu}{2 e \lambda}\bigg( \sqrt{\left(1-e^2\right) \left(1-u^2\right)} (e+\cos v+\sin v)\nonumber\\
   & & \quad -u(1+e^2)-2e-\left(e^2+2 u e+1\right) \cos v-\left(1-e^2\right) \sin  v\bigg),
\end{eqnarray*}
with the It\^{o} correction terms,
\begin{eqnarray*}
I_u(u,v)& = &
\frac{-(e+u)^2}{4 a e (1+eu+(e+u)\cos v)^2}
\bigg((e+u)^2\left( \left(2 u^2-1\right) \cos^2  v+1\right) +\\
&&\quad
+ 2u \left((e+u)^2-(1-u^2)(1+eu)\right) \cos v -(1-u^2)( 1-e^2)
\bigg),\\
I_v(u,v)  & = & \frac{(e+u)\sin v}{4 a  (1+eu+(e+u)\cos v)^2}
\bigg(2(u+e)^2-(1+eu)^2\\
& & \quad+(e+u) (e u+1) \cos v\bigg).
\end{eqnarray*}
\end{prop}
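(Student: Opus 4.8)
The plan is to apply the It\^{o} change-of-variables formula to the two smooth functions $u = u(x,y)$ and $v = v(x,y)$ obtained by inverting the coordinate definition (\ref{coords}), starting from the Cartesian SDE for $\vec{X}^\epsilon$ given in Proposition \ref{prop4}. Since the Brownian part of $\vec{X}^\epsilon$ has constant diffusion matrix $\epsilon^2 I$, It\^{o}'s formula gives
\[
\di u(\vec{X}^\epsilon) = \Big(\nabla u\cdot\vec{b} + \tfrac{\epsilon^2}{2}\Delta u\Big)\di t + \epsilon\,\nabla u\cdot\di\vec{B},
\]
and similarly for $v$, with the $\tfrac{\epsilon^2}{2}\Delta u$ and $\tfrac{\epsilon^2}{2}\Delta v$ terms producing exactly the claimed It\^{o} correction terms $\epsilon^2 I_u$ and $\epsilon^2 I_v$. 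So the whole statement reduces to three computations: (i) the inverse Jacobian, i.e.\ expressions for $\nabla u$ and $\nabla v$ in $(u,v)$ coordinates; (ii) the drift contractions $\nabla u\cdot\vec{b}$ and $\nabla v\cdot\vec{b}$, using the already-simplified form of $\vec{b}$ in $(u,v)$ coordinates (the displayed formulas for $b_x,b_y$ just before Proposition \ref{itocorrection}); and (iii) the Laplacians $\Delta u$, $\Delta v$.

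First I would compute the forward Jacobian $\partial(x,y)/\partial(u,v)$ directly from (\ref{coords}) — this is elementary differentiation — and invert the $2\times 2$ matrix. The common factor appearing in every entry of the inverse is what will be packaged into the scalar function $h(u,v)$; indeed the determinant of the forward Jacobian should come out proportional to $(1-u\cos v)(1+eu+(e+u)\cos v)/(e+u)$ up to constants in $a,e$, which is precisely $h^{-1}$ up to the factor $2ae$. The two rows of the inverse Jacobian, once the scalar $h$ is extracted, should be exactly the vectors $\vec{N}(u,v)$ and $\vec{M}(u,v)$ in the statement; so step (i) is really the identification $\nabla u = h\,\vec{N}$ (as a vector in the $(B_1,B_2)$ frame, hence the sign in $-\epsilon\vec{N}\cdot(\di B_1,\di B_2)$, which is just a convention choice for the Brownian increment) and $\nabla v = h\,\vec{M}$. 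Then (ii) is the contraction $h\,\vec{N}\cdot\vec{b}$ and $h\,\vec{M}\cdot\vec{b}$: substituting the $(u,v)$-forms of $b_x,b_y$ and of $\vec N,\vec M$, and simplifying, should collapse to the stated $b_u, b_v$ (the $\mu/(2e\lambda)$ pieces), with one overall factor of $h$ pulled out front to match the display.

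The genuinely laborious step — and the one I expect to be the main obstacle — is (iii), the Laplacians. Because the coordinate transformation is non-orthogonal and rational in $u,\cos v,\sqrt{1-u^2}$, computing $\Delta u$ and $\Delta v$ and then re-expressing the answer purely in $(u,v)$ requires either a careful second-derivative chain-rule bookkeeping (writing $\Delta u = \nabla\cdot(\nabla u)$ and differentiating the already-found $\nabla u = h\vec N$) or, more cleanly, using the general formula $\Delta f = \frac{1}{\sqrt{g}}\partial_i(\sqrt{g}\,g^{ij}\partial_j f)$ for the metric $g$ induced by (\ref{coords}); the cross terms $g^{uv}$ do not vanish, so both $I_u$ and $I_v$ pick up contributions from mixed derivatives. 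The target expressions are the degree-$(\le 2)$ trigonometric polynomials in $\cos v,\sin v$ divided by $(1+eu+(e+u)\cos v)^2$ displayed in the proposition, so the strategy is to compute symbolically, clear denominators, and verify the resulting polynomial identity in the variables $u,\cos v,\sin v$ (with $\sin^2 v = 1-\cos^2 v$). In the write-up I would present this as a direct (computer-algebra-assisted) verification rather than reproduce the intermediate expressions, since they are routine but long. The remaining bookkeeping — checking that the factor $h$ can be pulled out uniformly from the drift, the correction term, and the noise coefficient, which is what makes the statement take its compact form — follows once (i)–(iii) are in hand.
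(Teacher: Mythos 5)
Your proposal is correct and follows essentially the same route as the paper, whose proof consists of the single remark that the result ``follows from applying It\^{o}'s formula to equations (\ref{coords})''; you have simply spelled out the Jacobian inversion, drift contraction and Laplacian computations that this entails. The one point worth stating explicitly in a write-up is that $\tfrac{1}{2}\Delta u = h\, I_u$ and $\tfrac{1}{2}\Delta v = h\, I_v$, i.e.\ the correction terms in the proposition carry the overall factor $h$ just as the drift and noise coefficients do, which you acknowledge in your final bookkeeping remark.
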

\begin{proof}
This follows from applying It\^{o}'s formula to equations (\ref{coords}) which define the Keplerian elliptic coordinates.
\end{proof}

\section{The limiting Nelson diffusion $\vec{X}^{\epsilon}$}
\subsection{The invariant measure and the infinite time limit}
We want to show that the (time dependent) density $\rho_{\epsilon}(\vec{x},t)$ for the diffusion $\vec{X}^{\epsilon}$ with any given initial distribution $\rho_{\epsilon}^0(\vec{x})$ will converge in the infinite time limit to a density concentrated on the Kepler ellipse. We begin by showing that the invariant measure  is this  physically correct density in the limit as $\epsilon\rightarrow 0$.

The invariant measure for the Nelson diffusion process $\vec{X}^{\epsilon,n}$ corresponding to the atomic elliptic state wave function $\psi_{\epsilon,n} = \exp(R_{\epsilon,n} + iS_{\epsilon,n})$ is given by,
\[\rho_{\epsilon,n}^{\infty}(\vec{x}):= \frac{\psi_{\epsilon,n}\psi^*_{\epsilon,n}}{||\psi_{\epsilon,n}||^2}= \frac{\exp (2R_{\epsilon,n}(\vec{x}))}{\iint \exp (2R_{\epsilon,n}(\vec{x}))\di x\di y}.\]
This density is shown in Figure \ref{exactfig2} where $n=20$ and $\epsilon^2 = \lambda/n$.
For the limiting wave function $\psi_{\epsilon} = \exp(R_{\epsilon}+iS_{\epsilon})$ the invariant measure is $\rho_{\epsilon}^{\infty}$, where
 \begin{equation}
 \label{invariantdensity}
 \rho_{\epsilon}^{\infty}(\vec{x}): = \frac{\exp (2R_{\epsilon}(\vec{x}))}{\iint \exp (2R_{\epsilon}(\vec{x}))\di x\di y}.
 \end{equation}

\begin{theorem}\label{invariantthm}
The invariant density $\rho_{\epsilon}^{\infty}$ coming from the Bohr correspondence limit of the atomic elliptic state has the Kepler ellipse $\mathcal{E}_e$ as a manifold of maxima on which it attains a constant global maximum. This maximum is sharply peaked on $\mathcal{E}_e$ as $\epsilon\rightarrow 0$.
\end{theorem}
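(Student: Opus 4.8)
The plan is to recognise $\rho_\epsilon^\infty$ as a Laplace-type (Gibbs) measure and to reduce the statement to a variational problem for an $\epsilon$-independent exponent. Inspecting the explicit formula for $R_\epsilon$ displayed just before Section~5, every quantity occurring in it ($\alpha$, $\beta$, $\tilde\alpha$, $\tilde\beta$, $|\vec x|$, $\lambda$, $\mu$) is independent of $\epsilon$, and the sole $\epsilon$-dependence is the overall factor $1/\epsilon^2$. Hence $R_\epsilon(\vec x)=\epsilon^{-2}\mathcal R(\vec x)$, where
\[
\mathcal R(\vec x)=\frac{\lambda}{2}\Big(\ln(\tilde\alpha^2+\tilde\beta^2)+2\ln\big((1+\alpha)^2+\beta^2\big)+(1-\alpha)\tilde\alpha+\beta\tilde\beta\Big)-\frac{\mu}{\lambda}|\vec x|
\]
is a fixed function on $\mathbb R^2$ (equivalently $\mathcal R=\Re\big[\lambda\ln\nu+2\lambda\ln(1+\sqrt{1-4/\nu})+\tfrac{\lambda\nu}{2}(1-\sqrt{1-4/\nu})\big]-\mu|\vec x|/\lambda$). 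Then, by (\ref{invariantdensity}),
\[
\rho_\epsilon^\infty(\vec x)=\frac{\exp\!\big(2\epsilon^{-2}\mathcal R(\vec x)\big)}{\iint \exp\!\big(2\epsilon^{-2}\mathcal R(\vec x)\big)\,\di x\,\di y},
\]
which is exactly the setting for Laplace's method: as $\epsilon\downarrow 0$ the mass piles up on $\arg\max\mathcal R$. So it suffices to prove (a) $\mathcal R$ attains its global maximum over $\mathbb R^2$ precisely on the Kepler ellipse $\mathcal E_e$, with a constant value there, and (b) this maximum is non-degenerate transverse to $\mathcal E_e$.

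For (a) and (b) I would work in the Keplerian elliptic coordinates $(u,v)$ of Section~4, where $\alpha,\beta$ collapse to (\ref{alphabetauv}) and, substituting (\ref{coords}), one gets $|\vec x|=2ae(1-u\cos v)/(e+u)$ together with closed forms for $\tilde\alpha,\tilde\beta$ (hence for $\nu=\tilde\alpha+\ii\tilde\beta$). Feeding these into $\mathcal R$ yields an explicit elementary function $\hat{\mathcal R}(u,v)$. The key structural input is the fact, established in Section~4, that the radicand $\alpha_1$ is a \emph{perfect square} on each ellipse $\mathcal E_c$, and in particular on $u=e$; this is what makes the square-root contributions to $\hat{\mathcal R}$ simplify on the Kepler ellipse. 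Using it I expect to show $\hat{\mathcal R}(e,v)$ is independent of $v$ and $\partial_u\hat{\mathcal R}(u,v)\big|_{u=e}=0$ for all $v$, so that $\mathcal E_e$ is a critical manifold on which $\mathcal R$ is constant; one would also check whether, more strongly, $\hat{\mathcal R}$ depends on $u$ alone, which would make (a) almost immediate. Computing $\partial_{uu}\hat{\mathcal R}$ along $u=e$ and verifying strict negativity supplies the transverse non-degeneracy in (b).

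It then remains to globalise, i.e.\ to exclude competing maxima. The function $\mathcal R$ is smooth away from the two singular loci: the Coulomb point $|\vec x|=0$ (i.e.\ $u=1$, $v\to 0$) and the slit $\Sigma$ (the coordinate line $u=1$), while at spatial infinity ($u\to -e$) the term $-\tfrac{\mu}{\lambda}|\vec x|\to-\infty$ forces $\mathcal R\to-\infty$. I would (i) use the $u\to -e$ behaviour to restrict the search for maximisers to a compact region of $(u,v)$-space, (ii) estimate $\mathcal R$ near $|\vec x|=0$ and near $\Sigma$ to show its values there lie strictly below $\hat{\mathcal R}(e,\cdot)$, and (iii) confirm that $u=e$ is the only interior critical manifold. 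Together with the previous step this gives (a); then (a) together with (b) and a standard tubular-neighbourhood Laplace estimate — a Gaussian transverse profile of width $O(\epsilon)$ about $\mathcal E_e$ — yields $\rho_\epsilon^\infty(U)\to 1$ for every neighbourhood $U$ of $\mathcal E_e$, which is the asserted sharp peaking.

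The main obstacle is the global comparison in the last step: ruling out spurious maxima and, above all, controlling $\mathcal R$ near the non-smooth set $\Sigma$ and near $|\vec x|=0$, where $\alpha,\beta$ blow up and the explicit formulas are least transparent. By contrast, the local identification of $\mathcal E_e$ as a non-degenerate critical manifold is conceptually routine but computationally heavy, and the Keplerian elliptic coordinates are precisely the device that renders it tractable.
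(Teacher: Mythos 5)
Your proposal follows essentially the same route as the paper: the paper's proof also rewrites $\exp(2R_{\epsilon})$ in the Keplerian elliptic coordinates $(u,v)$, observes that it equals the constant $(16/e^{2})^{\lambda/\epsilon^{2}}$ on $u=e$, checks $\partial_u \exp(2R_{\epsilon})|_{u=e}=0$ so that $\mathcal{E}_e$ is a critical manifold, asserts by "a further calculation" that this is the unique global maximum, and concludes the sharp peaking from the $\epsilon^{-2}$ scaling of the exponent. Your version merely makes explicit the Laplace-method framing and the global comparison near $\Sigma$, the origin and infinity, which the paper leaves to the reader.
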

\begin{proof}
We can write $R_{\epsilon}$ in terms of Keplerian elliptic coordinates $(u,v)$ giving,
\begin{eqnarray*}
\exp(2 R_{\epsilon})\! &  = & \! 16^{\frac{\lambda
   }{\epsilon ^2}}
\left(\frac{1+e
   u+\sqrt{\left(1-e^2\right)
   \left(1-u^2\right)}}{e+u}\right)^{\frac{2 \lambda
   }{\epsilon ^2}}\times
\\
& &  \exp\left(\frac{2 a \mu
   \left(u-e+\left(e u -1+\sqrt{\left(1-e^2\right)
   \left(1-u^2\right)}\right) \cos v\right)}{(e+u)
   \epsilon ^2 \lambda }\right).
   \end{eqnarray*}
   On $\mathcal{E}_e$ this is constant,
   \[\exp(2 R_{\epsilon}(e,v))=\left(\frac{16}{e^2}\right)^{\frac{\lambda }{\epsilon ^2}}\quad\Rightarrow\quad \left.\frac{\partial}{\partial v} \exp(2 R_{\epsilon}(u,v))\right|_{u=e} = 0.\]
   Moreover, a simple calculation produces,
\[\left.\frac{\partial}{\partial u} \exp(2 R_{\epsilon}(u,v))\right|_{u=e} = 0,\]
so that $\mathcal{E}_e$ is  a manifold of stationary points. A further calculation shows that this is the unique maximum for this function. Clearly, taking into account the normalisation factor in $\rho_{\epsilon}^{\infty}$, the invariant density will become sharply peaked as $\epsilon\rightarrow 0$, as shown in Figure \ref{rhopic}.
\end{proof}
 \begin{figure}
 \centering
\resizebox{130mm}{!}{\includegraphics{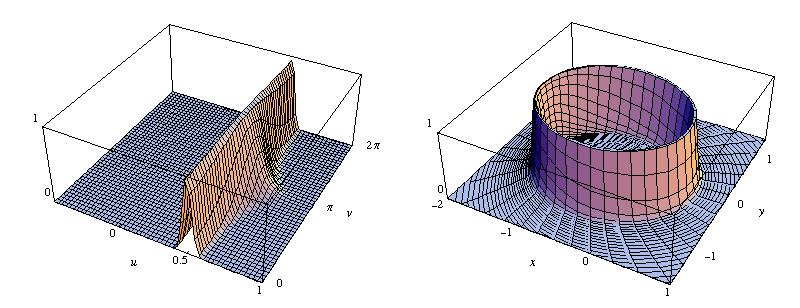}}
\caption{$\rho^{\infty}_{\epsilon}$ in $(u,v)$ and $(x,y)$ space for $e=0.5$ and $\epsilon=0.1$.}
\label{rhopic}
\end{figure}

We now want to show that $\rho_{\epsilon}(\vec{x},t)$ will converge to $\rho_{\epsilon}^{\infty}(\vec{x})$ in the infinite time limit. For this we introduce a similarity transform first developed in \cite{MR1189398} and applicable to any Nelson diffusion
$\vec{X}$ with diffusion constant $\epsilon^2/2$. The generator $\mathcal{G}$ of the diffusion $\vec{X}$ is the operator,
\[\mathcal{G} = \hf\epsilon^2\Delta +\vec{b}\cdot\nabla,\]
where $\vec{b} = \epsilon^2\nabla(R+S)$ for suitable real functions $R$ and $S$ such that the corresponding wave function can be written $\psi = \exp(R+\ii S)$.

\begin{lemma}\label{transform}
For the generator $\mathcal{G}$ of a Nelson diffusion process $\vec{X}$ with diffusion constant $\epsilon^2/2$,
\[\mathcal{G} = \exp(-(R+S)) \left(-\tilde{H}/\epsilon^2\right)\exp(R+S),\]
where $\tilde{H}$ is the formal Hamiltonian,
\[\tilde{H} = \hf\left(-\epsilon^4\Delta +\vec{b}^2 +\epsilon^2\nabla\cdot\vec{b}\right),\]
and $\vec{b} = \epsilon^2\nabla(R+S)$. Moreover,  for $\tilde{\psi} = \exp(R-S)$,
\[\tilde{H}\tilde{\psi} = 0,\]
so that $\tilde{\psi}$ is the formal ground state for $\tilde{H}$.
\end{lemma}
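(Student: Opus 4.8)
The plan is to prove the two assertions by a direct operator computation, treating $\mathcal{G}$, $\tilde H$ and the conjugating factors as formal differential operators acting on smooth functions, without worrying (at this stage) about domains or self-adjointness. First I would compute the conjugated operator $\exp(-(R+S))\bigl(-\tilde H/\epsilon^2\bigr)\exp(R+S)$ acting on a test function $f$. The only non-trivial piece is the Laplacian term: using the identity $\Delta(gf) = g\Delta f + 2\nabla g\cdot\nabla f + f\Delta g$ with $g = \exp(R+S)$, one gets
\[
\exp(-(R+S))\,\Delta\bigl(\exp(R+S)f\bigr) = \Delta f + 2\nabla(R+S)\cdot\nabla f + \bigl(|\nabla(R+S)|^2 + \Delta(R+S)\bigr)f.
\]
Since $\vec b = \epsilon^2\nabla(R+S)$, the middle term is $\tfrac{2}{\epsilon^2}\vec b\cdot\nabla f$, and $|\nabla(R+S)|^2 = \epsilon^{-4}\vec b^2$, $\Delta(R+S) = \epsilon^{-2}\nabla\cdot\vec b$. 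Multiplying through by $-\tfrac{\epsilon^4}{2}\cdot\tfrac{1}{\epsilon^2} = -\tfrac{\epsilon^2}{2}$ (the coefficient of $-\Delta$ inside $-\tilde H/\epsilon^2$) and collecting the remaining $\tfrac12(\vec b^2 + \epsilon^2\nabla\cdot\vec b)$ contribution, the potential-type terms cancel exactly and one is left with $\tfrac{\epsilon^2}{2}\Delta f + \vec b\cdot\nabla f = \mathcal{G}f$. This establishes the factorisation.

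For the second assertion, I would simply set $f \equiv 1$ in the conjugation identity and use that $\exp(R+S)\cdot 1 = \exp(R+S) = \tilde\psi\cdot\exp(2S)$... actually more directly: observe $\tilde\psi = \exp(R-S) = \exp(-(R+S))\cdot\exp(2R) \cdot \ldots$ — cleaner is to note that $\exp(-(R+S))(-\tilde H/\epsilon^2)\exp(R+S)$ annihilates constants because $\mathcal{G}1 = 0$ (the generator of a diffusion kills constants), hence $(-\tilde H/\epsilon^2)\exp(R+S) = \exp(R+S)\,\mathcal{G}1 = 0$, i.e. $\tilde H\exp(R+S) = 0$. But the claim is $\tilde H\tilde\psi = 0$ with $\tilde\psi = \exp(R-S)$, not $\exp(R+S)$. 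The resolution is that $\tilde H$ is symmetric under $S\mapsto -S$ in the sense that $\vec b^2$ and $\nabla\cdot\vec b$ are unchanged in form if one also replaces $\vec b$ by $\vec b_- = \epsilon^2\nabla(S-R)$ composed with the right sign — so I would instead run the conjugation with $R-S$ in place of $R+S$, using the time-reversed drift $\vec b_- = \epsilon^2\nabla(S-R)$ from equation (\ref{fwdbckwd}), whose associated backward generator also annihilates constants; then $\tilde H$ written via $\vec b_-$ coincides with $\tilde H$ written via $\vec b$ because $\vec b_-^2 + \epsilon^2\nabla\cdot\vec b_- $ and $\vec b^2 + \epsilon^2\nabla\cdot\vec b$ differ by $\epsilon^2\Delta(2R) \cdot(\ldots)$ — one checks the combination $\hf(\vec b^2 + \epsilon^2\nabla\cdot\vec b)$ equals $\hf(\vec b_-^2 + \epsilon^2\nabla\cdot\vec b_-)$ precisely when the real part of the Schrödinger equation (\ref{acc2}) holds with the relevant potential absorbed, giving $\tilde H\exp(R-S) = 0$.

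The main obstacle is exactly this second point: the clean cancellation in $\tilde H\tilde\psi = 0$ is \emph{not} a formal algebraic identity valid for arbitrary $R,S$, but relies on $R$ and $S$ being the modulus/phase of an actual (or formal) solution of the stationary Schrödinger equation, so that the Hamilton–Jacobi-type relation between $|\nabla R|^2$, $|\nabla S|^2$ and $\Delta R$ from equation (\ref{acc2}) can be invoked to kill the residual terms. I would therefore state carefully which equation $(R,S)$ satisfies (the stationary version of (\ref{acc2}), equivalently that $\psi = \exp(R+\ii S)$ solves $\tilde H$'s associated Schrödinger problem), and then verify $\tilde H\exp(R-S) = 0$ by expanding $\tilde H\exp(R-S)$ directly: $-\epsilon^4\Delta\exp(R-S) = -\epsilon^4\bigl(|\nabla(R-S)|^2 + \Delta(R-S)\bigr)\exp(R-S)$, and the bracket $\hf(-\epsilon^4(|\nabla(R-S)|^2 + \Delta(R-S)) + \vec b^2 + \epsilon^2\nabla\cdot\vec b)$ must be shown to vanish, which reduces after substituting $\vec b = \epsilon^2\nabla(R+S)$ to $-\epsilon^4(|\nabla R|^2 - |\nabla S|^2 + \Delta R) \cdot(\ldots)$ plus cross terms — and this is precisely $0$ by (\ref{acc2}) in the stationary case. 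Everything else is routine differentiation; I would present the Laplacian expansion once and let the reader fill in the bookkeeping.
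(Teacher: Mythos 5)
Your proof of the conjugation identity $\mathcal{G} = \exp(-(R+S))(-\tilde{H}/\epsilon^2)\exp(R+S)$ is correct and is exactly the paper's (very terse) calculation: expand $\Delta(\exp(R+S)f)$ by the product rule and observe that the zeroth-order terms cancel against $\vec{b}^2$ and $\epsilon^2\nabla\cdot\vec{b}$. This part is a pure algebraic identity, valid for arbitrary $R,S$, as you say.

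The second half contains a genuine error. Expanding directly,
\[
\tilde{H}\exp(R-S) = \hf\exp(R-S)\Bigl(-\epsilon^4\bigl(|\nabla(R-S)|^2+\Delta(R-S)\bigr)+\epsilon^4\bigl(|\nabla(R+S)|^2+\Delta(R+S)\bigr)\Bigr),
\]
and here the terms $|\nabla R|^2$, $|\nabla S|^2$ and $\Delta R$ all cancel \emph{identically}; what survives is
\[
\tilde{H}\exp(R-S) = \epsilon^4\exp(R-S)\bigl(2\nabla R\cdot\nabla S+\Delta S\bigr) = \epsilon^4\exp(-R-S)\,\nabla\cdot\bigl(\exp(2R)\nabla S\bigr).
\]
So the residual is killed by the \emph{stationary continuity equation} $\nabla\cdot(\exp(2R)\nabla S)=0$ — the imaginary part of the Schr\"{o}dinger equation, equivalently the statement that $\rho=\exp(2R)$ is the invariant density — and \emph{not} by the real part (\ref{acc2}) as you claim. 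Indeed (\ref{acc2}) could not possibly do the job: in the stationary case it equates $|\nabla R|^2-|\nabla S|^2+\Delta R$ to $2(V-E)/\epsilon^4$, which is not zero, and neither $V$ nor $E$ appears in $\tilde{H}$. Your intermediate attempt via the backward drift also has a sign slip: with $\vec{b}_-=\epsilon^2\nabla(S-R)$ the combinations $\vec{b}^2+\epsilon^2\nabla\cdot\vec{b}$ and $\vec{b}_-^2+\epsilon^2\nabla\cdot\vec{b}_-$ differ by $\epsilon^4(4\nabla R\cdot\nabla S+2\Delta R)$, which does not vanish; you need $\epsilon^2\nabla(R-S)=-\vec{b}_-$ for the difference to reduce to the continuity combination. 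A cleaner route to the second claim, which you half-spotted: since $\tilde{H}$ is formally symmetric, the conjugation identity gives $\mathcal{G}^{*}=\exp(R+S)(-\tilde{H}/\epsilon^2)\exp(-(R+S))$, and the invariance of $\rho=\exp(2R)$, i.e.\ $\mathcal{G}^{*}\exp(2R)=0$, is then literally the statement $\tilde{H}\exp(R-S)=0$. You were right to insist that this half of the lemma is not a formal identity for arbitrary $R,S$ — but the hypothesis it uses is the continuity equation, not the Hamilton--Jacobi equation.
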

\begin{proof}
This follows from a calculation using the identity,
\[\Delta(fg) = f\Delta g + g\Delta f +2\nabla f\cdot\nabla g,\]
together with the real valued nature of $E,V,R$ and $S$.
\end{proof}

Recall that the atomic elliptic state wave function $\psi_{\epsilon,n}$ associated with the diffusion $\vec{X}^{\epsilon,n}$
  is an  exact solution for the Schr\"{o}dinger equation. We begin by showing how for the diffusion $\vec{X}^{\epsilon,n}$ with any given initial distribution $\rho_{\epsilon,n}^0(\vec{x})$ the (time dependent) density $\rho_{\epsilon,n}(\vec{x},t)$ should converge in the infinite time limit to the invariant measure $\rho^{\infty}_{\epsilon,n}(\vec{x})$.

The  density $\rho_{\epsilon,n}(\vec{x},t)$ satisfies the forward Kolmogorov equation (\ref{forward}),
\[\frac{\partial\rho_{\epsilon,n}(\vec{x},t)}{\partial t} = \mathcal{G}^*\rho_{\epsilon,n}(\vec{x},t)
, \quad \rho_{\epsilon,n}(\vec{x},0) = \rho^0_{\epsilon,n}(\vec{x}).\]
Therefore,  using Lemma \ref{transform}, we can write,
\[\rho_{\epsilon,n}(\vec{x},t) =\left( \exp(R_{\epsilon,n}+S_{\epsilon,n})\exp\left(-t\tilde{H}_{\epsilon,n}/\epsilon^2\right)
\exp(-(R_{\epsilon,n}+S_{\epsilon,n}))\rho^0_{\epsilon,n}\right)(\vec{x}),\]
where  $\tilde{H}_{\epsilon,n} = \hf(-\epsilon^4\Delta+\vec{b}_{\epsilon,n}^2+\epsilon^2\nabla\cdot\vec{b}_{\epsilon,n})$ is a formal Hamiltonian associated with the diffusion $\vec{X}_{\epsilon,n}$. For fixed $n$, the work of Blanchard and Golin \cite{MR882809} guarantees that with probability one the diffusion $\vec{X}^{\epsilon,n}$ cannot reach the nodal surfaces in a finite time because they are infinitely repulsive.  Therefore, for this case we do not need to worry about specifying a boundary condition at these points.

 However, if we are to consider the case $n\rightarrow\infty$ which gives the limiting diffusion $\vec{X}^{\epsilon}$, we need to specify a boundary condition for $\rho_{\epsilon}(\vec{x},t)$ across the limiting singularity $\Sigma$. If we assume that the forward Kolmogorov equation is valid on $\Sigma$ then,
\[\hf\mathrm{disc}_{\Sigma}\left(\epsilon^2\frac{\partial}{\partial \vec{n}}\ln \rho_{\epsilon}(\vec{x},t)\right) = \mathrm{disc}_{\Sigma}(\vec{b}\cdot\vec{n}),\]
where $\vec{n}$ is the unit normal to the singularity $\Sigma$ and  $\mathrm{disc}_{\Sigma}$ is the discontinuity across $\Sigma$.
The corresponding boundary condition for $\mathcal{D}(\tilde{H})$ (where $\mathcal{D}$ denotes the domain) reduces to,
\[\mathrm{disc}_{\Sigma}\left(\frac{\partial \psi}{\partial y}\bigg/\psi\right) = \mathrm{disc}_{\Sigma}(b_y),\quad \psi\in\mathcal{D}(\tilde{H}).\]
The putative ground state for $\tilde{H}_{\epsilon}= \hf(-\epsilon^4\Delta+\vec{b}^2+\epsilon^2\nabla\cdot\vec{b})$ is $\tilde{\psi}_{\epsilon} = \exp(R_{\epsilon}-S_{\epsilon})$, which satisfies our boundary condition.

Assuming that there is a spectral gap, when the limiting  $\tilde{H}_{\epsilon}$ is self adjoint and satisfies this boundary condition with a genuine ground state $\tilde{\psi}_{\epsilon}$   we obtain,
\[\rho_{\epsilon}(\vec{x},t)\rightarrow c_{\epsilon}\exp(2 R_{\epsilon}(\vec{x})),\]
as $t\rightarrow\infty$,
where
\[c_{\epsilon} = \frac{\iint\rho_{\epsilon}^0(\vec{x})\exp(-2S_{\epsilon}(\vec{x}))\di x \di y}{\iint\exp(2(R_{\epsilon}-S_{\epsilon})(\vec{x}))\di x \di y}.\]
This suggests that the Kepler ellipse emerges in the infinite time limit.

\subsection{The limiting diffusion  as $\epsilon\rightarrow 0$.}
Using the methods of Veretenikov \cite{MR890945}, and Blanchard and Golin \cite{MR882809}, we can prove the existence and uniqueness of solutions to the It\^{o} stochastic differential equations defining the limiting Nelson diffusion $\vec{X}^{\epsilon}$ even though it has a singular drift field.
However, we cannot easily control the limit of $\vec{X}^{\epsilon}$ as $\epsilon\rightarrow 0$.

We must therefore restrict $\vec{X}^{\epsilon}$ to sample paths  avoiding the singularity $\Sigma$.
We do this by  estimating the probability that up to a fixed time $t$ the process $\vec{X}^{\epsilon}(s)$ avoids the interior of a small ellipse surrounding the singularity.
Then, using the methods of Freidlin and Wentzell \cite{MR1652127}, we show how to obtain the underlying deterministic system $\vec{X}^0$ as the limit of $\vec{X}^{\epsilon}$ as $\epsilon\rightarrow 0$.

\begin{defn} For small $\delta>0$, define the first hitting time,
\[\tau_{\svec{x}}(\mathrm{int}(\mathcal{E}_{1-\delta})) = \inf\left\{s>0:\vec{X}^{\epsilon}(s)\in\mathrm{int}(\mathcal{E}_{1-\delta}),\,{\vec{X}}^{\epsilon}(0)=\vec{x}\right\},\]
where $\mathcal{E}_{1-\delta}$ is a small ellipse which surrounds the singularity $\mathcal{E}_1$.
\end{defn}

Since $\mathrm{int}(\mathcal{E}_{1-\delta})$ is open, it follows that,
\[\pr{\tau_{\svec{x}}(\mathrm{int}(\mathcal{E}_{1-\delta}))>t} = \lim\limits_{\kappa\uparrow \infty} \ex{-\kappa\int_0^t\chi(\mathrm{int}(\mathcal{E}_{1-\delta}))(\vec{X}^{\epsilon}(s))\di s}.\]
Moreover, for $\vec{x}\in\mathrm{ext}(\mathcal{E}_{1-\delta})$ the drift $\vec{b}(\vec{x})$ is Lipschitz continuous in space.
The similarity transform in Lemma \ref{transform} leads naturally to the following conjecture for small $\epsilon^2$.

\begin{conj}
\[\pr{\tau_{\svec{x}}(\mathrm{int}(\mathcal{E}_{1-\delta}))>t}\!
\approx\!
\exp\left(-(R_{\epsilon}+S_{\epsilon})\right)\left(\exp\left(-t\tilde{H}_{\epsilon}^D/\epsilon^2\right)\exp(R_{\epsilon}+S_{\epsilon})\right)(\vec{x}),\]
where,
\[\tilde{H}_{\epsilon}^D = \lim\limits_{\kappa\uparrow\infty}\left(\tilde{H}_{\epsilon}+\kappa\chi(\mathrm{int}(\mathcal{E}_{1-\delta}))\right),\]
is the Dirichlet Hamiltonian with Dirichlet boundary conditions on ellipse $\mathcal{E}_{1-\delta}$, with,
\[\tilde{H}_{\epsilon} = \hf\left(-\epsilon^4\Delta+\epsilon^2\nabla\cdot\vec{b}+\vec{b}^2\right),\]
for $\vec{b}  = \epsilon^2\nabla(R_{\epsilon}+S_{\epsilon})$, $\tilde{H}_{\epsilon}$ being the self adjoint extension of $\tilde{H}_{\epsilon}$ with domain $C_0^{\infty}(\mathbb{R}^2\setminus\Sigma)$.
\end{conj}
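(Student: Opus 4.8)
The plan is to promote the heuristic indicated just before the statement to a rigorous chain of identities, the only genuinely delicate point being the realisation of the transformed operator $\tilde{H}_{\epsilon}$ as a bona fide self-adjoint Hamiltonian.

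\emph{Step 1: Feynman--Kac.} Starting from the Kac-type identity for $\pr{\tau_{\svec{x}}(\mathrm{int}(\mathcal{E}_{1-\delta}))>t}$ recorded just before the statement, I would recognise the expectation, for each fixed $\kappa$, as $\left(\exp(t\mathcal{G}^{\kappa})\mathbf{1}\right)(\vec{x})$ with $\mathcal{G}^{\kappa}=\mathcal{G}-\kappa\chi(\mathrm{int}(\mathcal{E}_{1-\delta}))$, by the Feynman--Kac formula with killing potential $\kappa\chi$. This step should be clean: as already observed, $\vec{b}$ is Lipschitz on $\mathrm{ext}(\mathcal{E}_{1-\delta})$ and has linear growth at infinity, so $\vec{X}^{\epsilon}$ does not explode before time $t$, the sub-Markovian semigroup $\exp(t\mathcal{G}^{\kappa})$ is well defined, and the probabilistic representation is classical; the passage $\kappa\uparrow\infty$ is then the monotone limit identifying $\lim_{\kappa}\exp(t\mathcal{G}^{\kappa})\mathbf{1}$ with the semigroup of the process killed on hitting $\mathrm{int}(\mathcal{E}_{1-\delta})$.

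\emph{Step 2: similarity transform and removal of the barrier.} Applying Lemma \ref{transform} and using that the multiplication operator $\chi$ commutes with conjugation by $\exp(R_{\epsilon}+S_{\epsilon})$, one rewrites
\[\mathcal{G}^{\kappa}=\exp(-(R_{\epsilon}+S_{\epsilon}))\left(-\frac{1}{\epsilon^2}\left(\tilde{H}_{\epsilon}+\epsilon^2\kappa\chi(\mathrm{int}(\mathcal{E}_{1-\delta}))\right)\right)\exp(R_{\epsilon}+S_{\epsilon}),\]
whence $\exp(t\mathcal{G}^{\kappa})=\exp(-(R_{\epsilon}+S_{\epsilon}))\exp\left(-\tfrac{t}{\epsilon^2}(\tilde{H}_{\epsilon}+\epsilon^2\kappa\chi)\right)\exp(R_{\epsilon}+S_{\epsilon})$, applied to $\mathbf{1}$ and evaluated at $\vec{x}$. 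Letting $\kappa\uparrow\infty$, so that $\epsilon^2\kappa\uparrow\infty$ as well, $\tilde{H}_{\epsilon}+\epsilon^2\kappa\chi$ should converge in strong resolvent sense to the Dirichlet Hamiltonian $\tilde{H}_{\epsilon}^{D}$ on $\mathrm{ext}(\mathcal{E}_{1-\delta})$ — the standard fact that a tall potential barrier enforces a Dirichlet condition, for which I would invoke the monotone-convergence theory of Schr\"{o}dinger quadratic forms — and by Trotter--Kato the corresponding semigroups converge strongly. Combining Steps 1 and 2 then produces the asserted formula with $\approx$ replaced by $=$, at the level of the formal operators.

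\emph{Step 3: the real content and the main obstacle.} The substance of the conjecture is that $\tilde{H}_{\epsilon}=\hf(-\epsilon^4\Delta+\epsilon^2\nabla\cdot\vec{b}+\vec{b}^2)$ genuinely admits a self-adjoint realisation making the above manipulations legitimate, and this is where the difficulty lies. On $\Sigma$ the potential $\vec{b}^2$ is unbounded while $\nabla\cdot\vec{b}$ carries the finite-jump singularity identified in Section 3, so one must show that the restriction of $\tilde{H}_{\epsilon}$ to $C_0^{\infty}(\mathbb{R}^2\setminus\Sigma)$ has the self-adjoint extension singled out by the boundary condition $\mathrm{disc}_{\Sigma}(\partial_y\psi/\psi)=\mathrm{disc}_{\Sigma}(b_y)$, that $\tilde{\psi}_{\epsilon}=\exp(R_{\epsilon}-S_{\epsilon})$ lies in its form domain, and that $\exp(R_{\epsilon}+S_{\epsilon})$ sits in a space on which $\exp(-t\tilde{H}_{\epsilon}^{D}/\epsilon^2)$ acts — here the decay $R_{\epsilon}\sim-\mu|\vec{x}|/(\lambda\epsilon^2)$ at infinity is the relevant input. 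I expect the behaviour near the left-hand endpoint of $\Sigma$, where the Keplerian coordinate curves bunch for large $e$, to be the most delicate point; this localised irregularity, together with the need to control everything uniformly enough eventually to let $\epsilon\downarrow0$, is precisely why the statement is offered as an approximation valid for small $\epsilon^2$ rather than as an identity.
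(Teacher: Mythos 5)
The statement you are addressing is presented in the paper as a \emph{conjecture}: the authors supply no proof, say only that the similarity transform of Lemma \ref{transform} ``leads naturally'' to it, and explicitly defer a proof to a future publication using results of Wang \cite{Wang}. Your Steps 1 and 2 are a faithful elaboration of precisely the heuristic the paper records immediately before the statement --- the Kac identity $\pr{\tau_{\svec{x}}(\mathrm{int}(\mathcal{E}_{1-\delta}))>t}=\lim_{\kappa\uparrow\infty}\ex{\exp\left(-\kappa\int_0^t\chi(\mathrm{int}(\mathcal{E}_{1-\delta}))(\vec{X}^{\epsilon}(s))\di s\right)}$, conjugation by $\exp(R_{\epsilon}+S_{\epsilon})$ via Lemma \ref{transform}, and the monotone barrier limit producing the Dirichlet operator --- so on that front you and the authors are in complete agreement about the intended route.

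The difficulty is that your Step 3 is a diagnosis of the obstruction rather than a resolution of it, and that obstruction is exactly why the statement remains a conjecture. Concretely: (i) the Feynman--Kac representation in Step 1 needs the diffusion and the sub-Markovian semigroup to be well defined on all of $\mathbb{R}^2$, whereas the Lipschitz property you invoke holds only on $\mathrm{ext}(\mathcal{E}_{1-\delta})$ and the drift is singular on $\Sigma\subset\mathrm{int}(\mathcal{E}_{1-\delta})$; (ii) the intertwining in Step 2 is an algebraic identity on smooth functions away from $\Sigma$, and promoting it to an identity of semigroups requires first knowing that the restriction of $\tilde{H}_{\epsilon}$ to $C_0^{\infty}(\mathbb{R}^2\setminus\Sigma)$ has the distinguished self-adjoint extension encoded by the jump condition $\mathrm{disc}_{\Sigma}(\partial_y\psi/\psi)=\mathrm{disc}_{\Sigma}(b_y)$, that this extension has a spectral gap, and that $\exp(R_{\epsilon}+S_{\epsilon})$ lies in a space on which $\exp(-t\tilde{H}_{\epsilon}^D/\epsilon^2)$ acts. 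The paper itself records the features that prevent the ``standard'' form-perturbation and monotone-convergence arguments you appeal to from applying off the shelf: $\nabla\cdot\vec{b}$ has a finite jump across $\Sigma$ and is \emph{not bounded below} near the endpoints of $\Sigma$, where $|\nabla\cdot\vec{b}|\sim r_q^{-1/2}$. None of this is carried out in your proposal, so it should be read as a programme --- consistent with, and essentially identical to, the authors' own intended programme --- rather than as a proof; in particular your claim that Steps 1 and 2 already yield the formula ``with $\approx$ replaced by $=$'' overstates what has been established.
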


For our desired result we want there to be a spectral gap for $\tilde{H}_{\epsilon}^{D}$. In this connection it is worth noting that if $\nabla\cdot\vec{b}$ is bounded below and $\vec{b}^2$ has a unique global minimum at $\vec{x}_{\mathrm{min}}$, then,
 \[\tilde{H}_{\epsilon}\sim\hf\left(-\epsilon^4\Delta +\epsilon^2 \nabla\cdot\vec{b}(\vec{x}_{\mathrm{min}})+\vec{b}^2\right),\]
 and according to a celebrated result of Simon \cite{MR708966, MR750717, MR772619, MR795520}, $\sigma(\tilde{H}_{\epsilon}/\epsilon^2) \sim \sigma (H_0)$ where,
\[\quad H_0 = \hf\Delta + \hf \frac{\partial^2\tilde{V}}{\partial x_i \partial x_j}(\vec{x}-\vec{x}_{\mathrm{min}})_i (\vec{x}-\vec{x}_{\mathrm{min}}) _j+\hf\nabla\cdot\vec{b}(\vec{x}_{\mathrm{min}})+\tilde{V}(\vec{x}_{\mathrm{min}}), \]
with $\tilde{V} = \vec{b}^2/2$.

For the limiting diffusion $\vec{X}^{\epsilon}$,
\begin{equation}\label{divb}\nabla_{\svec{x}}\cdot\vec{b} = \frac{\mu(e+u) \left(e u+(e+u) (\cos v+\sin v)+\sqrt{(1-e^2) (1-u^2)}+1\right)}{4 a e\lambda (u \cos v-1) (e u+(e+u) \cos v+1)},
\end{equation}
(where $-e<u<1$) and,
\begin{equation}\label{speed}
\vec{b}^2 =\frac{\mu
   (e \cos v+1)\left(\sqrt{\left(1-e^2\right) \left(1-u^2\right)}-1\right) }{ae^2 (u \cos v-1)}.
   \end{equation}
   Moreover, from equation (\ref{speed}), $\vec{b}^2$ is symmetric about the $y$ axis ($v\mapsto -v$) and so is continuous across the singularity $\Sigma$ excluding the point $|\vec{x}|=0$. It can be  shown that in polar coordinates, $\vec{b}^2 = \mathrm{O}(r^{-\hf})$ as $r\sim 0$ uniformly for $\theta\in(0,2\pi)$ and also that $\vec{b}^2$ has a unique global minimum when $u=e^2/(2-e^2)$ and $v=\pi$ corresponding to the point,
   \[\vec{x}_{\mathrm{min}}= \left(\frac{-4a}{(1+e)(2-e)},0\right),\] at which,
    \[\vec{b}^2(\vec{x}_{\mathrm{min}})= (1-e)\frac{\mu}{2a}.\]

   However, from equation (\ref{divb}), it can be shown that  $\nabla\cdot\vec{b}$ is not so well behaved. It has a jump discontinuity across $\Sigma$ (where u=1) and also blows up at each end of $\Sigma$ (where $v=0$ and  $v=\pi$). Firstly, in polar coordinates,  $r \nabla\cdot\vec{b}\rightarrow -1$ as $r\rightarrow 0$ uniformly in $\theta$, but also $|\nabla\cdot\vec{b}|\sim r_{q}^{-1/2}$ as $r_q \rightarrow 0$ but not uniformly in $\theta$ where $r_q= |\vec{x}-( -4 a e /(1+e),0)|$ ,  meaning that the divergence is not bounded below at the points $(x,y) =( -4 a e /(1+e),0)$ and $(x,y) = (0,0)$.
Nevertheless, we hope to publish a proof of the above conjecture in the near future using the above and results of  Wang \cite{Wang}.

Now take a sequence of real numbers $\epsilon_j$ such that $\epsilon_j\rightarrow 0$ as $j\rightarrow\infty$. Then define the sequence of limiting Nelson diffusions $ \vec{X}^{\epsilon_j}$ by,
\[
\di \vec{X}^{\epsilon_j}(s)= \vec{b}( \vec{X}^{\epsilon_j}(s))\di s+\epsilon_j\di \vec{B}(s),\quad s\in(0,t),\quad  \vec{X}^{\epsilon_j}(0) =\vec{x},
\]
and the process $\vec{X}^0$ by,
\begin{equation}\label{zero_process}
\di\vec{X}^0(s) = \vec{b}(\vec{X}^0(s))\di s,\quad s\in(0,t),\quad \vec{X}^0(0) =\vec{x},
\end{equation}
where $\vec{b}$ is as in equations (\ref{bb2d}).
Set,
\[A^j_t = \left\{\omega:\quad\tau_{\svec{x}}^j(\mathrm{int}(\mathcal{E}_{1-\delta}))>t\right\},\]
where $\tau_{\svec{x}}^j$ is the first hitting time for $\vec{X}^{\epsilon_j}$. Then we have the following result which is essentially due to Freidlin and Wentzell \cite{MR1652127}:
\begin{theorem}
If $\sum\epsilon_j^2<\infty$, then,
\[\pr{\left.\vec{X}^{\epsilon_j}(s)\rightarrow\vec{X}^0(s),\, j\rightarrow\infty, \mbox{ uniformly }s\in(0,t)\right|A^j_t}=1.\]
\end{theorem}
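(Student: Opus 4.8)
The plan is a pathwise Gronwall comparison between $\vec{X}^{\epsilon_j}$ and $\vec{X}^0$ on the event $A^j_t$ on which the noisy path avoids $\mathrm{int}(\mathcal{E}_{1-\delta})$, followed by a Borel--Cantelli argument that consumes the hypothesis $\sum_j\epsilon_j^2<\infty$; this is the scheme of Freidlin and Wentzell \cite{MR1652127} adapted to our singular drift. The structural input to be recorded first is that $\vec{b}$ of equations (\ref{bb2d}) is bounded and \emph{uniformly} Lipschitz on the closed exterior $\overline{\mathrm{ext}(\mathcal{E}_{1-\delta})}$: away from $\Sigma$ and the origin, both of which lie inside $\mathcal{E}_{1-\delta}$, the drift is smooth, and from (\ref{alphabeta2d}) one has $\alpha\to 1$, $\beta\to 0$, hence $\vec{b}\to -(\mu/\lambda)\vec{x}/|\vec{x}|$ as $|\vec{x}|\to\infty$, so $\vec{b}$ and $\nabla\vec{b}$ stay bounded there (the earlier remark that $\vec{b}$ is Lipschitz on $\mathrm{ext}(\mathcal{E}_{1-\delta})$ gives this locally; the decay at infinity upgrades it to a single Lipschitz constant $L$). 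I would also restrict attention to starting points $\vec{x}$ for which the deterministic orbit $\vec{X}^0$ remains in $\overline{\mathrm{ext}(\mathcal{E}_{1-\delta})}$ throughout $[0,t]$ --- the orbits of interest, outside the Kepler ellipse, whose avoidance of $\Sigma$ is the subject of Section~6 --- or equivalently replace $t$ throughout by its minimum with the first hitting time of $\mathrm{int}(\mathcal{E}_{1-\delta})$ by $\vec{X}^0$.

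Next, subtracting the integral forms of the equations for $\vec{X}^{\epsilon_j}$ and $\vec{X}^0$, which share the Brownian motion $\vec{B}$, gives for $\vec{Y}^j(s):=\vec{X}^{\epsilon_j}(s)-\vec{X}^0(s)$ that
\[
\vec{Y}^j(s)=\int_0^s\bigl(\vec{b}(\vec{X}^{\epsilon_j}(r))-\vec{b}(\vec{X}^0(r))\bigr)\di r+\epsilon_j\vec{B}(s).
\]
On $A^j_t$ both arguments of $\vec{b}$ lie in $\overline{\mathrm{ext}(\mathcal{E}_{1-\delta})}$ for $r\le t$, so
\[
\sup_{r\le s}|\vec{Y}^j(r)|\;\le\;L\int_0^s\sup_{r'\le r}|\vec{Y}^j(r')|\di r+\epsilon_j\sup_{r\le t}|\vec{B}(r)|,\qquad s\le t,
\]
and Gronwall's inequality yields the deterministic-looking bound $\sup_{s\le t}|\vec{X}^{\epsilon_j}(s)-\vec{X}^0(s)|\le e^{Lt}\epsilon_j\sup_{s\le t}|\vec{B}(s)|$ on $A^j_t$.

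Then, to pass from this to almost sure convergence, set $Z:=\sup_{s\le t}|\vec{B}(s)|$; Doob's $L^2$ maximal inequality applied coordinatewise gives $\ex{Z^2}\le 4dt<\infty$, whence $\sum_j\ex{\epsilon_j^2Z^2}=4dt\sum_j\epsilon_j^2<\infty$ and, by Tonelli, $\sum_j\epsilon_j^2Z^2<\infty$ almost surely, so $\epsilon_jZ\to 0$ almost surely. (Had the $\vec{X}^{\epsilon_j}$ been driven by independent Brownian motions $\vec{B}^j$ the identical computation with $Z_j:=\sup_{s\le t}|\vec{B}^j(s)|$ applies, and it is precisely there that the summability hypothesis is indispensable.) Combining with the previous step, on $A^j_t$ one has $\sup_{s\le t}|\vec{X}^{\epsilon_j}(s)-\vec{X}^0(s)|\le e^{Lt}\epsilon_jZ\to 0$ almost surely; reading the $j$-dependent conditioning in the statement as the assertion that uniform convergence holds almost surely on the event that the paths avoid $\mathrm{int}(\mathcal{E}_{1-\delta})$ up to time $t$, this closes the argument.

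The hard part will be not the Gronwall--Borel--Cantelli skeleton but the control of the paths near the singularity. To state the theorem without presupposing that $\vec{X}^0$ stays in the Lipschitz region, one needs the companion facts that the deterministic orbit avoids $\Sigma$ (the content of the stability analysis in Section~6) and that the noisy path's entry into $\mathrm{int}(\mathcal{E}_{1-\delta})$ is governed by the Dirichlet-Hamiltonian and spectral-gap estimates conjectured above; only then does conditioning on $A^j_t$ become an honest and nonempty restriction. A subsidiary but genuinely necessary technical point is the uniform Lipschitz property of $\vec{b}$ on the unbounded domain $\overline{\mathrm{ext}(\mathcal{E}_{1-\delta})}$: one must check the decay at infinity of $\alpha-1$ and $\beta$ and of the $\vec{x}/|\vec{x}|$ terms in (\ref{bb2d}), since without a single global constant $L$ the Gronwall step does not close.
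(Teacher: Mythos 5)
Your proposal is correct and follows the paper's own skeleton almost exactly: subtract the integral equations, observe that on $A^j_t$ both paths stay in the region where $\vec{b}$ is Lipschitz, run Gronwall to get $\sup_{s\le t}|\vec{X}^{\epsilon_j}(s)-\vec{X}^0(s)|\le C(t)\,\epsilon_j\sup_{s\le t}|\vec{B}(s)|$, and then show the right-hand side tends to zero almost surely. The only genuine divergence is the final step: the paper bounds $\pr{\sup_{s\le t}|\vec{B}(s)|>c/(3\epsilon_j)}$ by the reflection-principle Gaussian tail $2\exp(-c^2/(18\epsilon_j^2t))$, crudely estimates $\exp(-x)<x^{-1}$ to make the series $\sum_j\epsilon_j^2$ appear, and invokes Borel--Cantelli; you instead apply Doob's $L^2$ maximal inequality to get $\ex{Z^2}<\infty$ and deduce $\sum_j\epsilon_j^2Z_j^2<\infty$ a.s.\ by Tonelli. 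Both routes consume the hypothesis $\sum_j\epsilon_j^2<\infty$ in the same way, and yours has the merit of working verbatim for independent driving noises $\vec{B}^j$ --- indeed your parenthetical remark correctly identifies that with a single shared $\vec{B}$ the summability hypothesis is not even needed, since $\epsilon_jZ\to0$ a.s.\ outright; the paper's Borel--Cantelli formulation quietly has the same robustness. Your Gronwall constant $e^{Lt}$ is also the honest one (the paper's claimed factor of $3$ only holds for $Kt$ small), and your closing paragraph on the uniform Lipschitz constant over the unbounded exterior and on the nonvacuousness of conditioning on $A^j_t$ matches the caveats the paper itself raises immediately after the theorem.
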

\begin{proof} By definition,
\[\vec{X}^{\epsilon_j}(u)-\vec{X}^0(u) = \int_0^u \left(\vec{b}(\vec{X}^{\epsilon_j}(s))-\vec{b}(\vec{X}^0(s))\right)\di s +\epsilon_j \vec{B}(u),\quad u\in(0,t).\]
If we restrict $\omega\in A_t^j$, then for some  Lipschitz constant $K>0$,
\[|\vec{X}^{\epsilon_j}(u)-\vec{X}^0(u)|\le K\int_0^u |\vec{X}^{\epsilon_j}(s)-\vec{X}^0(s)|\di s+\epsilon_j|\vec{B}(u)|.\]
Set,
\[f(t) = \int_0^t|\vec{X}^{\epsilon_j}(u)-\vec{X}^0(u)|\di u,\]
then,
\[
\dot{f}(u)   = | \vec{X}^{\epsilon_j}(u)-\vec{X}^0(u)|\le K f(u) + \epsilon_j |\vec{B}(u)|,\quad u\in(0,t).
\]
Therefore,
\[\frac{\di}{\di u}\left(\exp({-Ku})f(u)\right)\le \epsilon_j\exp({-Ku})|\vec{B}(u)|,\]
and so,
\begin{eqnarray*}
f(s) &\le &\epsilon_j \exp({Ks}) \int_0^s \exp({-Ku})|\vec{B}(u)|\di u,\\
\dot{f}(s) & \le & \epsilon_j K \exp({Ks}) \int_0^s \exp({-Ku})|\vec{B}(u)|\di u+\epsilon_j|\vec{B}(s)|.
\end{eqnarray*}
Therefore,
\[\sup\limits_{s\in(0,t)} | \vec{X}^{\epsilon_j}(u)-\vec{X}^0(u)| \le 3\epsilon_j\sup\limits_{s\in(0,t)}|\vec{B}(u)|.\]
For any constant $c>0$,
\begin{eqnarray*}
\pr{\sup\limits_{s\in(0,t)} | \vec{X}^{\epsilon_j}(u)-\vec{X}^0(u)| >c}& \le & \pr{\sup\limits_{s\in(0,t)}|\vec{B}(u)|>\frac{c}{3\epsilon_j}}\\
&\le & 2 \exp\left(-\frac{c^2}{18\epsilon_j^2 t}\right).
\end{eqnarray*}
Therefore, since $\exp({-x})<x^{-1}$ for $x>0$, for any $c>0$,
\[\sum_j \pr{\sup\limits_{s\in(0,t)} | \vec{X}^{\epsilon_j}(u)-\vec{X}^0(u)| >c} \le \frac{36t}{c^2} \sum_j\epsilon_j^2<\infty.\]
The result now follows from the Borel-Cantelli lemma.
\end{proof}

The last result is vacuous unless $\pr{A_t^j}>0.$ We can estimate this probability with Jensen's inequality giving,
\[\ex{\exp\left(-\kappa\int_0^t \chi(\mathrm{int}(\mathcal{E}_{1-\delta}))(\vec{X}^{\epsilon_j}_s)\di s\right)}\qquad\qquad\qquad\qquad\qquad\qquad\qquad \]
\[\qquad\qquad\qquad\qquad\qquad\qquad\ge \exp\left(-\kappa\ex{\int_0^t\chi(\mathrm{int}(\mathcal{E}_{1-\delta}))(\vec{X}^{\epsilon_j}_s)\di s}\right).\]
If we choose $\delta$ such that the Lebesgue measure, Leb$(\mathrm{int }(\mathcal{E}_{1-\delta}))=h_t(\epsilon_j)\kappa^{-1}$, and let $\kappa\uparrow\infty$ then formally,
\[\mathbb{P}_{\svec{x}}\left\{A_t^j\right\}\ge \exp\left(-h_t(\epsilon_j)\int_0^t\di s \sup\limits_{\genfrac{}{}{0pt}{}{s\in(0,t)}{\svec{y}\in\Sigma}} p^j_s(\vec{x},\vec{y})\right),\]
where $p_s^j$ is the transition density for $\vec{X}^{\epsilon_j}$ and the supremum is taken over $\vec{y}\in\Sigma^{\pm}$, the upper and lower parts of the singularity, with $p^j_s$ possibly discontinuous across $\Sigma$.  

By methods  of Wang \cite{Wang}, we have,
\[\sup\limits_{x,y} p^j_t(\vec{x},\vec{y}) \le \frac{1}{t}\exp\left( c\left(\frac{1}{\epsilon_j^2}+\frac{t}{\epsilon_j^4}\right) \right),\]
where $\epsilon_j\in(0,1]$ and $t>0$.
Then choosing $h_t(\epsilon) = \epsilon\exp\left(-c/\epsilon^2\right) \exp\left(-ct/\epsilon^4\right)$ gives,
\[\pr{\tau_{\svec{x}}^j>t}+\pr{\tau_{\svec{x}}^j<\Delta}\le \left(\frac{\Delta}{t}\right)^{\epsilon}.\]
Moreover, the term $\pr{\tau_{\svec{x}}^j<\Delta}$ can be made arbitrarily small compared to $\pr{\tau_{\svec{x}}^j>t}$ for small $\Delta$, giving the desired result.

\section{The Keplerian dynamical system}
\subsection{The dynamical system and Kepler's laws of motion}
In the last section we showed that for paths avoiding the singularity,  the limiting Nelson diffusion $\vec{X}^{\epsilon}$ converged to the underlying deterministic system $\vec{X}^0$ as $\epsilon\rightarrow 0$. We now consider the behaviour of this deterministic system.
\begin{defn}\label{kepler_dynam_defn}
The Keplerian dynamical system is  given by the equations,
\[\dot{\vec{x}}=\vec{b}(\vec{x})\quad\Leftrightarrow\quad\dot{x} = b_x(x,y),\quad\dot{y}=b_y(x,y), \]
where $b_x$ and $b_y$ are as defined in equations (\ref{bb2d}).
\end{defn}

The vector field for this dynamical system is shown in Figure \ref{vecfield} together with the Kepler ellipse and the singularity $\Sigma$.  

\begin{remark} Although we use the name `dynamical system' for the differential equations in Definition \ref{kepler_dynam_defn}, they do not necessarily give a system satisfying the definition of a true dynamical system (see \cite{MR0219843, MR0289890} for a full definition). This is because the existence of the  singularity $\Sigma$ means that we cannot use the standard methods of differential equations to ascertain the existence, uniqueness and extendability of solutions through every point of $\mathbb{R}^2$.
 However, as we will show, any path starting outside the Kepler ellipse will stay away from the singularity and so we can assume that the definition applied to the exterior of the Kepler ellipse will produce a dynamical system. Where necessary we will assume that we are working solely on the exterior of $\mathcal{E}_e$. We hope to extend this work to include all paths using the work of Fillipov \cite{MR1028776}.
 \end{remark}

\begin{figure}
\centering
\resizebox{120mm}{!}{\includegraphics{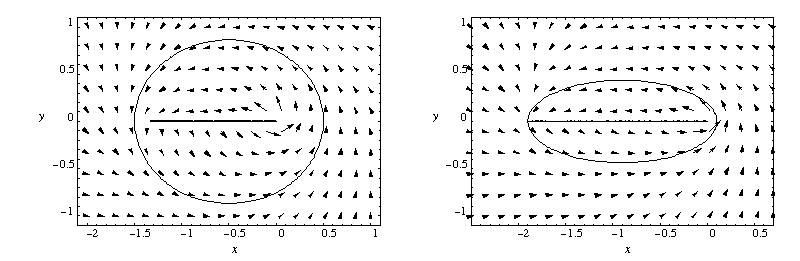}}
\caption{The vector field of the Keplerian dynamical system with $e=0.5$ and $0.9$ showing the Kepler ellipse and singularity.}
\label{vecfield}
\end{figure}

We begin our anlaysis of the Keplerian dynamical system with a very simple derivation of Kepler's laws of planetary motion.

\begin{theorem}\label{kepler_motion}
The Keplerian dynamical system has the Kepler ellipse $\mathcal{E}_e$ as a periodic orbit. Moreover, a particle moving on the periodic orbit $\mathcal{E}_e$ obeys Kepler's laws of motion with force constant $\mu$ and energy $E =- \mu/2 a$.
\end{theorem}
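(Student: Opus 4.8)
The plan is to verify the two assertions of the theorem by direct computation in Keplerian elliptic coordinates, where the drift has the explicit form displayed just before Proposition~\ref{itocorrection}. First I would show that $\mathcal{E}_e$ (the curve $u=e$) is invariant under the flow $\dot{\vec{x}}=\vec{b}(\vec{x})$. The cleanest way is to compute $\dot{X}_u$ along the flow and check that the drift component $b_u$ (equivalently $h\,b_u$ in the notation of Proposition~\ref{itocorrection}, but taken at $\epsilon=0$ so the It\^o correction $I_u$ drops out) vanishes identically when $u=e$. Since the $u=e$ curve is a simple closed curve and $\vec{b}$ is tangent to it everywhere on it (and nonvanishing there, which one sees from $\vec{b}^2\big|_{u=e}$ using equation~(\ref{speed}), giving $\vec{b}^2=(1-e\cos v)^{-1}(e\cos v+1)\cdot\mu/(a e^2)\cdot\bigl(\sqrt{1-e^2}\cdot\sqrt{1-e^2}-1\bigr)$, i.e. a strictly negative-of-nothing\ldots let me just say it is nonzero on $\mathcal{E}_e$), the orbit through any point of $\mathcal{E}_e$ is exactly $\mathcal{E}_e$ and is periodic. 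This is the easy part: it reduces to one algebraic identity.

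Next, for Kepler's laws, the key observation is that on $\mathcal{E}_e$ the deterministic dynamics $\dot{\vec{x}}=\vec{b}(\vec{x})$ must coincide with classical Keplerian motion if we can show two conserved quantities take the correct values: the energy $E=\tfrac12|\vec{b}|^2-\mu/|\vec{x}|$ and the angular momentum $L=x b_y - y b_x$. Using the identity $\tfrac12\vec{Z}_{0,\infty}^2-\mu/|\vec{x}|=-\mu^2/(2\lambda^2)$ from just after equation~(\ref{Z0}), together with the relation $\vec{b}=\Re\vec{Z}_{0,\infty}-\Im\vec{Z}_{0,\infty}$ and the fact that on $\mathcal{E}_e$ the osmotic velocity $\vec{u}=\Im\vec{Z}_{0,\infty}$ should vanish (this is exactly the statement that $R_\epsilon$ is stationary on $\mathcal{E}_e$, proved in Theorem~\ref{invariantthm}), one gets $|\vec{b}|^2=|\Re\vec{Z}_{0,\infty}|^2$ on $\mathcal{E}_e$ and hence $\tfrac12|\vec{b}|^2-\mu/|\vec{x}|=-\mu^2/(2\lambda^2)=-\mu/(2a)$ since $a=\lambda^2/\mu$. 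That gives the energy. I would then compute $L=xb_y-yb_x$ on $\mathcal{E}_e$ in $(u,v)$ coordinates and check it equals the constant $\sqrt{\mu a(1-e^2)}$, the Keplerian value for an ellipse of semi-major axis $a$ and eccentricity $e$. Once energy and angular momentum are the classical Kepler values and the orbit is the geometric ellipse~(\ref{kepler}), the orbit equation $\tilde r = a(1-e^2)/(1+e\cos\theta)$ together with conservation of $E$ and $L$ forces the time-parametrisation to be the Keplerian one; Kepler's second law ($\dot\theta\,\tilde r^2 = L = \text{const}$) is then immediate, Kepler's first law is Definition~\ref{Keplerellipse}, and Kepler's third law follows by integrating $\di t = \tilde r^2\,\di\theta/L$ over one period to get $T = 2\pi a^{3/2}/\sqrt{\mu}$, i.e. $T^2\propto a^3$.

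The main obstacle I anticipate is the angular-momentum computation: one must show $xb_y-yb_x$ is \emph{constant} along $\mathcal{E}_e$ (not merely that it has the right average), and the expressions for $b_x,b_y$ in $(u,v)$ with $u=e$ are moderately involved because of the $\sqrt{1-u^2}$, $\sqrt{1-e^2}$ and $(1-u\cos v)^{-1}$ factors. I expect, though, that after substituting $u=e$ and the coordinate formulas~(\ref{coords}) for $x,y$, everything collapses, with the $(1-e\cos v)$ denominators cancelling against factors in $b_x,b_y$, leaving the bare constant $\sqrt{\mu a(1-e^2)}$ (up to sign, fixing orientation of the orbit). A secondary subtlety is checking that $\vec{b}$ is genuinely nonzero on all of $\mathcal{E}_e$ so that $\mathcal{E}_e$ is a single periodic orbit rather than a curve of fixed points with connecting orbits; equation~(\ref{speed}) restricted to $u=e$ settles this since $\sqrt{(1-e^2)(1-e^2)}-1 = -e^2 \ne 0$ and $e\cos v+1>0$, $u\cos v - 1 = e\cos v-1<0$, so $\vec{b}^2>0$ throughout. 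With these in hand the theorem follows.
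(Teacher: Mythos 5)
Your proposal is correct, and its first half (showing $b_u(e,v)\equiv 0$ so that $u=e$ is invariant, and checking via equation~(\ref{speed}) that $\vec{b}^2|_{u=e}=\mu(1+e\cos v)/\bigl(a(1-e\cos v)\bigr)>0$ so that $\mathcal{E}_e$ is a single periodic orbit) is exactly what the paper does. For the second assertion, however, you take a genuinely different route. The paper simply evaluates the other drift component on the orbit, obtaining $\dot v=b_v(e,v)=\sqrt{\mu/a^3}\,(1-e\cos v)^{-1}$, and recognises this at once as Kepler's equation in the eccentric anomaly $v$ (since $x=a\cos v-ae$, $y=a\sqrt{1-e^2}\sin v$ on $u=e$); the period and the energy $E=-\mu/2a$ are read off from the coefficient $\sqrt{\mu/a^3}$. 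You instead pin down the time-parametrisation through the two conserved quantities: the energy via the eikonal identity $\tfrac12\vec{Z}_{0,\infty}^2-\mu/|\vec{x}|=-\mu^2/(2\lambda^2)$ combined with the vanishing of $\nabla R_\epsilon$ on $\mathcal{E}_e$ from Theorem~\ref{invariantthm} (so that $\vec{b}^2=(\Re\vec{Z}_{0,\infty})^2=\vec{Z}_{0,\infty}^2$ there), and the angular momentum by direct computation. Both work; the computation you flag as the main obstacle does collapse as you hope, since at $u=e$ one finds $b_x=-\mu\sin v/\bigl(\lambda(1-e\cos v)\bigr)$ and $b_y=\mu\sqrt{1-e^2}\cos v/\bigl(\lambda(1-e\cos v)\bigr)$, whence $xb_y-yb_x=a\mu\sqrt{1-e^2}/\lambda=\sqrt{\mu a(1-e^2)}$ identically in $v$. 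The paper's route is shorter (one substitution and one recognition), and your angular-momentum check is essentially the same algebra as the paper's evaluation of $b_v(e,v)$; what your route buys is a physically transparent identification of $E$ and $L$ with their classical Keplerian values and a nice structural use of the eikonal identity and of Theorem~\ref{invariantthm}, at the cost of an extra (standard but worth stating) step that orbit shape plus constant $L$ determines the Keplerian time-parametrisation.
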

\begin{proof}
Using Proposition \ref{itocorrection}, we can write the deterministic system in the Keplerian elliptic coordinates as,
\begin{equation}\label{Keplerianuv}
\dot{u} = b_u(u,v),\quad \dot{v} = b_v(u,v),
\end{equation}
with $\epsilon=0$. The Kepler ellipse $\mathcal{E}_e$ has the equation $u = e$ and a simple calculation gives  $b_u(e,v) \equiv 0$ so that $\mathcal{E}_e$ is a periodic orbit.

Moreover, for $u=e$, equations (\ref{coords}) which define $u$ and $v$  reduce to give,
\[x = a \cos v -ae,\qquad y = a\sqrt{1-e^2} \sin v, \]
so that $v$ is the eccentric angle for the ellipse $\mathcal{E}_e$. Then, the equations of motion   (\ref{Keplerianuv}) reduce to,
\[\dot{v} = b_v(e,v) =\frac{\mu}{\lambda a(1 - e\cos v)} = \sqrt{\frac{\mu}{a^3}}\cdot\frac{1}{(1 - e\cos v)} ,\]
which is the Kepler law of motion in terms of the eccentric angle $v$ with force constant $\mu$ and energy $E = -\mu/2 a$, where $a = \lambda^2/\mu$.
 \end{proof}
In the following sections we will consider the convergence of trajectories of the Keplerian dynamical system to the Kepler ellipse.  In the next section we will show which initial positions produce trajectories which avoid the singularity. In the subsequent sections we will consider the asymptotic stability of the Kepler ellipse and we  will show that all trajectories which avoid the singularity will converge to the Kepler ellipse  in such a manner that they will also obey Kepler's laws in the infinite time limit.

\subsection{Avoiding the singularity}
We want to show which initial positions give rise to trajectories that 
avoid the singularity.
For the dynamical system,
\[\dot{\vec{x}}(t) = \vec{b}(\vec{x}(t)),\quad t>t_0,\quad \vec{x}(t_0) = \vec{x}_0,\]
we denote the solution $\vec{x}(t) = \vec{x}(t;\vec{x}_0,t_0)$.

From equation (\ref{Keplerianuv}), the Keplerian dynamical system can be written in Keplerian elliptic coordinates as,
\[\dot{u} = b_u(u,v),\quad \dot{v} = b_v(u,v).\]
The drift $b_u$ tells us about the motion towards and away from the singularity $\Sigma $ which is given by $u=1$.
Thus, we are particularly interested in the sign of $b_u$.

\begin{lemma}\label{curlyf}
For the Keplerian dynamical system represented in Keplerian elliptic coordinates, if $u\in(-e,1)$,
\[
b_u = 0 \quad \Leftrightarrow\quad
\left\{
\begin{array}{ll}
u=e,  &v\in(0,2 \pi),\mbox{ or,}\\
u=\mathcal{F}(v),   & v\in (\pi/2,\pi),
\end{array}\right.
 \]
 where
 $\mathcal{F}(v) =\mathcal{F}_{(e,1)}(v)/\mathcal{F}_{(-1,-e)}(v)$
 and,
 \begin{equation}\label{ffunction}
 \mathcal{F}_{(e_1,e_2)}(v) =  e_1(1-\cos v\sin v)+e_2(\cos v-\sin v).
 \end{equation}
\end{lemma}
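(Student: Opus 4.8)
The plan is to compute $b_u$ explicitly in Keplerian elliptic coordinates and factor it, extracting the two vanishing loci directly. From Proposition~\ref{itocorrection} (with $\epsilon=0$), we have
\[
b_u = -\frac{\mu}{2e\lambda}(e+u)\sqrt{1-u^2}\left(\sqrt{1-e^2}\,(u+\cos v-\sin v) - \sqrt{1-u^2}\,(e+\cos v-\sin v)\right).
\]
For $u\in(-e,1)$ the prefactor $\tfrac{\mu}{2e\lambda}(e+u)\sqrt{1-u^2}$ is strictly positive, so $b_u=0$ if and only if the bracketed term vanishes, i.e.
\[
\sqrt{1-e^2}\,(u+\cos v-\sin v) = \sqrt{1-u^2}\,(e+\cos v-\sin v).
\]
First I would observe that $u=e$ is manifestly a solution for every $v$, which recovers the Kepler ellipse $\mathcal{E}_e$ as asserted (consistent with Theorem~\ref{kepler_motion}). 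The remaining work is to show that the \emph{other} branch of solutions is exactly $u=\mathcal{F}(v)$ on the stated $v$-range.

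Next I would divide out the factor $(u-e)$. The cleanest route is to square the equation (being careful about sign conditions) and regroup: squaring gives $(1-e^2)(u+\cos v-\sin v)^2 = (1-u^2)(e+\cos v-\sin v)^2$, which is a polynomial identity in $u$ that visibly has $u=e$ as a root; dividing by $(u-e)$ leaves a linear equation in $u$ whose solution is a rational function of $\cos v,\sin v$. I expect this to collapse, after using $1-\cos v\sin v$ type groupings, precisely into $u = \mathcal{F}_{(e,1)}(v)/\mathcal{F}_{(-1,-e)}(v)$ with $\mathcal{F}_{(e_1,e_2)}$ as in~\eqref{ffunction}. An alternative, perhaps tidier, route avoids squaring: solve the original (unsquared) relation for $u$ by treating $\sqrt{1-u^2}$ and $u$ as the unknowns of a line intersected with the unit circle, i.e. set $u=\cos\psi$ and $\sqrt{1-u^2}=\sin\psi$ and solve a linear trigonometric equation for $\psi$; this automatically keeps track of the correct sign of the square root.

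The main obstacle, and the reason for the restriction $v\in(\pi/2,\pi)$, is the admissibility analysis: not every $v$ yields a value $\mathcal{F}(v)$ lying in the coordinate range $(-e,1)$, and squaring may introduce spurious roots with the wrong sign of $\sqrt{1-u^2}$ (which must be nonnegative) or with $e+\cos v-\sin v$ and $u+\cos v-\sin v$ of incompatible signs. So after obtaining the formula I would go back and check: (i) that $\mathcal{F}_{(-1,-e)}(v)\neq 0$ on the relevant range so $\mathcal{F}$ is well-defined; (ii) that $-e<\mathcal{F}(v)<1$ holds precisely when $v\in(\pi/2,\pi)$ and fails outside it — this amounts to sign analysis of $\mathcal{F}_{(e,1)}+e\,\mathcal{F}_{(-1,-e)}$ and $\mathcal{F}_{(e,1)}-\mathcal{F}_{(-1,-e)}$, both of which are elementary combinations of $\cos v$, $\sin v$ and $\cos v\sin v=\tfrac12\sin 2v$; and (iii) that the unsquared equation is genuinely satisfied (not just its square). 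I would expect $v=\pi/2$ and $v=\pi$ to be the endpoints where $\mathcal{F}$ crosses out of $(-e,1)$ or where the defining relation degenerates, confirming the stated interval. Once these sign checks are done the lemma follows; the only real care needed is bookkeeping the branch of the square root, which the $u=\cos\psi$ substitution handles transparently.
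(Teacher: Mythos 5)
Your proposal is correct and is essentially the paper's (unstated) argument: the paper's proof is literally ``a simple calculation from Proposition \ref{itocorrection},'' and your plan --- factor out the positive prefactor $(e+u)\sqrt{1-u^2}$, read off the root $u=e$, divide the quadratic (after squaring) by $(u-e)$ to get the second root, and then do the admissibility/sign analysis --- is exactly that calculation; indeed, writing $c=\cos v-\sin v$ and $s=1-\cos v\sin v=(1+c^2)/2$, the second root of the squared equation is $-(es+c)/(s+ec)=\mathcal{F}(v)$, it solves the unsquared equation iff $c^2\ge 1$, and it lies in $(-e,1)$ iff $c<-1$, which pins down $v\in(\pi/2,\pi)$ as stated. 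No gaps beyond the routine algebra you have already flagged.
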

\begin{proof}
A simple calculation from Proposition \ref{itocorrection}. The curve $u=\mathcal{F}(v)$ is shown in the Cartesian frame in Figure \ref{bu_sign_pic}.
\end{proof}

 \begin{figure}
\centering
\setlength{\unitlength}{1cm}
\begin{picture}(12,5)
\put(0,0){\resizebox{120mm}{!}{\includegraphics{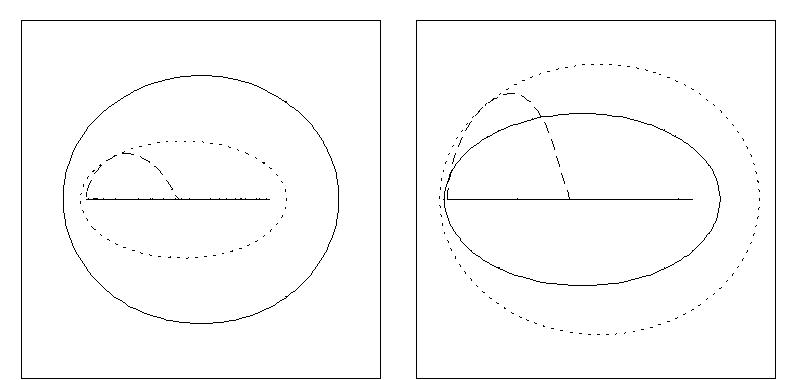}}}
\put(0.4,0.4){$e<1/\sqrt{2}$}
\put(1.8,4.7){$+$}
\put(1.8,1.5){$-$}
\put(1.8,3){$+$}
\put(6.4,0.4){$e>1/\sqrt{2}$}
\put(7.5,4.7){$+$}
\put(7.5,2){$-$}
\put(7.5,3){$+$}
\put(7.5,4){$-$}
\end{picture}
\caption{The Kepler ellipse $\mathcal{E}_e$ and singularity (solid line) with the curve $u=\mathcal{F}(v)$ (dashed line), the ellipse $\mathcal{E}_{\tilde{e}}$ (dotted line) and the sign of $b_u$ indicated.}
\label{bu_sign_pic}
\end{figure}

\begin{cor}\label{crash}
The singularity $\Sigma$ in the cartesian plane is repulsive everywhere except for,
\[-\frac{4 a e}{1+e} \le x\le -\frac{2 a e}{1+e},\]
where for $y>0$ the singularity is attractive.
\end{cor}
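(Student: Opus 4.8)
The plan is to read off the sign of the $u$-component of the deterministic drift in a one-sided neighbourhood of the singular line $u=1$, and then to translate the conclusion back into Cartesian coordinates through (\ref{coords}). Up to the strictly positive factor $h$ of Proposition~\ref{itocorrection} (positive on $-e<u<1$, as one checks directly from its formula), the $u$-component of the Keplerian flow is, by Proposition~\ref{itocorrection} with $\epsilon=0$,
\[
b_u=-\frac{\mu}{2e\lambda}(e+u)\sqrt{1-u^2}\left(\sqrt{1-e^2}\,(u+\cos v-\sin v)-\sqrt{1-u^2}\,(e+\cos v-\sin v)\right),
\]
so whether $\Sigma$ is attractive or repulsive at a point is decided by the sign of this $b_u$ for $u$ just below $1$. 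As $u\uparrow 1$ the second summand in the bracket vanishes, so the bracket tends to $\sqrt{1-e^2}\,(1+\cos v-\sin v)$; since $(e+u)$, $\sqrt{1-u^2}$ and $\mu/(2e\lambda)$ are positive on the strip, the sign of $b_u$ near $u=1$ is opposite to that of $1+\cos v-\sin v$ whenever the latter is non-zero. The half-angle identity $1+\cos v-\sin v=2\cos(v/2)\bigl(\cos(v/2)-\sin(v/2)\bigr)$ shows this quantity is strictly positive for $v\in(0,\pi/2)\cup(\pi,2\pi)$, strictly negative for $v\in(\pi/2,\pi)$, and zero only at $v=\pi/2$ and $v=\pi$.

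Next I would check that, apart from the line $\{u=1\}$ itself, no zero of $b_u$ intervenes between it and the points at which this leading sign is computed. By Lemma~\ref{curlyf} the zero set of $b_u$ in $-e<u<1$ is $\{u=e\}$ together with the curve $\{u=\mathcal{F}(v):v\in(\pi/2,\pi)\}$; the former is bounded away from $u=1$, while a short computation from $\mathcal{F}=\mathcal{F}_{(e,1)}/\mathcal{F}_{(-1,-e)}$ with (\ref{ffunction}) gives $\mathcal{F}(v)=1\Leftrightarrow(1+\cos v)(1-\sin v)=0$, so this curve meets $u=1$ only at $v=\pi/2,\pi$ and lies strictly below $u=1$ on the open interval $(\pi/2,\pi)$. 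Hence for each fixed $v_0\notin\{\pi/2,\pi\}$ there is a neighbourhood of $(u,v)=(1,v_0)$ in which $b_u$ vanishes only on $\{u=1\}$, so just below $u=1$ the drift keeps the constant sign opposite to that of $1+\cos v_0-\sin v_0$. Consequently, for $u$ slightly below $1$ one has $b_u<0$, the flow being pushed off $\Sigma$, hence $\Sigma$ repulsive, when $v\in(0,\pi/2)\cup(\pi,2\pi)$, and $b_u>0$, hence $\Sigma$ attractive, when $v\in(\pi/2,\pi)$.

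It remains to pass to Cartesian coordinates. On $u=1$, equations (\ref{coords}) give $y=0$ and $x=2ae(\cos v-1)/(1+e)$, a strictly decreasing bijection from $v\in[0,\pi]$ onto $x\in[-4ae/(1+e),0]$ carrying $v=\pi/2$ to $x=-2ae/(1+e)$ and $v=\pi$ to $x=-4ae/(1+e)$; moreover the approach $y>0$ corresponds to $v\in(0,\pi)$ and $y<0$ to $v\in(\pi,2\pi)$. Combining this with the previous paragraph, the lower side of $\Sigma$ is repulsive along its whole length, the upper side is repulsive for $-2ae/(1+e)<x<0$, and the upper side is attractive for $-4ae/(1+e)<x<-2ae/(1+e)$, which is exactly the asserted dichotomy, with the two transitional values being $x=-2ae/(1+e)$ (at $v=\pi/2$) and $x=-4ae/(1+e)$ (at $v=\pi$). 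The one delicate point, which I expect to be the main obstacle, is the behaviour exactly at $v=\pi/2$ and $v=\pi$: there the leading coefficient $1+\cos v-\sin v$ vanishes and one must retain the subleading terms of the expansion of the bracket (in which $\sqrt{1-u^2}\sim\sqrt{2(1-u)}$ dominates $u-1$) to determine the sign; everywhere else the result follows from the direct sign count above.
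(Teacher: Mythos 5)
Your argument is correct and is essentially the paper's own proof, which simply records the sign of $b_u$ at $u=1$: positive (attractive) for $v\in(\pi/2,\pi)$ and negative (repulsive) otherwise, then maps $v$ back to $x$ via (\ref{coords}). Your version is in fact more careful than the paper's one-liner, since $b_u$ literally vanishes on $u=1$ because of the factor $\sqrt{1-u^2}$, so the limiting sign as $u\uparrow 1$ (and the degenerate endpoints $v=\pi/2,\pi$) is the right way to read the claim.
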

\begin{proof}
For $u=1$, $b_u>0$ for $v\in(\pi/2,\pi)$ but $b_u<0$ for all other $v$.
\end{proof}

Corollary \ref{crash} shows that paths which start away from the singularity may reach it in a finite time. Using Lemma \ref{curlyf} we can identify which initial positions may lead to this behaviour. It can be shown that the curve $u=\mathcal{F}(v)$ reaches its minimum at $v = 3\pi/4$ and its maximum at $v=\pi/2$ and $v=\pi$. Therefore,
\[1>\mathcal{F}(v)>\tilde{e},\quad\mbox{ where}\quad \tilde{e} := \frac{-2\sqrt{2} +3 e}{-3+2\sqrt{2} e}.\] 

\begin{cor}
If $\vec{x}_0\in \mathrm{ext}(\mathcal{E}_e)\cup  \mathrm{ext}(\mathcal{E}_{\tilde{e}})$ then $\vec{x}(t;\vec{x}_0,t_0)$ will never reach the singularity for $t>t_0$.
\end{cor}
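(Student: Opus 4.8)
The plan is to argue entirely in the Keplerian elliptic coordinates $(u,v)$, in which the singularity $\Sigma$ is the line $u=1$ and the Kepler ellipse $\mathcal{E}_e$ is the line $u=e$. Since $\mathrm{ext}(\mathcal{E}_c)=\{u<c\}$, the hypothesis $\vec{x}_0\in\mathrm{ext}(\mathcal{E}_e)\cup\mathrm{ext}(\mathcal{E}_{\tilde{e}})$ says exactly that $u(t_0)<\max(e,\tilde{e})$, and the goal is to show $u(t)<1$ for all $t>t_0$, i.e. that the trajectory never meets $\Sigma$. Two observations carry the argument. First, $b_u\equiv0$ on $\{u=e\}$ (the computation $b_u(e,v)\equiv0$ used already in the proof of Theorem~\ref{kepler_motion}), and the drift is smooth, hence locally Lipschitz, on the open set $\{-e<u<1\}$, where it avoids both $\Sigma$ and the origin; so by uniqueness of solutions no trajectory can cross the Kepler ellipse. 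In particular $\{u<e\}$ is forward invariant, and this already settles every $\vec{x}_0\in\mathrm{ext}(\mathcal{E}_e)$, for then $u(t)<e<1$ throughout the maximal interval of existence.

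It remains to treat the points of $\mathrm{ext}(\mathcal{E}_{\tilde{e}})$ not yet covered, which are present only when $\tilde{e}>e$ — by the evaluation of $\tilde{e}$ preceding the corollary this is precisely the regime $e<1/\sqrt{2}$ — and then amount to $u(t_0)\in(e,\tilde{e})$ (the case $u(t_0)=e$ being the periodic orbit $\mathcal{E}_e$ itself). Fix such a trajectory and choose a level $u_1$ with $u(t_0)<u_1<\tilde{e}$; I claim the closed band $\{e\le u\le u_1\}$ is forward invariant. On its lower edge $\{u=e\}$ we have $b_u=0$, so by uniqueness the trajectory cannot escape downwards. On its upper edge $\{u=u_1\}$ I claim $b_u<0$ for every $v$: by Lemma~\ref{curlyf} the only zeros of $b_u$ in the strip $-e<u<1$ lie on $\{u=e\}$ or on the arc $u=\mathcal{F}(v)$, $v\in(\pi/2,\pi)$, and since $u_1<\tilde{e}<\mathcal{F}(v)$ on that interval (the bound established just before the corollary) the point $(u_1,v)$ avoids both, so the continuous function $v\mapsto b_u(u_1,v)$ has no zero on the circle and hence a constant sign, seen to be negative by evaluating it at $v=0$ from the drift formula of Proposition~\ref{itocorrection} (or via Corollary~\ref{crash} and continuity in $u$). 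Thus $\dot{u}=b_u<0$ on $\{u=u_1\}$, the band is forward invariant, the trajectory is trapped in a compact subset of $\{-e<u<1\}$ — so in fact exists for all $t>t_0$ — and $u(t)\le u_1<\tilde{e}<1$ for all $t$. Together with the previous paragraph, and noting that when $\tilde{e}\le e$ the union is just $\mathrm{ext}(\mathcal{E}_e)$, this proves the corollary.

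The purely routine ingredients — that the $(u,v)$-drift is smooth and bounded on compact subsets of $\{-e<u<1\}$, and that the positive factor $h(u,v)$ there lets the sign analysis be performed on $b_u$ rather than on the full $u$-drift — are immediate from Proposition~\ref{itocorrection}. The substantive content is what is already established: Lemma~\ref{curlyf}, which locates exactly where $b_u>0$, and the fact that $\min\mathcal{F}=\tilde{e}$ is attained at $v=3\pi/4$, which is precisely what forces the whole region where $b_u>0$ near the singularity to lie strictly inside the shielding ellipse $\mathcal{E}_{\tilde{e}}$. Given those, the only genuine subtlety — the point I would be most careful about — is that one cannot use $\mathcal{E}_{\tilde{e}}$ itself as a barrier, because $b_u\not\equiv0$ on $\{u=\tilde{e}\}$ and that curve is not invariant; instead one must interpose a level $u=u_1\in(u(t_0),\tilde{e})$ on which $b_u$ is \emph{strictly} negative and deduce the forward invariance of the band from that strict inequality together with the invariance of the Kepler ellipse.
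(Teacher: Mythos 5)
Your proof is correct and follows essentially the same route as the paper, which states the corollary without a formal proof as an immediate consequence of Lemma \ref{curlyf}, Corollary \ref{crash} and the bound $1>\mathcal{F}(v)>\tilde{e}$. Your version simply makes the implicit barrier argument rigorous (the invariance of $\{u<e\}$ via $b_u(e,v)\equiv 0$ and uniqueness, and the interposed level $u=u_1\in(u(t_0),\tilde{e})$ on which $b_u<0$ strictly), which is a faithful and careful formalisation of what the authors intend.
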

This tells us that outside an ellipse surrounding the singularity $\Sigma = \mathcal{E}_1$, the trajectories of the dynamical system will not intersect $\Sigma$.
The ellipses $\mathcal{E}_e$ and $\mathcal{E}_{\tilde{e}}$ are shown in Figure \ref{bu_sign_pic}.

\begin{remark}  From this we can conclude the following (see Figure \ref{bu_sign_pic}):
\begin{enumerate}
\item
For $e=1/\sqrt{2}$ it follows that $\tilde{e} = 1/\sqrt{2}$ so that $\mathcal{E}_e = \mathcal{E}_{\tilde{e}}$.
\item
 For $e<1/\sqrt{2}$ it follows that $1>\tilde{e} >e $ so that $ \mathcal{E}_{e}\subset \mathrm{ext}( \mathcal{E}_{\tilde{e}}) $.
  \item
 For $e>1/\sqrt{2}$ it follows that $-e<\tilde{e} <e$ so that $ \mathcal{E}_{\tilde{e}}\subset \mathrm{ext}(\mathcal{E}_e) $.
\end{enumerate}
\end{remark}

\subsection{The asymptotic stability of Keplerian motion}
 We will now look at the attraction of paths which avoid the singularity to the Keplerian orbit and also at the stability of this orbit.

\begin{defn}
A periodic orbit generating a closed trajectory $C$ is called orbitally stable if for any $\delta_1>0$ and any initial position $\vec{x}_0$ which yields a periodic solution traversing $C$, there exists $\delta_2>0$ such that,
\[|\tilde{\vec{x}}_0-\vec{x}_0|<\delta_2\quad\Rightarrow\quad
d(\vec{x}(t;\tilde{\vec{x}}_0,t_0),C)<\delta_1,\]
for $t>t_0$, where $d$ denotes the metric
$d(\vec{x},C) = \inf|\vec{x}-\vec{y}|$ for $\vec{y}\in C$.

Moreover, if there also exists a $\delta_3>0$ such that,
\[|\tilde{\vec{x}}_0-\vec{x}_0|<\delta_3 \quad\Rightarrow\quad\lim\limits_{t\rightarrow\infty}d(\vec{x}(t;\tilde{\vec{x}}_0,t_0),C)=0,\]
then the orbit is said to be asymptotically orbitally stable.
\end{defn}

For a two dimensional system we have the following result \cite{Willems}.

\begin{theorem}
The closed trajectory $C$ corresponding to a periodic solution $\vec{x}_p(t)$ of period $T$ of a dynamical system is asymptotically orbitally  stable if,
\[\int_0^T\mathrm{Tr}\left(\left.\frac{\partial \vec{b}}{\partial\vec{x}}(\vec{x})\right|_{\vec{x}=\vec{x}_p(t)}\right)\di t<0.\]
\end{theorem}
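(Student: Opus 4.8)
The plan is to reduce the claimed criterion to a standard fact about the first variational (linearised) equation along a periodic orbit in the plane, namely that for a planar autonomous system the non-trivial Floquet multiplier equals $\exp\bigl(\int_0^T \operatorname{Tr} D\vec{b}(\vec{x}_p(t))\,\mathrm{d}t\bigr)$, so that a strictly negative integral forces this multiplier to lie strictly inside the unit circle, which in two dimensions is exactly the condition for asymptotic orbital stability. Concretely, I would first set up the variational equation $\dot{\vec{\xi}} = D\vec{b}(\vec{x}_p(t))\,\vec{\xi}$ and let $\Phi(t)$ be its fundamental matrix with $\Phi(0)=I$. Abel--Jacobi--Liouville's formula gives $\det\Phi(T) = \exp\bigl(\int_0^T \operatorname{Tr} D\vec{b}(\vec{x}_p(t))\,\mathrm{d}t\bigr)$, and one Floquet multiplier is always $1$ (with eigenvector $\dot{\vec{x}}_p(0)$, the tangent to the orbit), so the other multiplier $\rho$ satisfies $1\cdot\rho = \det\Phi(T)$, whence $\rho = \exp\bigl(\int_0^T \operatorname{Tr}\,\mathrm{d}t\bigr) < 1$ under the hypothesis.

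The second step is to pass from $\rho<1$ to asymptotic orbital stability. Here I would invoke the standard Poincar\'e (first return) map construction: choose a transversal section $\Sigma_0$ through a point of $C$, let $P:\Sigma_0\to\Sigma_0$ be the return map; then $P$ is $C^1$ with $P'$ (in the one-dimensional transversal direction) equal to the non-trivial multiplier $\rho$. Since $|\rho|<1$, the fixed point of $P$ on $C\cap\Sigma_0$ is an attracting fixed point, so nearby points converge to it under iteration of $P$; translating this back to the flow via the tubular neighbourhood of $C$ yields both the orbital stability estimate (trajectories starting near $\vec{x}_0$ stay within $\delta_1$ of $C$) and the asymptotic statement $d(\vec{x}(t;\tilde{\vec{x}}_0,t_0),C)\to 0$. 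For a self-contained planar argument one can also phrase this directly through the Poincar\'e--Bendixson circle of ideas, but the Poincar\'e-map route is cleanest.

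The main obstacle is really a matter of care rather than depth: one must confirm that the hypotheses needed for the Poincar\'e-map machinery hold, i.e. that $\vec{b}$ is $C^1$ in a neighbourhood of $C=\mathcal{E}_e$ so that the variational equation and the return map make sense. For the Keplerian dynamical system the drift $\vec{b}$ has the singularity $\Sigma=\mathcal{E}_1$, but $\mathcal{E}_e$ is separated from $\mathcal{E}_1$ (indeed $e<1$), and on a full neighbourhood of $\mathcal{E}_e$ away from $\Sigma$ the field is smooth, so this is fine; one should note this explicitly. A second minor point is the identification of the trivial multiplier: one must check $\dot{\vec{x}}_p\not\equiv 0$ on $C$, which follows since on $\mathcal{E}_e$ we computed $\dot v = b_v(e,v) = \sqrt{\mu/a^3}/(1-e\cos v) > 0$, so the orbit is genuinely periodic and the tangent vector never vanishes. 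With these remarks in place the theorem follows from the Abel--Liouville identity together with the elementary spectral analysis of a $2\times 2$ monodromy matrix one of whose eigenvalues is known to be $1$.
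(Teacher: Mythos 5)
Your proposal is correct. Note, however, that the paper does not prove this theorem at all: it is quoted as a known result with a citation to Willems' \emph{Stability Theory of Dynamical Systems}, so there is no "paper proof" to compare against. What you have written is the standard proof of the classical Poincar\'e stability criterion for planar limit cycles (the two-dimensional case of the Andronov--Witt theorem): the Abel--Jacobi--Liouville identity gives $\det\Phi(T)=\exp\bigl(\int_0^T \operatorname{Tr} D\vec{b}(\vec{x}_p(t))\,\mathrm{d}t\bigr)$, the trivial multiplier $1$ is carried by the tangent direction $\dot{\vec{x}}_p(0)$, hence the remaining multiplier $\rho$ equals that exponential and satisfies $0<\rho<1$ under the hypothesis, and the Poincar\'e return map then has an attracting hyperbolic fixed point, which yields asymptotic orbital stability (indeed with asymptotic phase). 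All the steps are sound, and your two side remarks are exactly the right ones to make when the criterion is applied to the Keplerian system: the drift is $C^1$ in a tubular neighbourhood of $\mathcal{E}_e$ because the Kepler ellipse is bounded away from the singular set $\Sigma=\mathcal{E}_1$, and the tangent vector never vanishes on the orbit since $b_v(e,v)>0$. In short, you have supplied the proof that the paper delegates to the reference, by the route that reference itself would take.
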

\begin{theorem}
For the Keplerian dynamical system let $\vec{x}_p(t)$ be a periodic trajectory  traversing the Kepler ellipse $\mathcal{E}_e$ with period $T=2\pi\sqrt{a^3/\mu}$. Then,
\[\int_0^T\mathrm{Tr}\left(\left.\frac{\partial \vec{b}}{\partial\vec{x}}(\vec{x})\right|_{\vec{x}=\vec{x}_p(t)}\right)\di t=-2\pi.\]
\end{theorem}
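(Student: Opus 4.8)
The plan is to observe that $\mathrm{Tr}\!\left(\partial\vec{b}/\partial\vec{x}\right) = \nabla_{\svec{x}}\cdot\vec{b}$, for which we already possess the closed-form expression (\ref{divb}) in Keplerian elliptic coordinates, and then to reduce the line integral over the closed orbit to an elementary trigonometric integral. First I would restrict (\ref{divb}) to the Kepler ellipse $\mathcal{E}_e$ by setting $u=e$. Since $0<e<1$ we have $\left.\sqrt{(1-e^2)(1-u^2)}\right|_{u=e}=1-e^2$, so the bracket in the numerator collapses to $e^2+2e(\cos v+\sin v)+(1-e^2)+1 = 2(1+e\cos v+e\sin v)$ and, cancelling the common factor,
\[
\left.\nabla_{\svec{x}}\cdot\vec{b}\right|_{u=e}
 = \frac{\mu\,(1+e\cos v+e\sin v)}{a\lambda\,(e\cos v-1)(1+2e\cos v+e^2)}.
\]

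Next I would change the variable of integration from $t$ to the eccentric angle $v$. By Theorem \ref{kepler_motion}, along $\mathcal{E}_e$ one has $\dot v = b_v(e,v) = \mu/\bigl(\lambda a(1-e\cos v)\bigr)>0$, hence $\di t = \lambda a(1-e\cos v)\mu^{-1}\di v$ and $v$ sweeps exactly once over $[0,2\pi)$ as $t$ runs over $[0,T]$ (which, using $a=\lambda^2/\mu$, also recovers $T=2\pi\sqrt{a^3/\mu}$). Substituting, the prefactors $\mu/(a\lambda)$ and $\lambda a/\mu$ cancel and the factor $(1-e\cos v)/(e\cos v-1)=-1$, so
\[
\int_0^T\mathrm{Tr}\!\left(\left.\frac{\partial\vec{b}}{\partial\vec{x}}\right|_{\vec{x}=\vec{x}_p(t)}\right)\di t
 = -\int_0^{2\pi}\frac{1+e\cos v+e\sin v}{1+2e\cos v+e^2}\,\di v .
\]

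Finally I would evaluate this integral term by term using the standard values (obtained, e.g., via $z=\exp(\ii v)$ and residues at the interior pole $z=-e$, or via the Poisson kernel): $\int_0^{2\pi}(1+2e\cos v+e^2)^{-1}\di v = 2\pi/(1-e^2)$, $\int_0^{2\pi}\cos v\,(1+2e\cos v+e^2)^{-1}\di v = -2\pi e/(1-e^2)$, while $\int_0^{2\pi}\sin v\,(1+2e\cos v+e^2)^{-1}\di v = 0$ by the symmetry $v\mapsto 2\pi-v$. Combining these gives $\int_0^{2\pi}(1+e\cos v+e\sin v)(1+2e\cos v+e^2)^{-1}\di v = (2\pi - 2\pi e^2)/(1-e^2) = 2\pi$, so the required integral equals $-2\pi$. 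I do not expect a genuine obstacle here beyond bookkeeping; the only points demanding care are the resolution of the square root with the correct sign when specialising (\ref{divb}) to $u=e$, and the observation that $b_v(e,v)>0$ throughout, which guarantees that $v$ increases monotonically through exactly one period.
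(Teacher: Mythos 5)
Your proposal is correct and follows essentially the same route as the paper: restrict the divergence to $u=e$, change variables via $\di t=\di v/b_v(e,v)$, and reduce the orbit integral to $-\int_0^{2\pi}(1+e\cos v+e\sin v)(1+2e\cos v+e^2)^{-1}\di v$. The only difference is in the final elementary step: the paper drops the $\sin v$ term by symmetry and exhibits the explicit antiderivative $v+2\tan^{-1}\bigl(\tfrac{1-e}{1+e}\tan\tfrac{v}{2}\bigr)$, whereas you quote the standard Poisson-kernel values; both yield $-2\pi$.
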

\begin{proof}
We can calculate this integral using Keplerian elliptic coordinates,
\begin{eqnarray*}
\int_0^T\mathrm{Tr}\left(\left.\frac{\partial \vec{b}}{\partial\vec{x}}(\vec{x})\right|_{\vec{x}=\vec{x}_p(t)}\right)\di t
& = & \int_0^{2\pi}\left\{ \frac{1}{b_v} \left(\frac{\partial b_x}{\partial x}+\frac{\partial b_y}{\partial y}\right)\right\}_{u=e}\di v\\
& = & -\int_0^{2\pi} \frac{e \cos v+e \sin v+1}{e^2+2 e\cos v +1}\di v\\
& = & -2 \int_0^{\pi} \frac{e \cos v+1}{e^2+2 e\cos v +1}\di v\\
& = &- \lim\limits_{v\uparrow \pi} \left(v+2 \tan ^{-1}\left(\frac{1-e}{1+e} \tan
   \left(\frac{v}{2}\right)\right)\right)\\
   & = & - 2 \pi
\end{eqnarray*}
as $1>e>0$.
\end{proof}
\begin{cor}\label{orbital_stab}
The Kepler ellipse $\mathcal{E}_e$ is an asymptotically orbitally stable periodic orbit for the Keplerian dynamical system.
\end{cor}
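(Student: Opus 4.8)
The statement to prove is Corollary \ref{orbital_stab}, which asserts that the Kepler ellipse $\mathcal{E}_e$ is an asymptotically orbitally stable periodic orbit. The plan is to combine the two immediately preceding theorems. Theorem \ref{kepler_motion} (together with its proof, which showed $b_u(e,v)\equiv 0$) establishes that $\mathcal{E}_e$ is genuinely a closed periodic trajectory of the Keplerian dynamical system, traversed with period $T = 2\pi\sqrt{a^3/\mu}$; this is the hypothesis needed to even speak of orbital stability. The penultimate theorem then computes the trace integral of the Jacobian of $\vec{b}$ along $\mathcal{E}_e$ and finds it equals $-2\pi$. Since $-2\pi < 0$, the criterion in the preceding general theorem (the two-dimensional orbital stability test from \cite{Willems}) is satisfied, and asymptotic orbital stability follows immediately.

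Concretely, the write-up is three sentences. First, invoke Theorem \ref{kepler_motion} to record that $\vec{x}_p(t)$ traversing $\mathcal{E}_e$ is a periodic solution of period $T = 2\pi\sqrt{a^3/\mu}$, so the closed trajectory $C = \mathcal{E}_e$ is well defined. Second, invoke the computed value $\int_0^T \mathrm{Tr}\left(\left.\frac{\partial \vec{b}}{\partial\vec{x}}\right|_{\vec{x}=\vec{x}_p(t)}\right)\di t = -2\pi < 0$. Third, apply the two-dimensional asymptotic orbital stability theorem to conclude.

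The only subtlety worth flagging — and it should be addressed in a sentence — is that the Keplerian dynamical system is not a bona fide dynamical system on all of $\mathbb{R}^2$ because of the singularity $\Sigma$, as noted in the Remark following Definition \ref{kepler_dynam_defn}. However, $\mathcal{E}_e = \{u=e\}$ lies strictly in the exterior region $\mathrm{ext}(\mathcal{E}_e)$ away from $\Sigma = \mathcal{E}_1$ (which is $\{u=1\}$), and by the Corollary to Lemma \ref{curlyf} trajectories starting in a neighbourhood of $\mathcal{E}_e$ inside $\mathrm{ext}(\mathcal{E}_e)$ never reach the singularity; on this region $\vec{b}$ is smooth, so the classical stability theorem applies without modification. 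There is no real obstacle here: this corollary is a direct deduction, and the genuine work was already carried out in the two preceding theorems (the Jacobian trace computation being the substantive one). The main thing to be careful about is simply citing the right hypotheses — period $T$, sign of the integral — and noting the restriction to the exterior of $\mathcal{E}_e$ so that the smoothness needed by \cite{Willems} is in force.
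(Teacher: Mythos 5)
Your proposal is correct and is exactly the paper's route: the corollary is an immediate consequence of the Willems criterion together with the computation $\int_0^T\mathrm{Tr}(\partial\vec{b}/\partial\vec{x})\,\di t=-2\pi<0$ along the periodic orbit established in Theorem \ref{kepler_motion}, and the paper itself offers no further argument. Your extra remark about the singularity lying away from $\mathcal{E}_e$ (so that $\vec{b}$ is smooth where the criterion is applied) is a sensible clarification consistent with the paper's earlier Remark, not a deviation.
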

We can also demonstrate that the Kepler ellipse is asymptotically orbitally stable by using the invariant density $\rho_{\epsilon}^{\infty}$ from equation (\ref{invariantdensity}) as  a Lyapunov function.
\begin{theorem}\label{lyapunov}
Let $V_{\mathrm{Lpv}}(\vec{x})$ be a real valued function defined in an open neighbourhood $N(C)$ of a compact set $C$. Assume that,
\begin{enumerate}
\item $V_{\mathrm{Lpv}}(\vec{x})$ is continuously differentiable,
\item $V_{\mathrm{Lpv}}(\vec{x})$ is positive definite in $N(C)\setminus C$,
\item $\nabla V_{\mathrm{Lpv}}(\vec{x})\cdot \vec{b}(\vec{x})$ is negative definite in $N(C)\setminus C$,
\item $V_{\mathrm{Lpv}}(\vec{x})=\nabla V_{\mathrm{Lpv}}(\vec{x})\cdot \vec{b}(\vec{x})=0$ for $\vec{x}\in C$.
\end{enumerate}
Then the compact set $C$ is asymptotically stable.
\end{theorem}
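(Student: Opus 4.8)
The plan is to carry out the classical direct method of Lyapunov, but for the compact invariant set $C$ rather than for an equilibrium point, and then to upgrade plain (Lyapunov) stability to asymptotic stability by an $\omega$\nobreakdash-limit-set argument of LaSalle type. Write $\dot V(\vec{x}) := \nabla V_{\mathrm{Lpv}}(\vec{x})\cdot\vec{b}(\vec{x})$. By hypothesis~1 the chain rule applies, so along any integral curve $\vec{x}(t)$ of $\dot{\vec{x}}=\vec{b}(\vec{x})$ that remains in $N(C)$ we have $\frac{\di}{\di t}V_{\mathrm{Lpv}}(\vec{x}(t)) = \dot V(\vec{x}(t)) \le 0$ by hypothesis~3, with strict inequality off $C$; hence $V_{\mathrm{Lpv}}$ is non-increasing along such curves, and by hypothesis~4 it is constant (equal to $0$) along curves starting on $C$, so that $C$ is itself invariant.

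First I would build a neighbourhood basis of $C$ consisting of forward-invariant sublevel sets of $V_{\mathrm{Lpv}}$. Fix a closed $r$-neighbourhood $\bar{B}_r = \{\vec{x}: d(\vec{x},C)\le r\}$ small enough that $\bar{B}_r\subset N(C)$; the sphere $\{d(\vec{x},C)=r\}$ is compact and disjoint from $C$, so $m_r := \min_{d(\svec{x},C)=r}V_{\mathrm{Lpv}}(\vec{x}) > 0$ by hypothesis~2. Then $\Omega_r := \{\vec{x}\in\bar{B}_r : V_{\mathrm{Lpv}}(\vec{x})<m_r\}$ is an open neighbourhood of $C$ whose closure is compact and lies in $N(C)$, and it is positively invariant: a trajectory leaving $\Omega_r$ would first meet $\{d(\vec{x},C)=r\}$, where $V_{\mathrm{Lpv}}\ge m_r$, contradicting monotonicity of $t\mapsto V_{\mathrm{Lpv}}(\vec x(t))$. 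Since $\Omega_r$ shrinks to $C$ as $r\downarrow 0$, this already gives Lyapunov stability of $C$.

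Next I would prove attraction. Fix $\vec{x}_0\in\Omega_r$. Because the trajectory stays in the compact set $\overline{\Omega_r}$, it exists for all $t\ge t_0$ and its forward limit set $\omega(\vec{x}_0)$ is a non-empty, compact, invariant subset of $\overline{\Omega_r}\subset N(C)$. The map $t\mapsto V_{\mathrm{Lpv}}(\vec{x}(t))$ is non-increasing and bounded below by $0$, hence converges to some $c^{\ast}\ge 0$, and by continuity $V_{\mathrm{Lpv}}\equiv c^{\ast}$ on $\omega(\vec{x}_0)$. If $c^{\ast}>0$, pick $\vec{p}\in\omega(\vec{x}_0)$; then $V_{\mathrm{Lpv}}(\vec{p})=c^{\ast}>0$, so $\vec{p}\notin C$ and hypothesis~3 gives $\dot V(\vec{p})<0$. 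But the entire trajectory through $\vec{p}$ lies in $\omega(\vec{x}_0)$ by invariance, and $V_{\mathrm{Lpv}}$ is constant on $\omega(\vec{x}_0)$, forcing $\frac{\di}{\di t}V_{\mathrm{Lpv}}\big|_{\svec{p}}=\dot V(\vec{p})=0$, a contradiction; hence $c^{\ast}=0$. Finally, if $d(\vec{x}(t),C)$ did not tend to $0$ there would be $\eta>0$ and $t_k\uparrow\infty$ with $d(\vec{x}(t_k),C)\ge\eta$; passing to a convergent subsequence $\vec{x}(t_k)\to\vec{q}\in\overline{\Omega_r}$ yields $\vec{q}\notin C$, so $V_{\mathrm{Lpv}}(\vec{q})>0$ by hypothesis~2, contradicting $V_{\mathrm{Lpv}}(\vec{x}(t_k))\to c^{\ast}=0$. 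Thus $d(\vec{x}(t;\vec{x}_0,t_0),C)\to 0$ for every $\vec{x}_0\in\Omega_r$, which together with the previous paragraph is exactly asymptotic stability of $C$.

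The one genuinely delicate point is the construction in the second paragraph: $V_{\mathrm{Lpv}}$ need not be proper, so its raw sublevel sets need not be bounded or contained in $N(C)$; intersecting with the fixed compact $\bar{B}_r$ and using compactness of the sphere $\{d(\vec{x},C)=r\}$ to extract the strictly positive barrier value $m_r$ is what makes the sublevel-set neighbourhoods well defined and positively invariant. Implicit in the hypotheses is also that $\vec{b}$ is defined and regular enough on $N(C)$ for uniqueness of integral curves, hence for invariance of $\omega$\nobreakdash-limit sets; in the intended application $C=\mathcal{E}_e$ and $N(C)$ is a thin tube about the Kepler ellipse avoiding both the origin and the singularity $\Sigma$, on which $\vec{b}$ is smooth. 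Everything else is the textbook argument.
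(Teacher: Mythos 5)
Your argument is correct. Note that the paper itself offers no proof of this theorem: it is quoted as a standard result of stability theory (essentially from the cited Bhatia--Szeg\H{o} monographs), so there is no in-paper argument to compare against; what you have written is a complete and accurate rendering of the classical direct method together with a LaSalle-type limit-set argument. Two small remarks. First, your only genuinely nontrivial step --- ruling out $c^{\ast}>0$ --- leans on the invariance of $\omega(\vec{x}_0)$, which in turn needs uniqueness and continuous dependence for the flow of $\vec{b}$ on $N(C)$; you flag this yourself, and it is harmless in the intended application, but it can be avoided entirely: if $c^{\ast}>0$ the forward trajectory is trapped in the compact set $K=\{\vec{x}\in\overline{\Omega_r}: V_{\mathrm{Lpv}}(\vec{x})\ge c^{\ast}\}$, which is disjoint from $C$, and since $\nabla V_{\mathrm{Lpv}}\cdot\vec{b}$ is continuous and strictly negative on $K$ it is bounded above there by some $-\delta<0$, whence $V_{\mathrm{Lpv}}(\vec{x}(t))\to-\infty$, a contradiction requiring only existence (not uniqueness) of solutions. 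Second, your care with the sublevel sets (intersecting with $\bar{B}_r$ and extracting the barrier value $m_r$ on the compact shell) is exactly the right fix for the fact that $V_{\mathrm{Lpv}}$ is not assumed proper; this is the step most often fudged in informal treatments, and you handle it correctly.
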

The function $V_{\mathrm{Lpv}}$ in Theorem \ref{lyapunov} is called a Lyapunov function.
\begin{theorem}\label{lyapunovthm}
The function,
\[V_{\mathrm{Lpv}} =\left(16/e^{2}\right)^{\lambda}-\left.\exp(2R_{\epsilon}) \right|_{\epsilon=1},\]
satisfies the conditions of Theorem \ref{lyapunov} for the compact set $C = \mathcal{E}_e$ and the neighbourhood $N(C) =\mathrm{int}( \mathcal{E}_{-e+\delta_1})\setminus\mathrm{int}(\mathcal{E}_{1-\delta_2})$ for any $\delta_1,\delta_2\in(0,1+e)$ such that $\delta_1+\delta_2<1+e$.
\end{theorem}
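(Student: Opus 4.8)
The plan is to verify the four conditions of Theorem \ref{lyapunov} one at a time for $V_{\mathrm{Lpv}} = (16/e^2)^\lambda - \exp(2R_\epsilon)|_{\epsilon=1}$ on the annular region $N(C) = \mathrm{int}(\mathcal{E}_{-e+\delta_1})\setminus\mathrm{int}(\mathcal{E}_{1-\delta_2})$, working throughout in the Keplerian elliptic coordinates $(u,v)$, where by the proof of Theorem \ref{invariantthm} we have the explicit closed form for $\exp(2R_\epsilon)$. In these coordinates $N(C)$ is simply the strip $1-\delta_2 \le u \le -e+\delta_1$ wait --- more precisely the strip between $u = 1-\delta_2$ and $u = -e+\delta_1$, with $v\in[0,2\pi)$, and $C=\mathcal{E}_e$ is the curve $u=e$. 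Condition (1), continuous differentiability, is immediate on this region since $\exp(2R_\epsilon)$ is a smooth function of $(u,v)$ away from the singularity $u=1$ and the point at infinity $u=-e$, both of which are excluded; one then checks the coordinate change back to $\vec{x}$ is a diffeomorphism on $N(C)$.

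For condition (2), positive-definiteness of $V_{\mathrm{Lpv}}$ on $N(C)\setminus C$, the key input is already established inside the proof of Theorem \ref{invariantthm}: the manifold $\mathcal{E}_e$ ($u=e$) is the \emph{unique} global maximum of $\exp(2R_\epsilon)$, with maximal value $(16/e^2)^\lambda$ (at $\epsilon=1$). Hence $V_{\mathrm{Lpv}} \ge 0$ everywhere with equality exactly on $u=e$, which gives both (2) and the first half of (4). For condition (3) I would compute $\nabla V_{\mathrm{Lpv}}\cdot\vec{b}$. Here the crucial structural fact is that $\exp(2R_\epsilon)$ is, up to normalization, the invariant density $\rho_\epsilon^\infty$ for the diffusion, which by Lemma \ref{transform} has $\tilde\psi_\epsilon = \exp(R_\epsilon - S_\epsilon)$ as ground state; one expects an identity expressing $\vec{b}\cdot\nabla\exp(2R_\epsilon)$ in a manageable form. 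Concretely, since $\vec{b} = \epsilon^2\nabla(R_\epsilon+S_\epsilon)$ and $\rho_\epsilon^\infty \propto \exp(2R_\epsilon)$ is stationary for $\mathcal{G}^*$, the flux $\vec{b}\,\rho_\epsilon^\infty - \tfrac12\epsilon^2\nabla\rho_\epsilon^\infty$ is divergence-free; on the deterministic system ($\epsilon=0$ in the drift but keeping $\exp(2R_1)$) one gets $\vec{b}\cdot\nabla\exp(2R_1) = -\exp(2R_1)\,\nabla\cdot\vec{b} + (\text{divergence of flux})$ --- I would instead just brute-force the derivative $\partial_u\exp(2R_1)$ and $\partial_v\exp(2R_1)$ from the formula in the proof of Theorem \ref{invariantthm}, contract with $(b_u,b_v)$ from Proposition \ref{itocorrection} (at $\epsilon=0$), and show the result is strictly negative for $u\neq e$ and vanishes at $u=e$, completing (3) and (4).

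I expect condition (3) to be the main obstacle. It reduces to showing that a specific explicit function of $(u,v)$ --- the contraction $\partial_u\exp(2R_1)\cdot b_u + \partial_v\exp(2R_1)\cdot b_v$ --- is negative definite on the whole strip $1-\delta_2 \le u \le -e+\delta_1$, $u\ne e$, for \emph{all} $e\in(0,1)$. Since $b_u(e,v)\equiv 0$ (established in the proof of Theorem \ref{kepler_motion}) and $\partial_u\exp(2R_1)|_{u=e}=\partial_v\exp(2R_1)|_{u=e}=0$ (from Theorem \ref{invariantthm}), the expression vanishes on $u=e$ as required; the work is to show it has a strict sign on either side. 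The natural strategy is to factor out $(u-e)$ (and note $\partial_u\exp(2R_1)$ carries a factor vanishing at $u=e$), reducing to a sign analysis of a lower-degree trigonometric-algebraic expression, which should be provably of one sign using that $\exp(2R_1)>0$, that $u<1$, and elementary bounds on $\cos v,\sin v$. A cleaner alternative, which I would pursue first, is to exploit the gradient structure: write $\exp(2R_1) = (16/e^2)^\lambda \exp(-2a\mu\, W(u,v)/\lambda)$ for an explicit ``potential'' $W\ge 0$ vanishing only on $u=e$, so that $\nabla V_{\mathrm{Lpv}}\cdot\vec{b} = \tfrac{2a\mu}{\lambda}\exp(2R_1)\,\vec{b}\cdot\nabla W$, and then show $\vec{b}\cdot\nabla W < 0$ off $\mathcal{E}_e$ --- this last inequality has a transparent geometric meaning (the drift points ``downhill'' on the invariant potential) and is likely to succumb to a short computation after substituting the $(u,v)$ forms of $b_u, b_v$.
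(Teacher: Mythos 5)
Your overall strategy is the same as the paper's: conditions (1), (2) and (4) are read off from Theorem \ref{invariantthm} (the circle $u=e$ is the unique global maximum manifold of $\exp(2R_{\epsilon})$, with value $(16/e^2)^{\lambda}$ at $\epsilon=1$, so $V_{\mathrm{Lpv}}\ge 0$ with equality, and with vanishing gradient, exactly on $\mathcal{E}_e$), while the exclusion of $\Sigma$ ($u=1$) and of the ellipse at infinity ($u=-e$) is precisely what the conditions $\delta_1,\delta_2>0$ are for. Two small points: since the semi-major axis $2ae/(e+u)$ is decreasing in $u$, the annulus $N(C)$ is $-e+\delta_1\le u\le 1-\delta_2$ rather than the reversed ordering you first wrote; and the stated hypotheses on $\delta_1,\delta_2$ do not by themselves guarantee $\mathcal{E}_e\subset N(C)$ (one also needs $\delta_1<2e$ and $\delta_2<1-e$), a defect of the statement rather than of your argument.

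The genuine gap is condition (3), which you correctly single out as the main obstacle but do not close. To be fair, neither does the paper: its entire justification for the negativity of $\nabla V_{\mathrm{Lpv}}\cdot\vec{b}$ on $N(C)\setminus C$ is the appeal to Theorem \ref{invariantthm} together with the numerical plot in Figure \ref{liap2}, so your proposal is no weaker than the published proof. But you should be aware that the sign claim is delicate rather than routine, and your ``downhill'' heuristic cannot be taken for granted everywhere in $N(C)$. By Lemma \ref{curlyf} the component $b_u$ changes sign across the curve $u=\mathcal{F}(v)$, and by Corollary \ref{crash} the portion of $\Sigma$ with $v\in(\pi/2,\pi)$ is attractive: for small $\delta_2$ the nonempty region $\mathcal{F}(v)<u<1-\delta_2$, $v\in(\pi/2,\pi)$, lies in $N(C)\setminus C$ and there the flow moves toward the singularity, i.e.\ uphill in $u$ for your potential $W$, so $\partial_u W\,b_u>0$ and any strict decrease of $V_{\mathrm{Lpv}}$ must be supplied by the $v$-component of the motion. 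Negative definiteness there is therefore not a consequence of the maximality of $\exp(2R_1)$ on $u=e$; it requires the explicit contraction $\partial_u\exp(2R_1)\,b_u+\partial_v\exp(2R_1)\,b_v$ to be computed and its sign controlled on the whole annulus (or $N(C)$ to be shrunk so as to exclude that region), and in particular one must rule out further critical points of $\exp(2R_1)$ in $N(C)\setminus\mathcal{E}_e$, at which $\nabla V_{\mathrm{Lpv}}\cdot\vec{b}$ would necessarily vanish. Until that computation is carried out, your proposal, like the paper's proof, establishes (1), (2) and (4) but only asserts (3).
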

\begin{proof}
This follows simply from Theorem \ref{invariantthm}. The function $V_{\mathrm{Lpv}}$ is shown in Figure \ref{liap1}. It is continuous but is not smooth across $\Sigma$. The derivative $\nabla V_{\mathrm{Lpv}}\cdot\vec{b}$ is shown in Figure \ref{liap2}. We exclude the ellipse at infinity $\mathcal{E}_{-e}$, as the derivative is zero in the limit.
\end{proof}

 \begin{figure}
\centering
\resizebox{120mm}{!}{\includegraphics{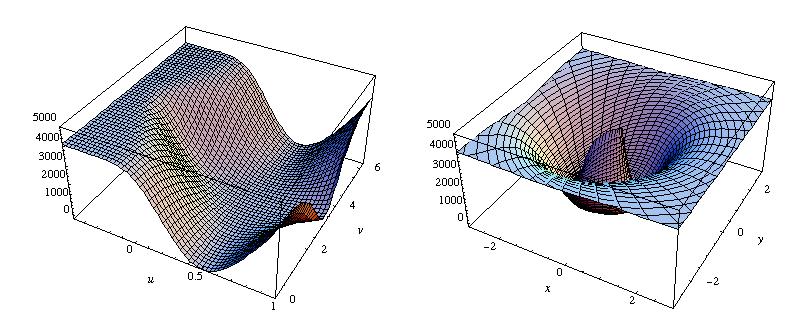}}
\caption{The Lyapunov function $V_{\mathrm{Lpv}}$ in $(u,v)$ and $(x,y)$ coordinates for $e=0.5$.}
\label{liap1}
\end{figure}
 \begin{figure}
\centering
\resizebox{120mm}{!}{\includegraphics{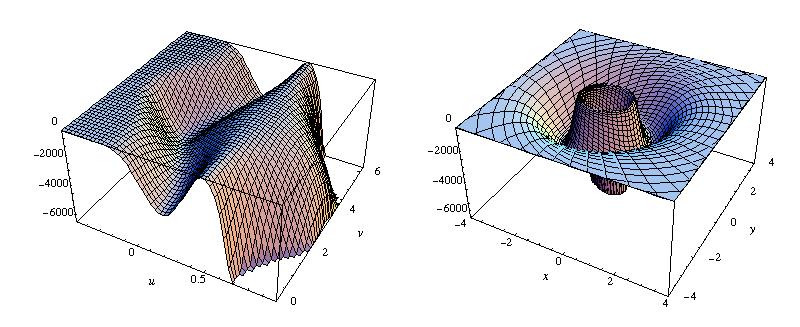}}
\caption{The derivative $\nabla_{\svec{x}}V_{\mathrm{Lpv}}\cdot\vec{b}(\vec{x})$ in $(u,v)$ and $(x,y)$ coordinates for $e=0.5$.}
\label{liap2}
\end{figure}

Asymptotic orbital stability tells us that not only does any trajectory sufficiently close to $\mathcal{E}_e$ converge in the large time limit to $\mathcal{E}_e$ but that in this large time limit any converging trajectory will move exactly as a trajectory on $\mathcal{E}_e$. Therefore, the motion of a particle which converges towards the ellipse $\mathcal{E}_e$ must converge to Keplerian motion on $\mathcal{E}_e$.

\subsection{Region of attraction to the Kepler ellipse}

We now look at which initial positions produce trajectories which are attracted to the Kepler ellipse in the infinite time limit. As in Section 6.2 we are particularly interested in the sign of $b_u$ as the Kepler ellipse is given by the coordinate curve $u=e$ and we know that $b_u=0$ on this curve.
\begin{theorem}\label{smalle}
Let $0<e<1/\sqrt{2}$. For any initial position $\vec{x}_0\in \mathrm{ext}(\mathcal{E}_{\tilde{e}})$ and any given $\delta>0$ there exists a finite time $T>0$ such that,
\[d(\vec{x}(t;\vec{x}_0,t_0),\mathcal{E}_e)<\delta,\]
for all times $t>T$.
\end{theorem}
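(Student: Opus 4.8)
The plan is to use the sign structure of $b_u$ established in Lemma \ref{curlyf} together with a trapping argument in the Keplerian elliptic coordinates $(u,v)$. The key observation is that for $0<e<1/\sqrt2$ we have $1>\tilde e>e$, so the region $\mathrm{ext}(\mathcal E_{\tilde e})$ (i.e. $u\in(-e,\tilde e)$ together with a suitable neighbourhood) lies strictly on the side of the Kepler ellipse away from the singularity, and crucially it contains no zero of $b_u$ other than the curve $u=e$ itself: from Lemma \ref{curlyf}, $b_u=0$ on $\mathrm{ext}(\mathcal E_e)$ happens only along $u=e$, since the other branch $u=\mathcal F(v)$ satisfies $\mathcal F(v)>\tilde e$ and so lives inside $\mathcal E_{\tilde e}$. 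Consequently, on the open region $\{\,e<u<\tilde e\,\}$ the drift $b_u$ has a constant sign, and on $\{-e<u<e\}$ it likewise has a constant sign; a direct inspection of the explicit formula for $b_u(u,v)$ from Proposition \ref{itocorrection} (with $\epsilon=0$) shows $b_u<0$ for $u\in(e,\tilde e)$ and $b_u>0$ for $u\in(-e,e)$. Hence $u=e$ attracts from both sides within this band.

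First I would set up the comparison: for a trajectory starting at $\vec x_0\in\mathrm{ext}(\mathcal E_{\tilde e})$ one already knows from Corollary 5 (the corollary following Lemma \ref{curlyf}) that the trajectory never reaches the singularity, so it remains in the region $-e<u<1$ for all $t>t_0$, and in fact the sign of $b_u$ forces $u(t)$ to stay in $(-e,\tilde e)$ for all later time — the boundary curves $u=e$ and (on the relevant side) $u=\tilde e$ cannot be crossed outward. Next I would show $u(t)\to e$. Because $b_v(e,v)=\sqrt{\mu/a^3}\,(1-e\cos v)^{-1}>0$ is bounded away from $0$ and $\infty$ on $u=e$, and by continuity $b_v$ stays positive and bounded in a neighbourhood of $u=e$, the angular coordinate $v(t)$ is monotone increasing and sweeps through full periods at a uniformly positive rate. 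I would then argue that $|u(t)-e|$ is monotone decreasing: $\frac{d}{dt}|u-e| = \mathrm{sgn}(u-e)\,b_u(u,v)\le 0$, with equality only on $u=e$. To upgrade monotonicity to convergence, average over one $v$-period: the quantity $W(t)=|u(t)-e|$ decreases by a definite amount over each interval on which $v$ advances by $2\pi$, unless $u$ is already within $\delta$ of $e$, because $\int_0^{2\pi} |b_u(u,v)|\,dv$ is bounded below by a positive continuous function of $\mathrm{dist}(u,e)$ that vanishes only at $u=e$. Since $v$ completes infinitely many periods in infinite time (again using the uniform lower bound on $b_v$), $W(t)$ must drop below any prescribed level in finite time, and once below it stays below by monotonicity. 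Translating $|u(t)-e|<\delta'$ back to $d(\vec x(t),\mathcal E_e)<\delta$ is a routine estimate using the smoothness of the coordinate map \eqref{coords} on the compact closure of the band.

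A cleaner alternative, which I would present in parallel or as the main line, is to invoke the Lyapunov function of Theorem \ref{lyapunovthm}: $V_{\mathrm{Lpv}}=(16/e^2)^{\lambda}-\exp(2R_\epsilon)|_{\epsilon=1}$ is positive definite on $N(C)=\mathrm{int}(\mathcal E_{-e+\delta_1})\setminus\mathrm{int}(\mathcal E_{1-\delta_2})$ with $\nabla V_{\mathrm{Lpv}}\cdot\vec b$ negative definite off $\mathcal E_e$. For $\vec x_0\in\mathrm{ext}(\mathcal E_{\tilde e})$ the whole forward orbit lies in a compact sublevel set of $V_{\mathrm{Lpv}}$ contained in $N(C)$ (here one uses that the orbit cannot escape toward $\Sigma$, by the corollary to Lemma \ref{curlyf}, and cannot escape to the ellipse at infinity since $V_{\mathrm{Lpv}}$ is proper toward $\mathcal E_{-e}$ after the $\delta_1$-truncation), so LaSalle's invariance principle gives $\vec x(t)\to\mathcal E_e$, which is exactly the assertion in the form "for every $\delta$ there is $T$".

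\textbf{Main obstacle.} The delicate point is the uniformity needed to turn "$b_u$ has the right sign" into "the orbit is confined to the band $(-e,\tilde e)$ and converges in \emph{finite} time to within $\delta$". One must rule out the orbit lingering near the degenerate boundary behaviour — in particular near the endpoints of $\Sigma$ where $\nabla\cdot\vec b$ blows up — and confirm that $b_v$ does not degenerate to zero anywhere in the relevant closed region, so that $v$ genuinely runs through infinitely many periods; this is where the hypothesis $e<1/\sqrt2$ (forcing $\tilde e>e$ and hence a genuinely one-signed band with no spurious equilibria) is essential. Handling the region \emph{inside} $\mathcal E_e$, i.e.\ $-e<u<e$, is exactly the case the authors flag as harder and defer, so I would restrict, as the statement does, to $\mathrm{ext}(\mathcal E_{\tilde e})$ and not attempt it here.
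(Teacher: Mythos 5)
Your main argument is essentially the paper's own proof: the paper likewise observes that for $e<1/\sqrt{2}$ the curve $u=\mathcal{F}(v)$ (lying in $\mathcal{F}(v)>\tilde{e}>e$) never meets $u=e$, so that $b_u>0$ for $-e<u<e$ and $b_u<0$ for $e<u<\mathcal{F}(v)$, and then concludes from the non-vanishing of $\vec{b}^2$ that trajectories cannot stall away from $u=e$ --- your trapping-plus-averaging step simply supplies the detail behind the paper's terse ``the result follows.'' Your Lyapunov/LaSalle alternative is sound but is the route the paper reserves for asymptotic orbital stability (Theorem \ref{lyapunovthm}) rather than for this theorem.
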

\begin{proof}
For the curve $u=\mathcal{F}(v)$  to cut the Kepler ellipse $u=e$ we would require $e<\tilde{e}$ or equivalently $e<1/\sqrt{2}$.
Therefore, if $0<e<1/\sqrt{2}$ then (as shown in Figure \ref{bu_sign_pic}):
\begin{enumerate}
\item for $-e<u<e$, $b_u >0$,
\item for $ e<u<\mathcal{F}(v)$, $b_u<0$.
\end{enumerate}
Since $\vec{b}^2>0$ for all $u\in(-e,1)$ and $v\in[0,2\pi)$ the result follows.
\end{proof}

From the proof of Theorem \ref{smalle} we see that for $0<e<1/\sqrt{2}$, the Kepler ellipse is a stable orbit in the $(u,v)$ frame as the curve $u=\mathcal{F}(v)$ does not intersect the Kepler ellipse. However, when $e=1/\sqrt{2}$ the curve $u=\mathcal{F}(v)$ touches the Kepler ellipse and for $e>1/\sqrt{2}$ it intersects the Kepler ellipse. Thus, there is a portion of the Kepler ellipse which is unstable in this frame. Despite this we cannot conclude that the Kepler ellipse will be unstable in the cartesian frame as the properties of stability and instability are not necessarily coordinate invariant. As we showed in Corollary \ref{orbital_stab}, the Kepler ellipse  is actually stable. As can be seen in Figure \ref{det_trajs}, simulations of the dynamical system starting outside the Kepler ellipse always converge to the ellipse. The region bounded by the curve $u=\mathcal{F}(v)$ is where this convergence is particularly slow. It would seem sensible to investigate whether $e=1/\sqrt{2}$ is a critical value of the eccentricity. It could be that this apparent instability is a property of the $(u,v)$ coordinate system. As we shall see in the next section this is not the case!

If we assume $e<1/\sqrt{2}$ then we can avoid these problems and combining  our results we can conclude:

\begin{theorem}\label{lessthanroot2}
For $0<e<\frac{1}{\sqrt{2}}$, any orbit of our Keplerian dynamical system with a start point outside the ellipse $\mathcal{E}_{\tilde{e}}$ settles down in the infinite time limit to Keplerian motion on the Kepler ellipse $\mathcal{E}_e$.
\end{theorem}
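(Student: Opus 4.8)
The plan is to assemble the results already established in this section; Theorem~\ref{lessthanroot2} is essentially a packaging of Theorem~\ref{smalle} together with the asymptotic orbital stability of $\mathcal{E}_e$. First I would record the geometric set-up. Since $0<e<1/\sqrt2$ we have $e<\tilde e<1$, so $\mathrm{ext}(\mathcal{E}_e)\subseteq\mathrm{ext}(\mathcal{E}_{\tilde e})=\{u<\tilde e\}$, and by the corollary immediately following Corollary~\ref{crash} a trajectory started at $\vec x_0\in\mathrm{ext}(\mathcal{E}_{\tilde e})$ never reaches the singularity $\Sigma=\mathcal{E}_1$. Moreover the sign table for $b_u$ from the proof of Theorem~\ref{smalle} ($b_u>0$ on $\{-e<u<e\}$, $b_u<0$ on $\{e<u<\mathcal{F}(v)\}$, with $\tilde e<\mathcal{F}(v)$) shows that $u(t)$ moves monotonically towards the value $e$ and that the whole trajectory stays in the strip $\{-e<u<\tilde e\}$, on which $\vec b$ is smooth and $\vec b^2>0$. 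Hence the flow is a genuine dynamical system there and no artificial boundary condition is needed.

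Next I would invoke Theorem~\ref{smalle} directly: for every $\delta>0$ there is a finite $T(\delta)$ with $d(\vec x(t;\vec x_0,t_0),\mathcal{E}_e)<\delta$ for all $t>T(\delta)$, so the orbit enters every tubular neighbourhood of the periodic orbit $\mathcal{E}_e$. Now fix a small $\delta_1>0$, let $\delta_3=\delta_3(\delta_1)>0$ be the radius supplied by the asymptotic orbital stability of $\mathcal{E}_e$ (Corollary~\ref{orbital_stab}), and apply Theorem~\ref{smalle} with $\delta=\delta_3$ to obtain a time $T$ at which $\vec x(T;\vec x_0,t_0)$ lies within $\delta_3$ of a point of $\mathcal{E}_e$ that generates the periodic Keplerian solution of Theorem~\ref{kepler_motion}. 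Restarting the flow at time $T$, asymptotic orbital stability then gives $d(\vec x(t;\vec x_0,t_0),\mathcal{E}_e)\to0$ as $t\to\infty$. To upgrade this geometric convergence into convergence of the \emph{motion}, I would pass to the standard moving frame in a tubular neighbourhood of the cycle: coordinatise the trajectory by a phase $v$ along $\mathcal{E}_e$ and a transverse displacement $w$, note $w(t)\to0$, and observe that the induced phase equation converges to the phase equation on $\mathcal{E}_e$, which by Theorem~\ref{kepler_motion} is exactly $\dot v=b_v(e,v)=\sqrt{\mu/a^3}\,(1-e\cos v)^{-1}$, i.e. Kepler's law with force constant $\mu$ and energy $E=-\mu/2a$, read off from $x=a\cos v-ae$, $y=a\sqrt{1-e^2}\sin v$. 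This last point is precisely the content of the remark following the Lyapunov function discussion.

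The hard part will be making rigorous the passage from orbital convergence to dynamical convergence, i.e. showing that the phase lag settles to a constant so that in the limit the particle not only lies on $\mathcal{E}_e$ but sweeps it at the Keplerian rate. The clean route is a Floquet/phase-reduction argument near the periodic orbit, where the negativity of the transverse Floquet exponent is exactly what the trace integral equal to $-2\pi$, proved just before Corollary~\ref{orbital_stab}, encodes; this yields a local smooth conjugacy of the flow near $\mathcal{E}_e$ to the product of the flow on $\mathcal{E}_e$ with a linear contraction, from which the phase component converges and Kepler's law follows. A secondary, routine point is to check that for all large $t$ the trajectory stays inside the neighbourhood on which this normal form is valid, which is immediate from Theorem~\ref{smalle}; everything else is bookkeeping with results already in hand.
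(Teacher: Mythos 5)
Your proposal is correct and is essentially the paper's own proof: the paper disposes of Theorem~\ref{lessthanroot2} in one line by combining Theorem~\ref{smalle} (entry into any $\delta$-neighbourhood of $\mathcal{E}_e$ in finite time) with the asymptotic orbital stability of Corollary~\ref{orbital_stab} and the Keplerian character of the orbit from Theorem~\ref{kepler_motion}. Your additional Floquet/phase-reduction discussion merely spells out what the paper takes as the content of asymptotic orbital stability (namely that a converging trajectory ultimately moves as a trajectory on $\mathcal{E}_e$), so there is no substantive difference in route.
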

\begin{proof}
This follows from asymptotic stability combined with Theorems \ref{kepler_motion} and \ref{smalle}.
The region of convergence is shown in Figure \ref{two_thms} (a).
\end{proof}

 We now consider ways to extend this domain of attraction to include eccentricities $e>1/\sqrt{2}$.
 Arguing exactly as above we have the following result which restricts the location of the trajectory after a finite time to a large neighbourhood of the Kepler ellipse.

\begin{theorem}\label{morethanroot2}
Let $1>e>1/\sqrt{2}$. For any initial position $\vec{x}_0\in \mathrm{ext}(\mathcal{E}_{e})$ there exists a finite time $T>0$ such that,
\[\vec{x}(t;\vec{x}_0,t_0)\in \mathrm{int}(\mathcal{E}_{\tilde{e}})\cap\mathrm{ext}(\mathcal{E}_e),\]
for all times $t>T$.
\end{theorem}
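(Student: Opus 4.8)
The plan is to argue throughout in the Keplerian elliptic coordinates $(u,v)$ of Proposition \ref{itocorrection} with $\epsilon=0$, in which the Kepler ellipse is the curve $u=e$, the singularity $\Sigma$ is $u=1$, and — since a smaller value of $u$ corresponds to a larger ellipse and $\tilde{e}<e$ when $e>1/\sqrt{2}$ — the target set $\mathrm{int}(\mathcal{E}_{\tilde{e}})\cap\mathrm{ext}(\mathcal{E}_e)$ is exactly the annulus $\{\tilde{e}<u<e\}$. It therefore suffices to show that if $u(t_0)\in(-e,e)$ there is a finite $T$ with $u(t)\in(\tilde{e},e)$ for all $t>T$. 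Two observations underpin the argument. First, by the proof of Theorem \ref{kepler_motion} we have $b_u(e,\cdot)\equiv0$, and $\vec{b}$ is real analytic near the curve $u=e$ (which is disjoint from both $\Sigma$ and the origin); hence the circle $u=e$ is invariant and, by uniqueness, no trajectory starting with $u<e$ can reach it. Thus $u(t)<e$ for all $t\ge t_0$, the trajectory stays in $\mathrm{ext}(\mathcal{E}_e)$ where $\vec{b}$ is smooth, the flow is complete, and $\Sigma$ is never reached. Second, by Lemma \ref{curlyf} the only zeros of $b_u$ in the strip $-e<u<1$ lie on $u=e$ and on the arc $u=\mathcal{F}(v)$, $v\in(\pi/2,\pi)$; since $\mathcal{F}(v)\ge\tilde{e}$ with equality only at $v=3\pi/4$, and $\tilde{e}<e$, the connected open set $\{-e<u<\tilde{e}\}$ contains no zero of $b_u$, and evaluating the expression of Proposition \ref{itocorrection} as $u\downarrow-e$ shows $b_u>0$ there; hence $b_u>0$ on all of $\{-e<u<\tilde{e}\}$.

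To reach $\{u>\tilde{e}\}$ in finite time (if $u(t_0)\ge\tilde{e}$ one skips directly to the last step) I would argue on the compact cylinder $K=\{u(t_0)\le u\le\tilde{e}\}$, which is bounded away from $u=-e$ since $u(t_0)>-e$. On $K$ we have $b_u\ge0$, vanishing only at the single point $(\tilde{e},3\pi/4)$ where $\mathcal{E}_{\tilde{e}}$ is tangent to the arc $u=\mathcal{F}(v)$, and $\vec{b}\ne0$ throughout $K$ by equation (\ref{speed}); in particular $b_v(\tilde{e},3\pi/4)\ne0$. While the trajectory lies in $K$ its $u$-coordinate is nondecreasing, so it cannot exit $K$ through $u=u(t_0)$; and if it remained in $K$ for all time then $u(t)$ would converge and $b_u(\vec{x}(t))\to0$, forcing $\vec{x}(t)\to(\tilde{e},3\pi/4)$, which is incompatible with $b_v(\tilde{e},3\pi/4)\ne0$. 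Hence the trajectory reaches $u=\tilde{e}$ at a finite time; if this happens at $v\ne3\pi/4$ then $\dot{u}>0$ and it crosses transversally into $\{u>\tilde{e}\}$, while if it happens at $v=3\pi/4$ a short local computation with the formulas of Proposition \ref{itocorrection} (there $b_u$ vanishes to second order in $v-3\pi/4$ whereas $b_v\ne0$, so along the unique trajectory through that point $u-\tilde{e}$ behaves like a positive multiple of $(t-T)^3$) again shows it enters $\{u>\tilde{e}\}$.

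It remains to see that once $u(t)>\tilde{e}$ it stays there. A later return to $u=\tilde{e}$ would require $\dot{u}\le0$ at the contact, hence — since $b_u\ge0$ on $u=\tilde{e}$ — it would require $\dot{u}=0$ and $v=3\pi/4$; by uniqueness the trajectory would then coincide with the tangent trajectory through $(\tilde{e},3\pi/4)$, which has $u<\tilde{e}$ just before that instant, a contradiction. Combined with the first observation this gives $\vec{x}(t)\in\mathrm{int}(\mathcal{E}_{\tilde{e}})\cap\mathrm{ext}(\mathcal{E}_e)$ for all $t>T$. I expect the only genuine difficulty to be the local analysis at $(\tilde{e},3\pi/4)$ — the degenerate zero of $b_u$ on $\mathcal{E}_{\tilde{e}}$, which is precisely the contact point responsible for the mild instability occurring for $e>1/\sqrt{2}$; the rest is invariance by uniqueness together with a routine compactness argument applied to the explicit drift of Proposition \ref{itocorrection}, following the pattern already used for Theorem \ref{smalle} and Corollary \ref{crash}.
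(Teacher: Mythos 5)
Your argument is correct and takes essentially the same route as the paper: the paper's proof is a one-line appeal to the sign analysis of $b_u$ in Keplerian elliptic coordinates from Lemma \ref{curlyf} and Theorem \ref{smalle}, noting that for $e>1/\sqrt{2}$ the curve $u=\mathcal{F}(v)$ dips only down to $u=\tilde{e}<e$, which is exactly your mechanism. You supply considerably more detail than the paper does — notably the finite-time exit from the compact band $\{u\le\tilde{e}\}$ and the cubic tangency analysis at $(\tilde{e},3\pi/4)$, which the paper passes over in silence — but the underlying idea is identical.
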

\begin{proof}
Follows as for Theorem \ref{smalle} but now $\tilde{e} >e$ so that $\mathcal{E}_e\subset\mathcal{E}_{\tilde{e}}$. This region is shown in Figure \ref{two_thms}(b).
\end{proof}
 \begin{figure}
\centering
\resizebox{120mm}{!}{\includegraphics{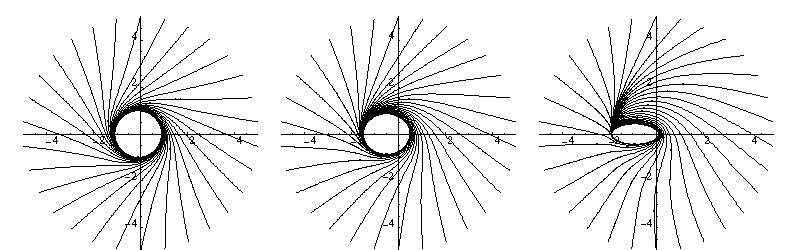}}
\caption{Simulations of the Keplerian dynamical system with $e=0.1$, $0.5$ and $0.9$.}
\label{det_trajs}
\end{figure}

 \begin{figure}
\centering
\setlength{\unitlength}{1cm}
\begin{picture}(12,5)
\put(0,0){\resizebox{120mm}{!}{\includegraphics{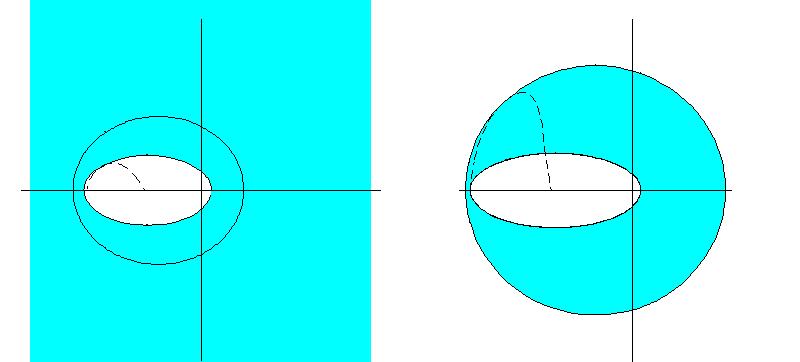}}}
\put(0.4,0.4){(a)}
\put(1.8,4){$\mathcal{E}_e$}
\put(2.3,2.35){$\mathcal{E}_{\tilde{e}}$}
\put(6.4,0.4){(b)}
\put(9,4.8){$\mathcal{E}_{\tilde{e}}$}
\put(8.7,2.35){$\mathcal{E}_{{e}}$}
\end{picture}
\caption{Theorems \ref{lessthanroot2} and \ref{morethanroot2}: (a) The region of attraction (shaded) for $\mathcal{E}_e$ with $e<1/\sqrt{2}$.
(b) For $e>1/\sqrt{2}$ any trajectory starting outside the Kepler ellipse must converge to somewhere in the shaded region.}
\label{two_thms}
\end{figure}

This still leaves us to prove that trajectories in this elliptic annulus converge to the Kepler ellipse. This can be done using the Lyapunov function in Theorem \ref{lyapunovthm} and results of \cite{MR0219843}.

 \subsection{Surprising symmetries and mild instabilities}
 The appearance of $e=1/\sqrt{2}$ as an important eccentricity  could easily be dismissed as a property of the Keplerian elliptic coordinate system. Numerical experiments confirm that although $b_u$ does switch sign in a neighbourhood of the Kepler ellipse for $e>1/\sqrt{2}$, the  Cartesian dynamical system still always moves towards the Kepler ellipse in this region. This discrepancy can be explained easily by considering the bunching of level surfaces of $u$ which occurs in  the Keplerian elliptic coordinates (see Figure \ref{coordpic}) and also the slow rate of convergence in this region (see Figure \ref{det_trajs}), particularly for large eccentricities.
It is therefore surprising to encounter this critical eccentricity through other calculations.

Recall that if $u\neq e$ then,
\[b_u=0\quad\Leftrightarrow\quad u= \frac{\mathcal{F}_{(e,1)}(v)}{\mathcal{F}_{(-1,-e)}(v)},\quad v\in(\pi/2,\pi),\]
where $\mathcal{F}_{(e_1,e_2)}(v)$ is as defined in equation (\ref{ffunction}).

If we consider the divergence of the drift field calculated in Cartesians but presented in Keplerian elliptic coordinates for simplicity (see equation (\ref{divb})) then we find,
\[\nabla\cdot\vec{b}=0\quad\Leftrightarrow\quad u = \frac{\mathcal{F}_{(e,1)}(-v)}{\mathcal{F}_{(-1,-e)}(-v)},\quad v\in(\pi,3\pi/2).\]
Therefore,  the curve $\nabla\cdot\vec{b}=0$ is the reflection in the $y$ axis of the curve $b_u =0$. Moreover, this leads to the conclusion that for $e<1/\sqrt{2}$, the curve $\nabla\cdot\vec{b}$ does not cut the Kepler ellipse, and so on all parts of the ellipse $\nabla\cdot\vec{b}<0$. However, for $e>1/\sqrt{2}$ there is a portion of the ellipse where $\nabla\cdot\vec{b}>0$ suggesting a local mild  instability in this region.

These instabilities are not just properties of the two dimensional restriction. Consider the full three dimensional system for small values of $|z|$. The drift in the $z$ direction is given by,
\[b_z = -\frac{\mu}{2\lambda}\left(\alpha+\beta+1\right)\frac{z}{|\vec{x}|}.\]
Thus, if $\alpha+\beta+1>0$, then any trajectory will  be attracted to the $z=0$ plane where it will remain stable.
Let us consider the behaviour of $\alpha+\beta+1$ in the plane $z=0$. Working from equations (\ref{alphabetauv}) it is simple to show,
\[\alpha+\beta+1=0\quad\Leftrightarrow\quad u=\frac{\mathcal{F}_{(e,-1)}(-v)}{\mathcal{F}_{(-1,e)}(-v)},\quad v\in(0,\pi/2),\]
which is a curve in the same family again. It again meets the Kepler ellipse when $e=1/\sqrt{2}$. Therefore, when $e>1/\sqrt{2}$, there will be a region of the Kepler ellipse where the orbit is unstable in the $z$ direction as within this curve $\alpha+\beta+1<0$ which will mean $b_z>0$ for small $z>0$ and $b_z<0$ for small $z<0$. These three curves are shown together in Figure \ref{symmetrypic}.

 \begin{figure}
\centering
\resizebox{120mm}{!}{\includegraphics{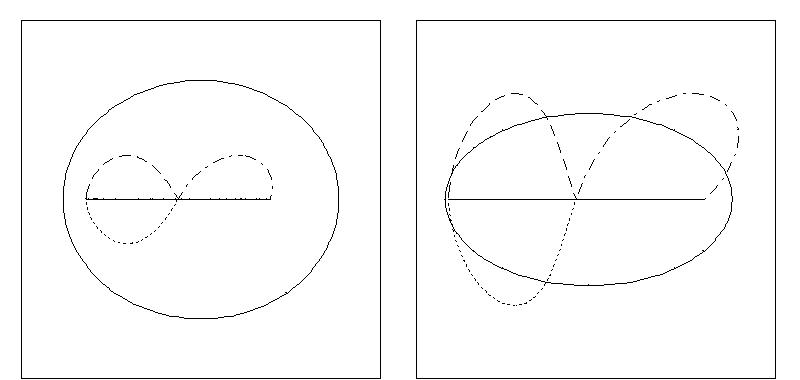}}
\caption{The curves $b_u=0$ (long dash), $\nabla\cdot\vec{b}=0$ (dotted) and $\alpha+\beta+1=0$ (dash-dot line) with the Kepler ellipse and singularity for $e=0.5$ and  $0.8$.}
\label{symmetrypic}
\end{figure}

In fact the curve $\alpha+\beta+1=0$ is a slice of a three dimensional surface  which, together with the singularity, bounds a region of space where $b_z$ will be directed away from the plane $z=0$. This creates a general instability in orbits for  $e>1/\sqrt{2}$ as this region is cut by the Kepler ellipse. The effect of this in the deterministic case is shown in Figure \ref{zbump}. There is a clear blip in the value of $z$ where the particle moves away from the plane $z=0$. This blip is repeated each time the particle passes through this region, but each time the blip becomes smaller as the particle moves closer to the plane $z=0$ overall. In the infinite time limit the particle still appears to converge to the plane $z=0$. However, in the stochastic case ($\epsilon\neq 0$) the noise creates a displacement such that the particle never appears to settle down into a small neighbourhood of the  plane $z=0$ (see Figure \ref{zbump2}). This is caused by the presence of the noise, but is apparently exacerbated by these repeated blips in $b_z$, which unlike the deterministic case do not decrease in magnitude in the infinite time limit (see Figure \ref{zbump3}). This effect becomes more pronounced as $e \rightarrow 1$ since the trajectories pass closer to the origin where $\bm{b} \sim \pm r^{-1/2}$. The periodic spikes in $b_z$ shown in Figure \ref{zbump3} are the result of this effect and cause the particle to be strongly forced into the plane $z=0$ when at the perihelion. The blip in $b_z$ occurs immediatly after the particle has passed the perihilion forcing it away from the plane $z=0$ again.

 \begin{figure}
\centering
\resizebox{130mm}{!}{\includegraphics{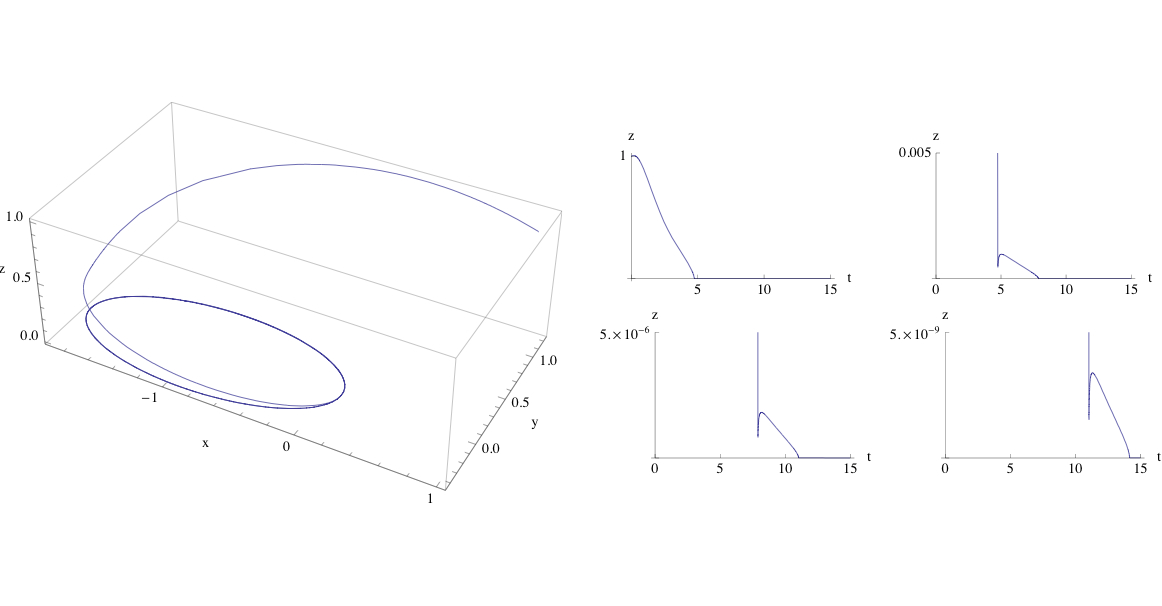}}
\caption{Simulation for $\epsilon=0$ and $e=0.9$ highlighting the periodic blips in $z$.}
\label{zbump}
\end{figure}

 \begin{figure}
\centering
\resizebox{130mm}{!}{\includegraphics{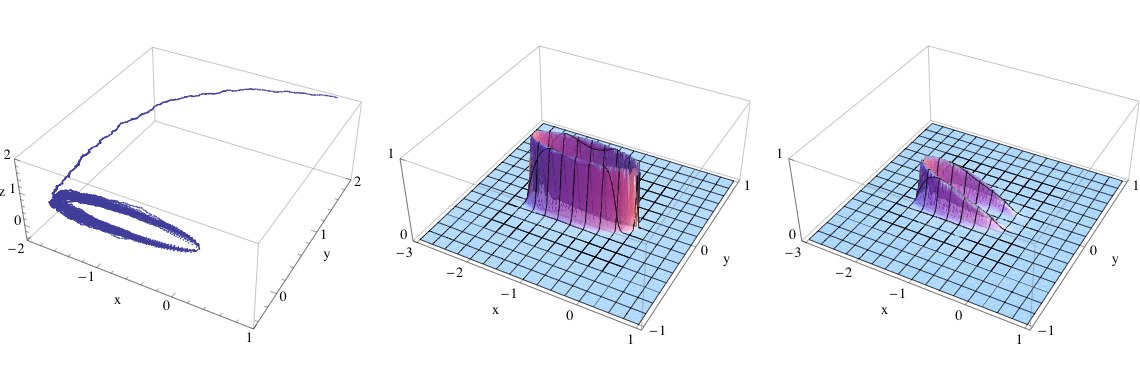}}
\caption{Simulation for $\epsilon=0.05$ and $e=0.99$ with the invariant measure.}
\label{zbump2}
\end{figure}

 \begin{figure}
\centering
\resizebox{130mm}{!}{\includegraphics{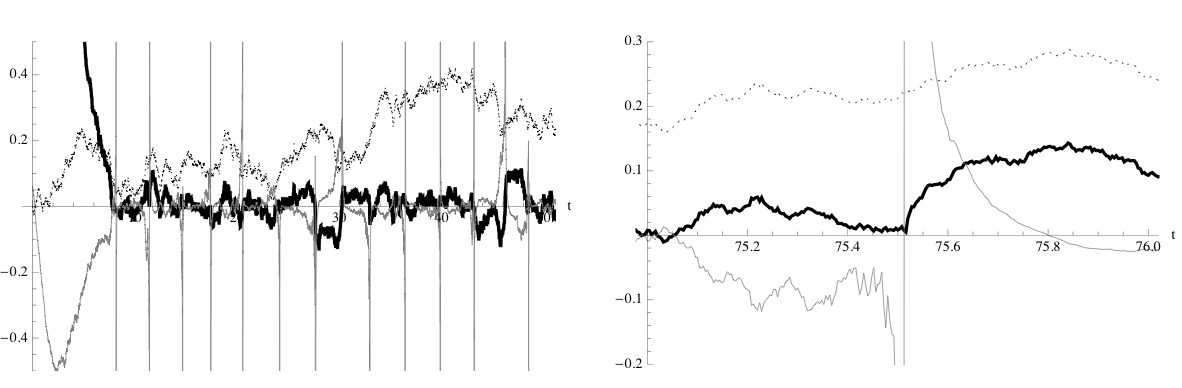}}
\caption{The $z$ value (thick) with velocity $b_z$ (thin) and driving noise $B_3$(dashed) for $\epsilon=0.05$ and $e=0.99$ with a close up of a blip.}
\label{zbump3}
\end{figure}

\section{Conclusions}
We have shown that in the Bohr correspondence limit in two dimensions the Nelson diffusion process corresponding to the atomic elliptic state for the Coulomb problem reduces to Keplerian motion on the Kepler ellipse for motions starting outside this ellipse. This solves a long standing problem in quantum mechanics.
In the quantum mechanical setting, Kepler's laws of planetary motion need to be augmented with a caveat about the mild local instabilities appearing on the elliptical orbit for eccentricities greater than $1/\sqrt{2}$ which may be experimentally detectable.
In any case these results merit further study in the setting of planetesimal diffusions.

We are currently investigating the three dimensional problem which appears to be far more difficult to analyse in the absence of any convenient coordinate system.

\section*{Acknowledgements}
It is a pleasure for AN to thank the Welsh Institute for Mathematical and Computational Sciences (WIMCS) for their financial support in this research. 
We would also like to thank Prof. F-Y Wang of WIMCS for helpful conversations concerning parts of this work. Finally, AT would like to thank the Centro di Ricerca Matematica Ennio De Giorgi in Pisa for their hospitality during a research visit in July 2006 where some of the ideas in this work were developed.

\bibliographystyle{plain}
\def\cprime{$'$}

\end{document}